\newtheorem{Lemma}{Lemma}
\newtheorem{Remark}{Remark}
\newtheorem{Definition}{Definition}
\newtheorem{Example}{Example}
\title{\Huge \bf {Diseases on complex networks. Modeling from a database and a protection strategy proposal.}\footnote{Work supported and funded by ANID COVID0739 project.}}
\author{Ronald Manr\'iquez}
\affil{Laboratorio de Investigaci\'on Lab[e]saM,\\ Departamento de Matem\'atica y Estad\'istica, \\ Universidad de Playa Ancha, Valpara\'iso, Chile. \\ ronald.manriquez@upla.cl}
\author{Camilo Guerrero-Nancuante}
\affil{Escuela de Enfermer\'ia,\\ Universidad de Valpara\'iso, Vi\~{n}a del Mar, Chile. \\ camilo.guerrero@uv.cl}
\date{}
\begin{document}
\maketitle

\dominitoc
\tableofcontents

\chapter*{Introduction}

Infectious diseases have been the object of study throughout the history of mankind. Multiple disciplines have contributed to the understanding of these health phenomena, in particular the sources and types of infections, as well as the negative consequences on the population.

From an epidemiological and health perspective, humanity has experienced a series of infectious disease events, including Cholera, Malaria and AIDS \cite{1}. Infectious diseases have an epidemic potential due to the dissemination of microorganisms, generally viruses, that develop in a host and later seek another living being to continue with their survival process \cite{2} \cite{3}.  Therefore, the spread of this type of disease occurs through contact between living beings, humans or animals, which present significant loads of pathogenic microorganisms. Consequently, when massive infections occur, we are facing an epidemic outbreak. The concept of an epidemic is established when the infectious outbreak affects a specific geographic area and a pandemic is related to an event spread over extensive continental areas \cite{4}.

An example of the above is the current COVID-19 pandemic context, the study of the spread of diseases being of interest \cite{bhapkar_virus_2020} and \cite{CROCCOLO2020110077}. The beginning of the pandemic was registered in the city of Wuhan-China \cite{article}. The consequences of the COVID-19 pandemic have been evidenced in a series of dimensions, including the collapse of health systems in some countries, the stoppage of production, the impoverishment of communities, unemployment, among other social and economical consequences \cite{8}.

In this sense, the current and historical contributions of mathematical models are important. The compartmental models are useful to establish in a simple way the projections and evolution of infectious diseases. They are characterized by compartmentalizing the population depending on whether the disease generates immunity or not \cite{3}.
One of the classic compartmental mathematical models is SIR, developed by Kermack and Mc Kendrick in 1927 for the understanding of epidemics \cite{Kermack_1927} and the current use of computational simulations, are of great relevance to analyze the behavior, in this case, of SARS-CoV2 (see for instance \cite{guerrero-manriquez_proyeccion_2020}).
The SIR model compartmentalizes or divides the population into Susceptible (S), Infected (I) and Removed (R). This compartmentalization allows to analyze the population with these states and is useful to determine projections in relation to the total number of patients and the duration of the disease \cite{11}. The SIR model approach is eminently deterministic, however, it has also been used from a stochastic perspective, improving the representation of the dynamics of infectious diseases through the probability of the appearance of epidemic outbreaks \cite{11}. There are other mathematical models in epidemiology that have been developed from the SIR model, adding variables such as exposure and the effect of quarantine measures such as the SEIR and SEQIJR model respectively \cite{12}. With this, nations and governments can count on information to establish mitigation measures for the consequences of the virus, such as: safeguarding employment, strengthening health system responses, developing community actions, among other measures.

However, these models are limited when the extension and heterogeneity of the data is wide, so they fail to detect changes in the population structure and the variation in contact dynamics over time \cite{12}.

On the other hand, globalization and high population concentration have led to the inclusion of other ways of representing the spread of infectious diseases. Models with stochastic approximations have the advantage of establishing probabilities of person-to-person contact \cite{3}. One of them is the network model, which is based on the theory of graphs studied from the observations of Leonhard Euler with the problem of the seven bridges. The model proposes the formation of individuals (nodes) and their relationship with others (edge), so the result is a network \cite{12} (see for instance figure \ref{fig_000}).

\begin{figure}[hbt!]
\centering
{\includegraphics[width=0.5\textwidth]{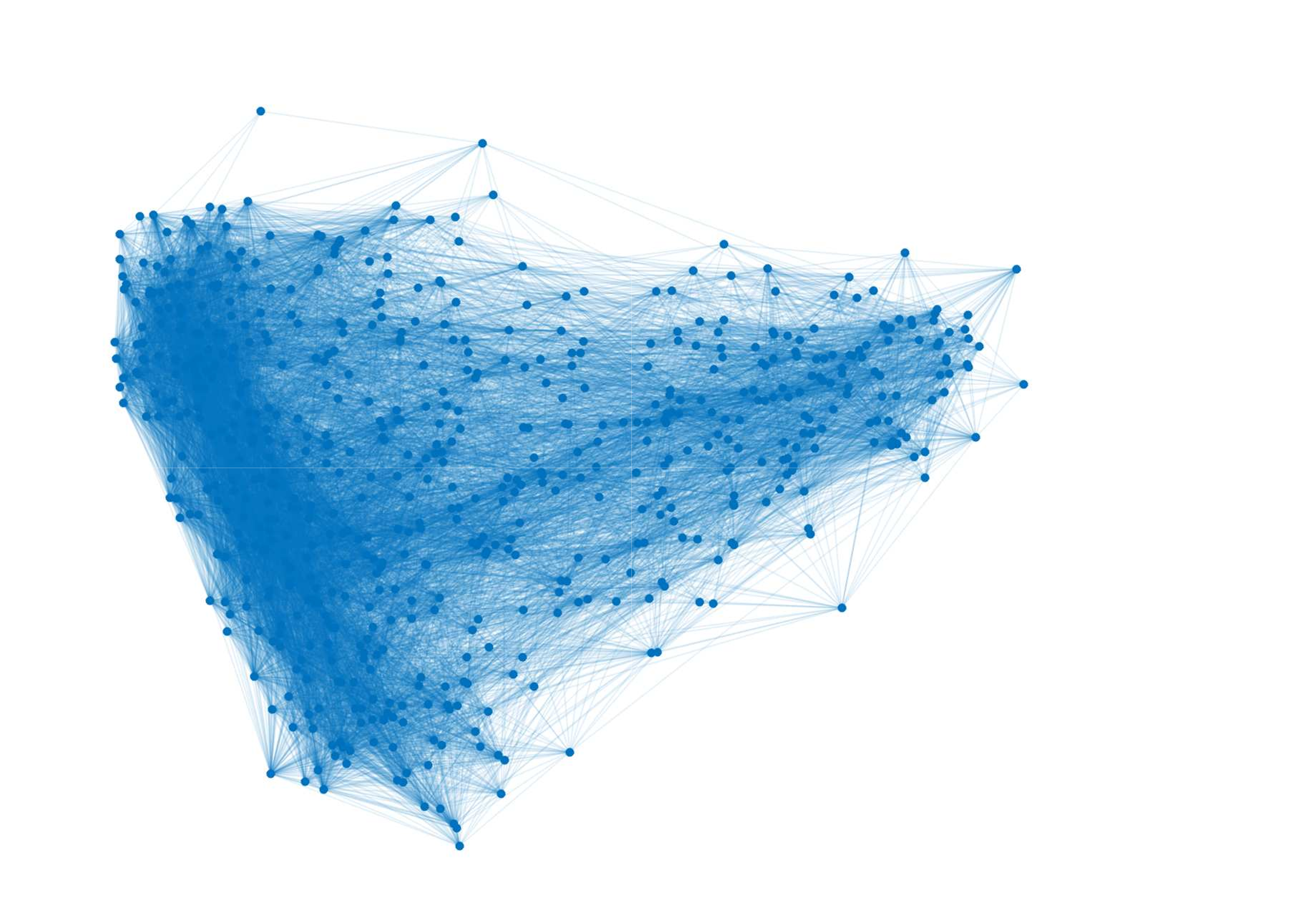}}
\caption{Network model of brain functional coactivations}\label{fig_000}
\end{figure}


Networks study, in the form of mathematical graph theory, is one of the most important objets from discrete mathematics \cite{newman2003structure}. The importance of its various applications in issues of traffic, economy, brain networks, spread of diseases, social networks, electrical power grid, etc. has made this field of research get the attention of many researchers since the last 20 years (see for instance \cite{almasi2019measuring},  \cite{an2014synchronization}, \cite{Crossley11583}, \cite{guangzeng2009novel}, \cite{montenegro2019linear} and \cite{pastor-satorras_epidemic_2001}).


In a complex network there are more important vertices than others that influence or have a greater impact on their structure or function. This importance establishes some vertices ranking that can be based on different viewpoints, this is to say, we can have an importance based on the neighborhood, on the paths, on the link, connectivity or sensibility (more details in \cite{lu2016vital}). For example, the degree of a vertex, allows ordering or making a vertices ranking according to the number of neighbors it has. In general, we have the well known centrality measures that give us a vertices importance ranking. However, identifying vital nodes is not an easy job because the criteria for vital nodes are diverse due to their context-dependent nature, and consequently it is not possible to find a universal index that better quantifies the importance of nodes in all situations. Even harder, if you want to find a set of important nodes, it is a NP-Hard problem (see \cite{1318579}). 

On the other hand, modeling the spread of disease over networks has many applications in real life. For~example, rumors or spam that spread on large social networks such as Facebook or Twitter~\cite{guess2019less}, viruses that spread through computer systems~\cite{yang2012towards}, or~diseases that spread over a population such as it does the actual SARS-CoV-2 (see for instance~\cite{ronald1}). This leads to solve the problem of totally or partially controlling these spreads.

Scientific evidence supports immunization strategies being useful in both homogeneous and heterogeneous networks \cite{nian2018immunization}. The strategies seek, first, to develop a dynamic experimental model that includes the classic compartmental systems in epidemiology (SIR, SEIR, SIS, among others) and then to establish immunization measures across the nodes. Immunization measures can be applied in static and dynamic networks, in a random or targeted immunization \cite{gomez2006immunization}.

Random immunization refers to random strategies, without determining a particular population in the protection process \cite{cohen2003efficient}. In contrast, targeted immunization recognizes nodes with a higher degree of connection with other nodes \cite{xia2018improved}. Nian et al. conclude, through computational simulations in a free scale graph (Model BA), that targeted immunization is more effective than randomized \cite{nian2018immunization}. In the same direction, Wang et al. conclude that, even if the immunization strategy is imperfect or incomplete, it manages to generate positive impacts on the protection of the network \cite{wang2009imperfect}. Another investigation by Xia et al. determined that targeted immunization in two rounds of selection provides a greater protective effect compared to progressive strategies \cite{xia2018improved}.

According to the above, a study conducted by Ghalmane et al. proposed an immunization strategy considering the influence of the nodes, the number of communities and the links between them \cite{ghalmane2019immunization}. Along the same lines, Gupta et al. analyzed the importance of protection strategies using information from community networks  \cite{gupta}. Since edge-weighted graphs have been an important way of understanding epidemic diseases, Manr\'iquez et al. measure the effectiveness of DIL-W$^{\alpha}$ with the recognition of bridge nodes. The effectiveness of DIL-W$^{\alpha}$ is high compared to other proposals on four real networks~\cite{ronald2}.

For all the above, studies that include immunization measures together with community/local network models are highly relevant for public health, both for establishing measures to mitigate epidemic diseases and for the development of vaccination optimization policies to more quickly achieve herd immunization and, therefore, overcome diseases such as COVID-19. This is supported by Zhao et al., who argue that this framework allows for the optimization of immunization resources \cite{zhao2014immunization}.

Regarding the COVID-19 pandemic, a series of models have been developed which allow the projection and establishment of the progress of this infectious disease based on the data available from different information sources (see, for instance, \cite{guerrero-manriquez_proyeccion_2020}). In this sense, Manríquez et al. propose the use of weighted graphs at the edges, giving the network model, from a stochastic approach, the possibility of identifying the most important variables in the spread of COVID-19 with real data in a city in Chile \cite{ronald1}. The same authors, in order to classify the importance of the nodes, propose the generalization of the measure of importance of the line with the use of degree of centrality (DIL-W$^{\alpha}$), improving the understanding of the network from a local perspective \cite{ronaldranking}.\\

The objective of this report is to summarize the research carried out by the authors in the context of the COVID0739 project funded by supported and funded by Agencia Nacional de Investigación y Desarrollo (ANID) based on the works \cite{ronald1}, \cite{ronaldranking}, \cite{ronald2}, and \cite{ronald3}.

The report is organized as follows. Chapter \ref{basic} contains generalities about graph theory and SIR model. 

Chapter \ref{modeldes} is divided in three parts. First, we give the construction of the graph from a database. In the second part, we describe the Graph-based SIR model. In Section \ref{determi} we give a deterministic approach to the Graph-based SIR model. At the end of the chapter, we provide a discussion.\\

In Chapter \ref{chap2} the generalization of the importance vertex measure, denoted by DIL-W$^{\alpha}$, is given. In Section \ref{Comparison}, we compare our proposal DIL-W$^{\alpha}$ with the one given by Almasi \& Hu (denoted by DIL-W) considering $\alpha=1$ and $\alpha=0.5$ on some real networks (Zacharys karate club network, US air transport network, meta-analysis network of human whole-brain functional coactivations, colony of ants network, and wild birds network). Moreover, we analyze the DIL-W$^{\alpha}$ rank with different values of $\alpha$, and provide a discussion about the results. Finally, in Section \ref{conclusionchap2}, we give the final conclusions.

In Chapter \ref{pro}, we provide definitions of the protection of a graph when a disease spreads on it. Moreover, we state the protection strategy. Section~\ref{simulation} is devoted to run simulations on test networks (Zacharys karate club network, Wild birds network, Sandy authors network and CAG-mat72 network) and scale-free networks and it also provides a discussion about the results. Moreover, we address the variation in the survival rate when the protection budget is modified. Finally, we give the conclusions in Section~\ref{conclusionpro}.

Finally, the Chapter \ref{covid} aims to analyze the protection effect against COVID-19 using the DIL-W$^{\alpha}$ ranking with real data from a city in Chile (Olmu\'e-City), obtained from the Epidemiological Surveillance System of the Ministry of Health of Chile, from which we obtain an edge-weighted graph, denoted by $G_{\mathcal{E}}$, according to the method proposed in \cite{ronald1}. We apply the protection to the $G_{\mathcal{E}}$ network, according to the importance ranking list produced by DIL-W$^{\alpha}$. In Section \ref{meth}, we obtain the graph from a real database from a city in Chile (Olmu\'e-City), and we set the protection strategy. In Section \ref{results}, the results of the study are presented. Section \ref{discussion} provides a discussion of the results and potentialities of the method used. Finally, Section \ref{conclusioncovid} provides the conclusions.

\chapter{Basic definitions}\label{basic}
\minitoc
In this section, we establish the definitions and elements used throughout this report.
We summarize the symbols and notations in Table~\ref{simbol}.
{\begin{table}[h!]
\caption{{Summary of Symbols and Notations.}}
\begin{tabular}{cl}
\toprule
{\textbf{Notations}}   & {\textbf{Definition and Description}      }                                   \\ \midrule
$G$                  & Graph or network.                                                            \\
$(G,w)$		&Edge-weighted graph.                                                   \\
$v_i$                & Vertex or node.                                                             \\
$N(v_i)$             & Neighborhood of a vertex $v$.                                               \\
$e_{ij}$             & Edge between vertex $v_i$ and vertex $v_j$.                                 \\
$w_{ij}$             & Weight of the edge $e_{ij}$.                                                \\
$deg(v_i)$           & Degree of the vertex $v_i$.                                                 \\
$S(v_i)$             & Strength of the vertex $v_i$.                                               \\
$\alpha$             & Real number. Tuning parameter.                                              \\
$C_D^{w\alpha}(v_i)$ & Degree centrality of $v_i\in V$ of an edge-weighted graph $(G,w)$.\\
DIL-W$^{\alpha}$     & Ranking based on Degree and importance of line.                              \\
$I^{\alpha}(e_{ij})$ & Importance of edge $e_{ij}$.                                                \\
$W^{\alpha}(e_{ij})$ & Contribution that $v_i$ makes to the importance of the edge $e_{ij}$.       \\
$L^{\alpha}(v_i)$    & The importance of a vertex $v_i$.                                           \\
$\mathcal{E}$        & Database.                                                                   \\
$\mathcal{X}_k$      & Variable of a database.                                                     \\
$p_k$                & Weight of the variable $\mathcal{X}_k$.                                     \\
$k$                  & Protection budget (the number of nodes in graph $G$ that can be protected). \\
$\sigma$             & Ratio of surviving nodes.             \\ \bottomrule
\end{tabular}
\label{simbol}
\end{table}}

\subsection{Graphs}
The following definitions come from \cite{chartrand_graphs_1996} and \cite{west_introduction_2001}.
\begin{Definition}
A graph $G$ is a finite nonempty set $V$ of objects called vertices together with a possibly empty set $E$ of 2-element subsets of $V$ called edges.
\end{Definition}

To indicate that a graph $G$ has vertex set $V$ and edge set $E$, we write $G=(V,E)$. If~the set of vertices is $V=\{v_1,v_2,\ldots,v_n\}$, then the edge between vertex $v_i$ and vertex $v_j$ is denoted by $e_{ij}$.

If $e_{ij}$ is an edge of $G$, then $v_i$ and $v_j$ are adjacent vertices. Two adjacent vertices are referred to as neighbors of each other. The~set of neighbors of a vertex $v$ is called the open neighborhood of $v$ (or simply the neighborhood of $v$) and is denoted by $N(v)$. If~$e_{ij}$ and $e_{jk}$ are distinct edges in $G$, then $e_{ij}$ and $e_{jk}$ are adjacent edges.

\begin{Definition}
The number of vertices in a graph $G$ is the order of $G$ and the number of edges is the size of $G$.
\end{Definition}

\begin{Definition}
The degree of a vertex $v$ in a graph $G$, denote by $deg(v)$, is the number of vertices in $G$ that are adjacent to $v$. Thus, the degree of $v$ is the number of vertices in its neighborhood $N(v)$.
\end{Definition}

\begin{Definition}
Let $G$ be a graph of order $n$, where $V(G)=\{v_1,v_2,\ldots, v_n\}$. The adjacency matrix of $G$ is the $n\times n$ zero-one matrix $A(G)=[a_{ij}]$, or simply $A = [a_{ij}]$, where
$$a_{ij}=\begin{cases}
1 & \text{if}\;\; e_{ij}\in E(G)\\
0 & \text{if}\;\; e_{ij}\notin E(G).
\end{cases}$$
\end{Definition}

On the other hand, an important generalization of the simple graph consists of the definition of weighted graph, more specifically edge-weighted graph. Informally, an edge-weighted graph is a graph whose edges have been assigned a weight. 

\begin{Definition}
An edge-weighted graph is a pair $(G,W)$ where $G=(V,E)$ is a graph and $W: E\rightarrow \mathbb{R}$ is a weight function. If $e_{ij}\in E$ then $W(e_{ij})=w_{ij}$.
\end{Definition}

\begin{Definition}
The strength of a vertex $v_i$, denoted by $S(v_i)$, is defined as the sum of the weights of all edges incident to it, this is to say
$$S(v_i)=\sum_{v_j\in N(v_i)}w_{ij}.$$
\end{Definition}

The following definition comes from \cite{opsahl2010node}. 

\begin{Definition}[Degree centrality \cite{opsahl2010node}]\label{dcent}
The degree centrality of $v_i\in V$ of an edge-weighted graph $(G,w)$, denoted by $C_D^{w\alpha}(v_i)$, is defined as
\begin{align}
C_D^{w\alpha}(v_i)&=deg(v_i)^{(1-\alpha)}\cdot S(v_i)^{\alpha},
\end{align}
where $\alpha\in[0,1]$. 
\end{Definition}

The parameter $\alpha$ is called {\it tuning parameter}. Notice that when $\alpha=0$ then $C_D^{w\alpha}(v_i)=deg(v_i)$ and when $\alpha=1$ then $C_D^{w\alpha}(v_i)=S(v_i)$. 

\subsection{SIR model}
In the entire spectrum of epidemiological models that currently exist, the SIR model is the basis or the simplest of all these.\\
The classical Kermack-McKendrick SIR model \cite{Kermack_1927}, developed in the early 1900s (see \cite{anderson_discussion_1991, ross_sir_2013}), consists of a system of non-linear ordinary differential equations, which expresses the spread among the population of a constant size, denoted by $N$, for all time $t$. The population is divided into three groups: susceptible individuals, infected individuals, and recovered (or removed) individuals. The sizes of these groups at time $t$ are denoted by $S(t)$, $I(t)$, and $R(t)$ respectively such that $N=S(t)+I(t)+R(t)$. The model is the following,
\begin{align}
\dot{S}(t)&=-\beta S(t) I(t)\\
\dot{I}(t)&=\beta S(t) I(t)-\delta I(t)\\
\dot{R}(t)&=-\delta I(t),
\end{align}
$t\in [0, T ]$, subject to the initial conditions $S(0)=S_0$, $I(0)=I_0$ and $R(0)=R_0$.
where the disease transmission rate $\beta>0$ and the recovery rate $\delta>0$ (the duration of infection $\delta^{-1}$).

In summary, the above system describes the relationship between the three groups, this is to say, a susceptible individual changes its state to infected with probability $\beta$, while an infected changes its state to recovered with probability $\delta$.

\chapter{Obtaining a graph from a database}\label{modeldes}
\minitoc

\section{Introduction}
The understanding of  infectious diseases is a priority in the field of public health. This has generated the inclusion of several disciplines and tools that allow to analyze the dissemination of infectious diseases. The aim of this chapter is  to model the spreading of a disease in a population that is registered in a database. From this database, we obtain an edge-weighted graph. The spreading was modeled with the classic SIR model. The model proposed with edge-weighted graph allows to identify the most important variables in the dissemination of epidemics. \\

To build a network model, the variables that are relevant to the spread of a disease are established. Among the multiple models developed, sociocentric studies stand out, which allow a broad exploration of the complete network that is generated to understand the spread of a pathogen. Therefore, network models are useful to understand the development of different infectious diseases \cite{cardinal-fernandez_medicina_2014}. It is not new, especially if the spread of an epidemic disease in network structures is studied, which in an abstract way is the main object of graph theory, see for instance in \cite{RIZZO2016212} where the authors use the model as a predictive tool, to emulate the dynamics of Ebola virus disease in Liberia, and  in \cite{shafer} where the transmission connectivity networks of people infected with highly contagious Middle East respiratory syndrome coronavirus (MERS-CoV) in Saudi Arabia were assessed to identify super-spreading events among the infected patients between 2012 and 2016. The relevance of these studies is related to the possibility of preventing nodes (people) from continuing to infect, an issue that is treated in \cite{wijayanto2019effective} with the graph protection methods proposed by Wijayanto \& Murata. Deepening in this line of studies, there have been included new mathematical models with variables related to the behavior of people associated with information and emotions during epidemics \cite{17}. Likewise, dynamic models have covered the influence of the infectious disease itself on the network of contacts and, therefore, changes in the dynamics of spread of the epidemic over time \cite{12}.

In general, network models can be analyzed through static graphics such as snapshots. However, for an adequate approximation and correcting the loss of data generated by the snapshot, data modeling techniques must be used including the weighting of the edges and, consequently, better estimated, given the information obtained from the relationships between individuals and the spreading of the disease \cite{12}.

In the current context, organized information has an important value for the management of epidemics. The databases elaborated from the information of individuals can contribute in the characterization and knowledge of a determined area, which are important to know the evolution of the diseases \cite{margevicius_advancing_2014}.

In the case of infectious pathologies, depending on the type of database, it is possible to determine through the variables whether two or more individuals are linked to each other, such as people who live in the same neighborhood or work in the same place. Given the characteristics of network models and the obtaining of information through complex databases, it is relevant to use these models, in particular, with approximations that incorporate weighting on the edges.

For all the above, there are challenges around the possibility of representing and understanding the evolution of infectious diseases through a network model using databases. The relevance of this type of research contribution is based on the possibility of having tools that favor measures to prevent the spread of this type of disease, an issue that takes on greater social and scientific value due to the context of the SARS-CoV2 pandemic.
In consequence, the aim of this chapter is to develop a spreading stochastic model of some disease from a database, particularly using variables that link individuals in a given territory, the probability of contagion among them, and therefore, the spread of the disease through edge-weighted graphs (or edge-weighted networks). For the purposes of this manuscript, a database is understood as a matrix whose columns are the variables, while the rows correspond to the responses of the subjects in relation to the variables consulted.


\section{Graph from a database}
The authors in $\cite{ronald1} $ provide a way to obtain an edge-weighted graph from a database that we briefly detail.
This section is divided into two parts. A part dedicated to build a graph from a database and the second one to describe the dynamics of the disease on the graph-based SIR model. \\
To begin, we need some basic elements to understand what follows. First, we will understand by {\it variable} the characteristic assigned to a person from a predetermined set of values which can be a numerical measure, a category or a classification. For instance, income, age, weight, occupation, address, etc. Second, we will understand by {\it database} a matrix whose columns are variables, while the rows correspond to the responses of the subjects in relation to the variables consulted.\\

Let us consider a database, denoted by $\mathcal{D}$, that stores information on $N$ individuals of a population. 
Let $V$ be the set of the persons registered in the database, equivalent $$V=\left\{v_1,v_2,\ldots,v_N\right\},$$
where $v_i$ is a person registered in $\mathcal{D}$ for $i=1,\ldots,N$.\\
Let $v_i$ be a person registered in $\mathcal{D}$ for $i=1,\ldots ,N$. We set $$EPI=\left\{\mathcal{X}_1,\mathcal{X}_2,\ldots,\mathcal{X}_K\right\},$$ where $K$ is the number of elements of the set $EPI$. ($K$ is the number of variables in $\mathcal{D}$), $\mathcal{X}_k$ is a variable in $\mathcal{D}$ for $k=1,\ldots, K$, and $EPI(i,k)$ is the response of the person $v_i$ to the variable $\mathcal{X}_k$. 

As we want to study the link between the people who are registered in $\mathcal{D}$ through the variables of this database, we must identify which are the variables that allow us to establish these links that promote the spread of the disease. For example, if two people are the same age, they do not necessarily meet and spread the virus unlike two people who live in the same city.

 \begin{Definition}\label{reldefinition}
We will say that $\mathcal{X}\in EPI$ is a relationship variable if and only if it allows us to assume that some person meets another. In another case, we will say that $\mathcal{X}$ is a characteristic variable. 
\end{Definition}

The above allows us to define the following sets:
$$REL=\left\{\mathcal{X}\in EPI : \mathcal{X}\text{ is a relationship variable}\right\}$$
$$CHAR=\left\{\mathcal{X}\in EPI : \mathcal{X}\text{ is a characteristic variable}\right\}.$$
Let us denote by $K_1$ and $K_2$ the cardinality of $REL$ and $CHAR$ respectively. Notice that $K_1+K_2=K$.

\begin{Definition}\label{graph}
We will say that a person $v_i$ is related to a person $v_j$ if and only if there exists $\mathcal{X}_k \in REL$ for $k\in\{1,2,\ldots,K_1\}$ such that $EP I(i, k) = EP I(j, k)$ and $i\neq j$
\end{Definition}
It is clear that if $v_i$ is related to $v_j$ then $v_j$ is related to $v_i$, this is to say the relation is a symmetric relation.

The previous definition allows us to construct a graph $G$ of links given by the relationship variables of the $\mathcal{D}$. $G$ will be considered as an undirected graph without loops or multiple edges.

On the other hand, it is possible that $k\in\left\{1,\ldots,K_1\right\} $ is not unique because more than one variable may coincide. This induces us to define the weight of the link between $v_i$ and $v_j$.
\subsection{Weighting variables}
In order to define the weight of each link between two vertices, we assume that each $\mathcal{X}\in REL$ has an associated inherent weight, this is to say, it is possible to discriminate some hierarchical order between the variables. Let $p_k$ be the weight associated to the variable $\mathcal{X}_k\in REL$ for $k=1,\ldots, K_1$. 

On the other hand, to better understand the definition that follows and its consequences, suppose that $X$ is a set of $100$ people. If we define the relationship in the set: {\it person Q is related to person W if and only if they are the same age}, then we could group the people in the set by age. In addition, an interesting fact is that thanks to this relationship everyone would be part of a group and no one could be in two groups at the same time. This type of relations defined on a set are called {\it equivalence relations} and each group that is defined is called an {\it equivalence class}.
 \begin{Definition}
We will say that for $\mathcal{X}_{j},\mathcal{X}_{t}\in REL$, $\mathcal{X}_{j}$ is related to $\mathcal{X}_{t}$, denoted by $\mathcal{X}_{j} \mathcal{R}\mathcal{X}_{t}$, if and only if $p_{j}=p_{t}$.
\end{Definition}

\begin{Lemma}
The relation $\mathcal{R}$ defined on $REL$ is an equivalence relation.
\end{Lemma}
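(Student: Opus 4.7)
The plan is to verify the three defining properties of an equivalence relation directly: reflexivity, symmetry, and transitivity. Since $\mathcal{R}$ is defined by equality of the associated weights $p_j$ on $REL$, each property for $\mathcal{R}$ will reduce to the corresponding property for standard equality on $\mathbb{R}$ (since the weights $p_k$ are real numbers).

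First, I would establish reflexivity. For any $\mathcal{X}_j \in REL$ with associated weight $p_j$, the equality $p_j = p_j$ holds trivially, so by definition $\mathcal{X}_j \mathcal{R} \mathcal{X}_j$. Second, symmetry: if $\mathcal{X}_j \mathcal{R} \mathcal{X}_t$ then $p_j = p_t$, which by symmetry of equality gives $p_t = p_j$, hence $\mathcal{X}_t \mathcal{R} \mathcal{X}_j$. Third, transitivity: if $\mathcal{X}_j \mathcal{R} \mathcal{X}_t$ and $\mathcal{X}_t \mathcal{R} \mathcal{X}_s$, then $p_j = p_t$ and $p_t = p_s$; composing these equalities yields $p_j = p_s$, so $\mathcal{X}_j \mathcal{R} \mathcal{X}_s$.

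There is essentially no obstacle here: the entire argument is an inheritance of the equivalence‑relation structure of equality in $\mathbb{R}$ through the weight map $\mathcal{X}_k \mapsto p_k$. In fact, a one‑line version of the proof would simply remark that $\mathcal{R}$ is the pullback of the equality relation on $\mathbb{R}$ along the weight assignment, and any such pullback of an equivalence relation is itself an equivalence relation. I would still write out the three checks explicitly for readability, since the lemma is used immediately afterward to justify the well‑definedness of equivalence classes of relationship variables (which in turn supports the weighting construction developed earlier in the section).
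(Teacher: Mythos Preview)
Your proof is correct and takes essentially the same approach as the paper, which simply writes ``Directly.'' You have just spelled out the three routine checks (reflexivity, symmetry, transitivity) that the paper leaves implicit.
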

\begin{proof}
Directly.
\end{proof}

Thanks to the relation $\mathcal{R}$, we can consider the different equivalence classes which are composed of the variables that have the same weight. Hence, we have the same number of classes as different weights.

 \begin{Definition}\label{weightvariable}
Let $\displaystyle A_1, A_2,\ldots, A_c $ be the different classes that are defined by the different weights $\displaystyle p _1, p _2, \ldots, p _c $ and $\displaystyle\alpha_1, \alpha_2,\ldots, \alpha_c $ its respective cardinalities. Hence,

\begin{equation}
p _j=\frac{\alpha_j}{K_1},
\end{equation}
for all $j\in\{1,2,\ldots,c\}$.
\end{Definition}

\subsection{Weighted link}
The aim in this subsection is to introduce the definition of weight link.\\
Let $v_i, v_j\in V$ such that $v_i$ is related to $v_j$. We set 
\begin{align}
H&=\left\{k\in\left\{1,\ldots,K_1\right\}:EPI(i,k)=EPI(j,k)\right\}.
\end{align}
We denoted by $h_{i,j}$ the cardinality of the set $H$. Notice that $h_{i,j}$ is simply the number of times that one person is related to another (or the number of variables that matches between them). Since our proposal considers undirected graph, we have that $h_{i,j}=h_{j,i}$.

 \begin{Definition}\label{pesosbordes1}
Let $v_i, v_j\in V$ be such that $v_i$ is related to $v_j$ and $p_{k_r}$ the weight of the variable in which $v_i$ and $v_j$ match, for $r=1,\ldots,h_{i,j}$. We will say that
\begin{equation}\label{pesosbordes}
\widetilde{w}_{ij}=\sum_{r=1}^{h_{i,j}}p_{k_r},
\end{equation}
is the weight of the link between $v_i$ and $v_j$.
\end{Definition}

Finally, the weighted adjacency matrix of $G$ is the $n\times n$ matrix $A(G)=[a_{ij}]$, where
$$a_{ij}=\begin{cases}
\widetilde{w}_{ij} & \text{if}\;\; e_{ij}\in E(G)\\
0 & \text{if}\;\; e_{ij}\notin E(G).
\end{cases}$$

The idea of having a graph with weights in its edges is to be able to differentiate or measure, in some way, the strength or closeness between individuals. For example, it is not the same saying that two individuals share the same city than saying they share the same house they live in, it follows that the latter makes the relationship closer and consequently the contagion of the disease is intuitively more likely.\\

\begin{Example}
{In the following example, Table \ref{table1} simulates a database with 20 registered people. The data hosted correspond to the city in which they live (City), the workplace (considering school and university as a workplace), gender (Gen.), age, extracurricular activity (EC activity), address, whether they drink alcohol (Drin.), whether they are smokers (Sm.) and marital status (MS).
Let us consider $A$ and $B$ as two different cities, and $x, y, z, w, u, v, r, s, q, t, p, k, d, g$ and $h$ as different people's addresses. Moreover, in the table, Y = Yes, N = No, IC = in couple, M = married, S = single, W = widower. }

\begin{table}[h]
\small
\caption{Database $\mathcal{E}$.}
\label{table1}
\begin{tabular}{cccccccccc}
\toprule
\textbf{Person} & \textbf{City} & \textbf{Workplace}  & \textbf{E. C. Activity}& \textbf{Address} & \textbf{Sm.} & \textbf{Dri.} & \textbf{Gen.} & \textbf{M. S}& \textbf{Age} \\ \midrule
1  & A & Workplace 1  & Theater       & y & Y & Y & F & IC & 35 \\
2  & A & Workplace 3  & Cinema        & y & Y & Y & M & IC & 35 \\
3  & B & School B     & Football      & z & N & N & F & S  & 10 \\
4  & B & Workplace 1  & Photography   & x & N & N & F & M  & 48 \\
5  & A & Workplace 5  & Does not have & u & Y & N & F & W  & 65 \\
6  & A & Workplace 4  & Does not have & v & Y & Y & M & IC & 27 \\
7  & B & Workplace 2  & Does not have & x & Y & N & M & M  & 46 \\
8  & A & University 1 & Photography   & v & N & N & M & IC & 29 \\
9  & A & University 2 & Does not have & w & Y & Y & M & IC & 19 \\
10 & B & School B     & Karate        & x & N & N & M & S  & 10 \\
11 & A & Workplace 4  & Ping-pong     & r & Y & Y & F & M  & 54 \\
12 & A & School A     & Football      & s & N & N & M & S  & 8  \\
13 & A & Workplace 5  & Dance         & r & Y & Y & F & M  & 57 \\
14 & B & School A     & Handball      & q & N & N & M & S  & 11 \\
15 & A & University 1 & Does not have & t & N & N & F & S  & 25 \\
16 & A & Workplace 7  & Singing       & p & Y & Y & F & S  & 60 \\
17 & A & Workplace 8  & Music         & k & N & Y & F & S  & 28 \\
18 & A & Workplace 3  & Does not have & d & N & N & M & S  & 47 \\
19 & B & School A     & Music         & g & N & N & F & S  & 8  \\
20 & A & Workplace 6  & Does not have & h & Y & Y & M & S  & 30 \\ \bottomrule
\end{tabular}
\end{table}

{From Table \ref{table1}, we have that $EPI=\{\mathcal{X}_1, \mathcal{X}_2, \mathcal{X}_3, \mathcal{X}_4, \mathcal{X}_5, \mathcal{X}_6, \mathcal{X}_7, \mathcal{X}_8, \mathcal{X}_9\}$, where $\mathcal{X}_1=$ City, $\mathcal{X}_2=$ Workplace, $\mathcal{X}_3=$ E.P. activity, $\mathcal{X}_4=$ Address, $\mathcal{X}_5=$ Sm., $\mathcal{X}_6=$ Dri., $\mathcal{X}_7=$ Gen., $\mathcal{X}_8=$ M.S. and $\mathcal{X}_9=$ Age. Then, we obtain the sets:
\begin{enumerate}
\item $\displaystyle REL=\{\mathcal{X}_1, \mathcal{X}_2, \mathcal{X}_3, \mathcal{X}_4\}$ and
\item $\displaystyle CHAR=\{\mathcal{X}_5, \mathcal{X}_6, \mathcal{X}_7, \mathcal{X}_8, \mathcal{X}_9 \}$.
\end{enumerate}

In our criteria, the hierarchical order of the variables $\mathcal{X}_1, \mathcal{X}_2, \mathcal{X}_3, \mathcal{X}_4$ in descending form is $\mathcal{X}_4, \mathcal{X}_2, \mathcal{X}_3$, and $\mathcal{X}_1$. Moreover, we consider that the variables $\mathcal{X}_4$ and  $\mathcal{X}_2$ have the same weight. Hence,
$A_1=\{\mathcal{X}_2, \mathcal{X}_4\}$, $A_2=\{\mathcal{X}_3\}$, and $A_3=\{\mathcal{X}_1\}$ are the different classes that are defined by the different weights. Hence, by Definition \ref{weightvariable}
$$p_1=\frac{1}{2},\quad p_2=\frac{1}{4},\quad p_3=\frac{1}{4}.$$

To construct the graph, we must resort to Definition \ref{graph}. For instance, person $17$ is related to all the people who live in city $ A $ or who work at Workplace 8 or who have music as an extra curricular activity or whose address is $k$. With respect to the weights of the edges, Equation 6 in Definition \ref{pesosbordes1} gives us the answer. For instance, person $6$ matches person $11$ in the answers of the variables $\mathcal{X}_1$ and $\mathcal{X}_2$, this is to say, both people live in city $A$ and have the same workplace. Then, the edge $v_6v_{11}$ has weight $w_{6\;11}=0.5+0.25=0.75$. Figure \ref{graphex1} shows the obtained graph.}

\begin{figure}[h!]
\centering
\includegraphics[width=0.5\textwidth]{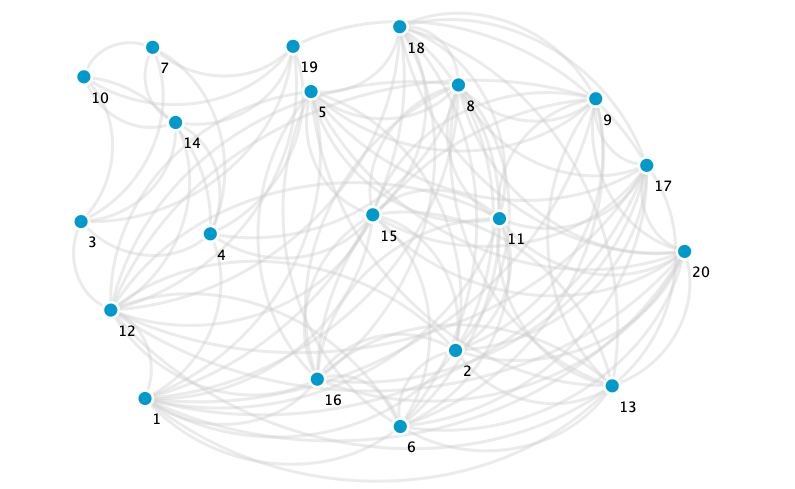}
\caption{Graph obtained from $\mathcal{E}$.} \label{graphex1}
\end{figure}
\end{Example}

\subsection{Graph-based SIR model}
Having described a population with a network model, the spreading of an epidemic is modeled by a dynamic system that uses the graph (in our case an edge-weighted graph) as its support. The class of chosen model is the probabilistic cellular automata (see \cite{noauthor_theory_2005}), this is to say, the model in which the events happen at times $t=0, \Delta t, 2\Delta t,\ldots$, where $\Delta t$ is the discretization interval. 
In this work, we use a graph-based SIR model in which each individual is represented by a vertex in an edge-weighted graph. At time $t$, each vertex $v_i$ is in a state $v_i^t$ belonging to $\mathcal{S}=\{0,1,-1\}$, where $0,1$ and $-1$  represent the three discrete states:  susceptible (S), infected (I) and recovered (R). 

Let $G$ be the edge-weighted graph obtained from a database $\mathcal{D}$ and $v_i\in V(G)$. We set 
\begin{equation}
N_I(v_i)=\left\{v\in N(v_i):v\in I\right\}.
\end{equation}
At time $t +\Delta t$, the vertex $v_i$ will change of state according to probabilistic rules:

\begin{enumerate}
\item The probability ($P_I(v_i) $) that a susceptible vertex $v_i$ is infected by one of its neighbors is given by
\begin{equation}\label{probability_infected}
P_I(v_i)=\sum_{v_j\in N_I(v_i)}\rho\Delta t\cdot\widetilde{w}_{ij}, 
\end{equation}
where $\rho$ is a purely biological factor and representative of the disease. 
\item The probability ($P_R(v_i) $) that an infected vertex $v_i$ at time $t$ will recover is given by
\begin{equation}\label{probability_recover}
P_R(v_i)=\delta\Delta t, 
\end{equation}
where $\delta$ is the recovery rate.
\end{enumerate}
Moreover, we assume that the disease is present for a certain period of time and when individuals recover, they are immune.\\ 
\begin{Remark}
Notice that the expression (\ref{probability_infected}) can be deduced from the infection model called $q$-influence, assuming $q=\rho$. (see \cite{Zhang_2015}, \cite{pastor-satorras_epidemic_2015}).
\end{Remark}

In \cite{ronald1} the authors show in detail a study of the effects produced by the parameters (order of the graph, average of the degree of the vertex, representative factor of the disease, number of relationship variables in the database and quantity of classes) by changing only one of these in the total number of infected and the duration of the disease that is transmitted.

\section{Deterministic approximation}\label{determi}

We will show an approximation of Susceptible, Infected and Recovered curves through a differential equation system. The idea is to obtain a differential equation system with the parameters that define the graph.

Let $I(t)\in [0, N]$ be the number of infected individuals between two consecutive (discrete) times, i.e
\begin{equation}\label{eqq1}
I(t+\Delta t)=I(t)-\delta\Delta t I(t)+\rho\Delta t\varphi(t)\left(N-I(t)-R(t)\right),
\end{equation}
where $\varphi(t)$ is an estimate of the mean value of strength of infected neighbours for every susceptible individual. In order to get $\varphi(t)$, we assume that $m$ is the neighbours average and that a proportion of these neighbours, we say $\displaystyle \frac{I(t)}{N}$, is infected. On the other hand,
\begin{equation}
\overline{S}=\displaystyle\frac{\sum_{i=1}^NS(v_i)}{N}=\frac{\sum_{i=1}^N\sum_{v_j\in N(v_i)}\widetilde{w}_{ij}}{N},
\end{equation}
is the  average strength of the graph. Then 
\begin{equation}\label{eqq2}
\varphi(t)=\overline{S}\cdot \frac{I(t)}{N}.
\end{equation}
Replacing the equation (\ref{eqq2}) in (\ref{eqq1}) and as $N=S(t)+I(t)+R(t)$, then

\begin{equation}\label{eqq3}
I(t+\Delta t)=I(t)-\delta\Delta t I(t)+\rho\Delta t\cdot \overline{S}\frac{I(t)}{N}S(t).
\end{equation}
Dividing by $\Delta t$ and taking the limit as $\Delta t\rightarrow 0$ we obtain
\begin{equation}\label{eqq4}
\dot{I}(t)=\frac{\rho\cdot\overline{S}}{N}I(t)S(t)-\delta I(t).
\end{equation}
In the same way we obtain the equations

\begin{align}
\dot{S}(t)&=-\frac{\rho\cdot\overline{S}}{N}I(t)S(t)\label{eqq5}\\
\dot{R}(t)&=-\delta I(t).\label{eqq6}
\end{align}
We can see that the equations (\ref{eqq4}) ,(\ref{eqq5}) and (\ref{eqq6}) are the same as those defined for the SIR model. From there we can see that $\displaystyle\beta=\frac{\rho\cdot \overline{S}}{N}$.

If the previous deductions are done on a not edge-weighted graph, then we obtain that $\beta=\displaystyle\frac{\rho\cdot m}{N}$. If $\overline{w}$ is the average of weights then $\overline{S}=\overline{w}\cdot m$. It is clear that if $\overline{w}=1$ then we are in the case where the graph is not an edge-weighted graph.

The above is valid only if the population is mixed, the graph has a fixed contact structure, all vertices have approximately the same number of neighbours and approximately the same strength. However, the last condition is a stronger condition and certainly, it is not true, even to assume $\overline{S}=\overline{w}\cdot m$ can be a mistake because $\overline{w}$ could be unrepresentative. For instance in Section \ref{simulation}, the average strengths are not representative in the initial simulation since the strengths are strongly heterogeneous (see Figure \ref{fig_20}).
\begin{figure}[h!]
\centering
\includegraphics[width=0.6\textwidth]{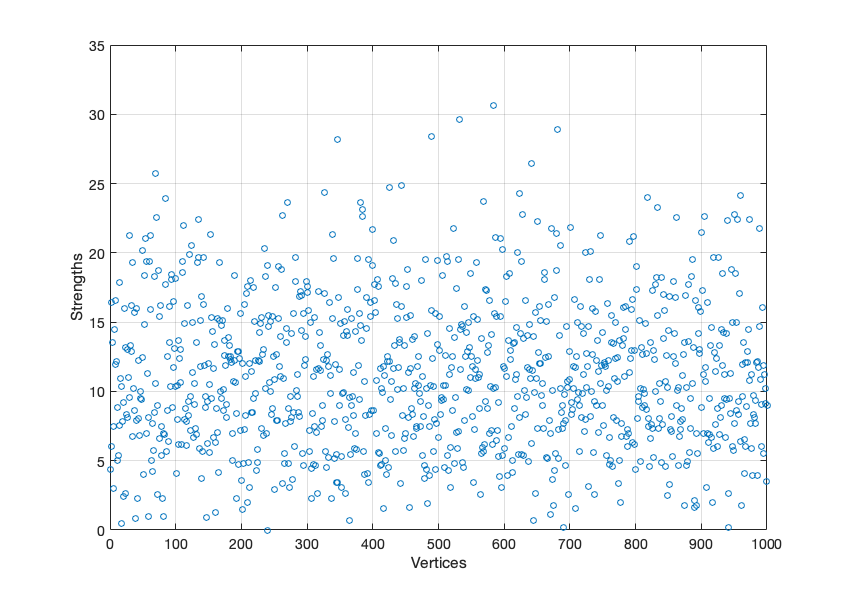}
\caption{Strength of each vertex} \label{fig_20}
\end{figure}

When the strength and weight average are not representative there is an overestimation of the infected individuals and the duration of the disease is lower.(see Figure \ref{fig_500}(a)). Moreover, if we consider $\overline{w}=1$ like in the case of the non edge-weighted graph, we have a sub-estimation of the infected individuals and the duration of the disease is upper (see Figure \ref{fig_500}(b)). Certainly, without considering $\overline{S}=\overline{w}\cdot m$ the fit is better, but not really good (because $\overline{S}$ is not a good representative) (see Figure \ref{fig_22}).

\begin{figure}[h!]
\centering
\subfigure[Fit for $\overline{S}$ and $\overline{w}$ not representative.]        
{\includegraphics[width=0.46\textwidth]{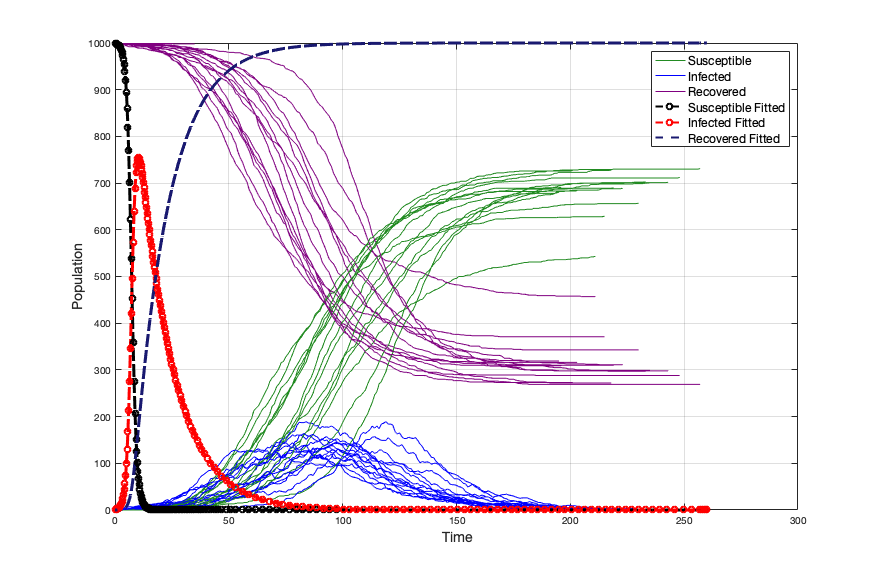}}        
\label{fig_500a}
\subfigure[Fit for $\overline{w}=1$.]        
{\includegraphics[width=0.49\textwidth, height=4.4 cm]{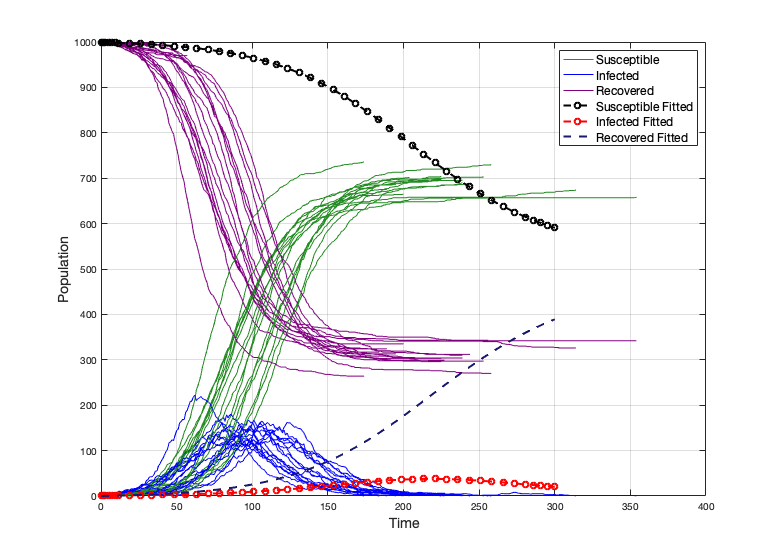}}        
\label{fig_500b}
 \caption{Different fit for $\overline{w}$.}\label{fig_500}
\end{figure}

\begin{figure}[h!]
\centering
\includegraphics[width=0.6\textwidth]{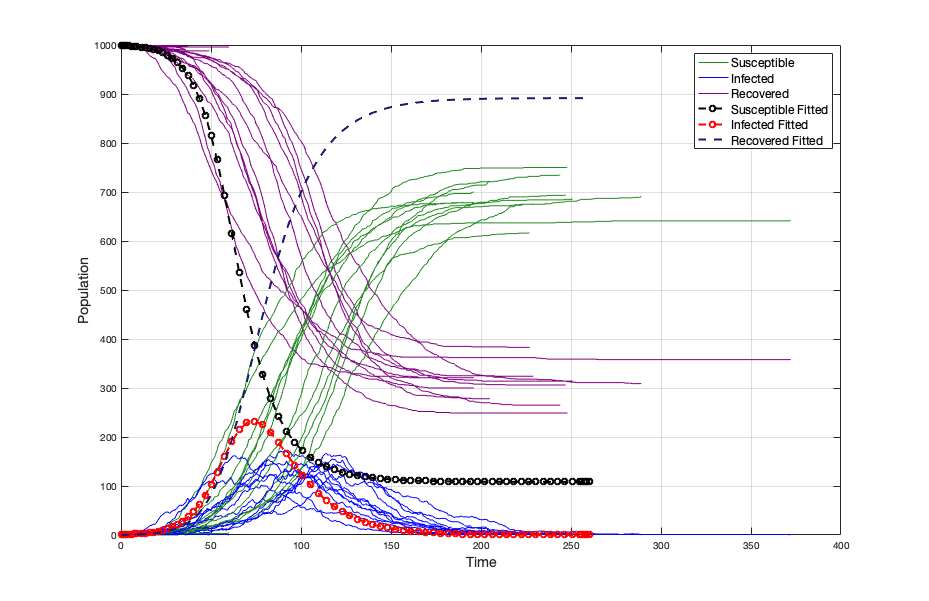}
\caption{Curves fitted without considering $\overline{S}=\overline{w}\cdot m$.} \label{fig_22}
\end{figure}
    
After many simulations (1000), we have noticed that a simple approximation to $\beta$ is the average between $\overline{S}$ and $m$, in the case where the strength of vertices is not strongly homogeneous (see Figure \ref{fig_11}). So, if $\displaystyle\psi=\frac{\overline{S}+m}{2}$, we have

\begin{align}
\dot{S}(t)&=-\frac{\rho \psi}{N}I(t)S(t)\\
\dot{I}(t)&=\frac{\rho \psi}{N}I(t)S(t)-\delta I(t)\\
\dot{R}(t)&=-\delta I(t).\label{eqq7}
\end{align}

\begin{figure}[h!]
\centering
\includegraphics[width=0.6\textwidth]{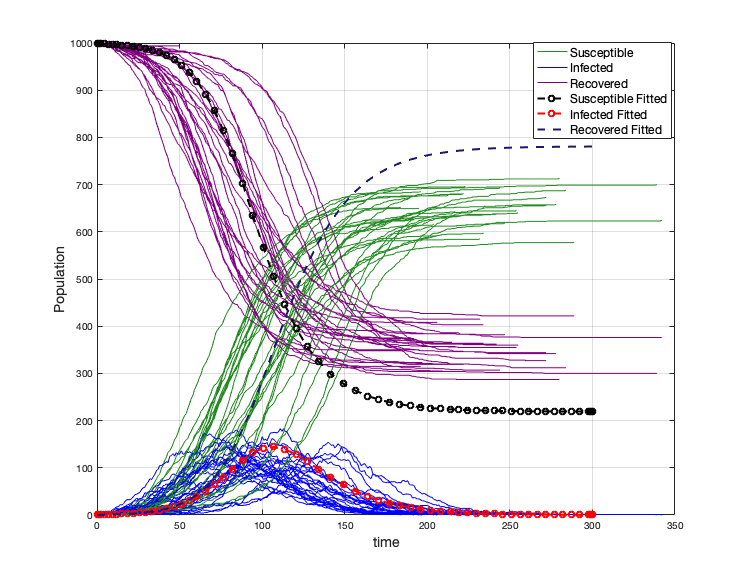}
\caption{Fitted curves in a random graph with $\displaystyle\psi=\frac{\overline{S}+m}{2}$.} \label{fig_11}
\end{figure}

A similar problem is treated in \cite{pastor-satorras_epidemic_2001} in the case of a non edge-weighted graph, where a graph has a heterogeneous distribution degree of vertex (scale-free graph).

\chapter[Generalization of the importance of vertices]{A generalization of the importance of vertices for an undirected weighted graph}\label{chap2}
\minitoc

\section{Introduction}

Establishing a node importance ranking is a problem that has attracted the attention of many researchers in recent decades. For unweighted networks where the edges do not have any attached weight, many proposals have been presented, considering local or global information of the networks. 
In this context, other criteria have emerged to determine the importance of a node. For instance, in \cite{xu2019node} there are taken into account the characteristics of the scale, tightness, and topology of the local area network where the node and its neighbors are located. Liu {\it et al.} \cite{liu2016evaluating} proposed an effective ranking method based on degree value and the importance of edges. Saxena et al. designed a method to identify the best ranked nodes in order to control an epidemic through immunizations \cite{saxena}. On the other hand, the conscious centrality of the community is addressed by Magelinski et al, who propose the relevance of the bridge node in the network quantified in modular vitality \cite{magelinski}. Along the same lines, Ghalmane et al. analyze the non-overlapping community structure network based on an immunization strategy, recognizing the relevance of the node given the characteristics of the communities \cite{ghalmane}, while the same author asserts that modular centrality must include the influences of the nodes both in their own communities and in other \cite{ghalmane2}. Gupta et al. also agree that community structure is important for understanding the spread of an epidemic \cite{gupta}.

In \cite{wang2010new} the authors proposed a method based on degree and degree of their neighborhoods. Ren {\it et al.} in \cite{ren2013node} proposed a measure based on degree and clustering coefficient. For its part, \cite{yang2019novel} is based on combining the existing centrality criteria, where the weight for each criterion is calculated by the entropy weighting method which overcomes the impact of the subjective factor, and the Vlsekriterijumska Optimizacija I Kompromisno Resenje (VIKOR) method is used for ranking nodes importance. Jia-Sheng {\it et al.} improved the method of node importance evaluation based on node contraction \cite{jia2011improved}. Recently,  a new multi-evidence centrality is proposed based on evidence theory. The existing measures of degree centrality, betweenness centrality, efficiency centrality and correlation centrality are taken into consideration in \cite{mo2019identifying}.

Edge-weighted graphs are not exempt from this discussion. For example, there are the rankings which are natural extensions of unweighted graphs, that we could call more representative, they are the strength of the node \cite{barrat2004architecture}, weighted H-index \cite{lu2016h}, weighted degree \cite{garas2012k281, wei2015weighted283}, the weighted closeness centrality \cite{newman2001scientific}, weighted betweenness centrality \cite{brandes2001faster}, weighted PageRank and leaderRank \cite{PhysRevE.75.021102}. Furthermore, Opsahl in \cite{opsahl2010node} proposed a generalization of degree for weighted graph that combines the strength and the number of neighbors.

Also on this type of networks new methods have been proposed to establish the importance or influence of the nodes. Among them, Yang {\it et al.} \cite{yang2017mining} proposed the Two-Way-PageRank method based on PageRank and analyzed the importance of two important factors that affect the importance of the nodes and gave a definition and expression of the importance of the nodes. A new centrality measure (Laplacian centrality) for undirected and edge-weighted graph is proposed in \cite{qi2012laplacian} as well. Tang \cite{tang2020research} proposed the weighted K-order propagation number algorithm to extract the disease propagation based on the network topology to evaluate the node importance. In the field of maximization of influence, Ahmad proposes an approach that includes the weighted sum and a multi-criteria decision method \cite{ahmad}. All this generates the recognition of influential nodes in the network. The authors, through experimentation, conclude that their model is competitive to other similar ones \cite{ahmad}.
Even axiomatic approaches have been proposed (see \cite{SKIBSKI2019151} and \cite{ijcai2017-59}).

In the framework of modeling complex biological systems Almasi \& Hu provide a measure from the importance of vertices in a weighted human disease network \cite{almasi2019measuring}. They propose a vertex importance measure  by extending a centrality measure for unweighted networks proposed by Liu {\it et al.} They named their extension on weighted graphs, the DIL-W centrality.

In this chapter a ranking method for undirected and edge-weighted graph is provided. Particularly, the aim of this article is to generalize the vertex importance measure (DIL) of an undirected and unweighted network given by  Liu {\it et al.}, to an undirected and edge-weighted graph using the generalization of Degree centrality proposed by Opsahl in \cite{opsahl2010node}. Our proposal is slightly different from Almasi \& Hu's one, but when we compare both generalizations in real networks, with respect to the measure the network efficiency, we can see that we obtain a better or the same efficiency measure, as we will show in the following sections.

\section{Measuring vertex importance on an undirected weighted graph}\label{proposal}
In this section the generalization of the importance vertex measure is given, denoted by DIL-W$^{\alpha}$.

Let us consider an undirected weighted graph $(G,w)$ with $G=(V,E)$ and $V=\{v_1,v_2,\ldots,v_n\}$. Let $\alpha\in [0,1]$ be the tuning parameter.
\begin{Definition}[Importance edge]\label{imporedge}
The importance of an edge $e_{ij}\in E$, denoted by $I^{\alpha}(e_{ij})$, is defined as
\begin{align*}
I^{\alpha}(e_{ij})&=\frac{\left(C_D^{w\alpha}(v_i)-p^{\alpha}_i\right)\cdot \left(C_D^{w\alpha}(v_j)-p^{\alpha}_j\right)}{\lambda^{\alpha}},
\end{align*}
where, for $k\in\{i,j\}$, $\displaystyle p^{\alpha}_k=(p+1)^{(1-\alpha)}\cdot t^{\alpha}_k$ with $p$ is the number of triangles, one edge of the triangle is $e_{ij}$, $t^{\alpha}_{k}$ is the weight of the sum of the edges incident to $v_{k}$ that form a triangle with $e_{ij}$ and $\displaystyle\lambda^{\alpha}=\frac{p^{(1-\alpha)}\cdot\left(t_i+t_j\right)^{\alpha}}{2}+1$.
\end{Definition}

\begin{Definition}[Contribution]\label{contri}
The contribution that $v_i\in V$ makes to the importance of the edge $e_{ij}$, denoted by $W^{\alpha}(e_{ij})$, is defined as
\begin{align*}
W^{\alpha}(e_{ij})&=I^{\alpha}(e_{ij})\cdot\frac{C_D^{w\alpha}(v_i)-w_{ij}^{\alpha}}{C_D^{w\alpha}(v_i)+C_D^{w\alpha}(v_j)-2w_{ij}^{\alpha}},
\end{align*}
where $w_{ij}$ is the weight of $e_{ij}$.
\end{Definition}

\begin{Definition}[Importance of vertex DIL-W$^{\alpha}$]\label{importencevertex}
The importance of a vertex $v_i\in V$, denoted by $L^{\alpha}(v_i)$, is defined as
\begin{align*}
L^{\alpha}(v_i)&=C_D^{w\alpha}(v_i)+\sum_{v_j\in N(v_i)}W^{\alpha}(e_{ij}).
\end{align*}
\end{Definition}

From the definition of Degree centrality (Definition \ref{dcent}) proposed by Opsahl in \cite{opsahl2010node}, we can see that when the tuning parameter $\alpha$ is $0$ the Definitions \ref{imporedge}, \ref{contri} and \ref{importencevertex} of importance edge, contribution and importance of vertex respectively, are the same than the proposed by Liu {\it et al.} in \cite{liu2016evaluating} for an undirected and unweighted network. However, when $\alpha=1$, our proposal is slightly different from Almasi \& Hu's proposal in \cite{almasi2019measuring}. In the latter, the authors define the contribution as 
\begin{align}\label{proALMASI}
W(e_{ij})&=I^{\alpha}(e_{ij})\cdot\frac{S(v_i)}{S(v_i)+S(v_j)}.
\end{align}
Notice that, if the strength of the vertex is the traditional degree centrality as in the case of the undirected and unweighted graph, it is not possible to get the proposal of Liu {\it et al.} from the equation (\ref{proALMASI}). This slight difference between the generalization of the vertex importance method of \cite{almasi2019measuring} and our generalization is relevant when we compare both generalizations in real networks with respect to the measure of the network efficiency as we will show in the next Section.

On the other hand, a possibility to extend the vertex importance based on degree and the importance of lines from an unweighted network to the proposed by Almasi \& Hu, is to consider the contribution as 
\begin{align}\label{failprop}
W(e_{ij})&=I^{\alpha}(e_{ij})\cdot\frac{C_D^{w\alpha}(v_i)-(1-\alpha)}{C_D^{w\alpha}(v_i)+C_D^{w\alpha}(v_j)-2(1-\alpha)}.
\end{align}
Indeed, when $\alpha=0$ we obtain the original proposal and when $\alpha=1$ we get the generalization given in \cite{almasi2019measuring}. Despite the above, we do not consider the equation (\ref{failprop}) because, in our opinion, it is not a good generalization proposal since it does not improve the one done by Almasi \& Hu.\\

\begin{figure}[h!]
\centering
\includegraphics[width=0.5\textwidth]{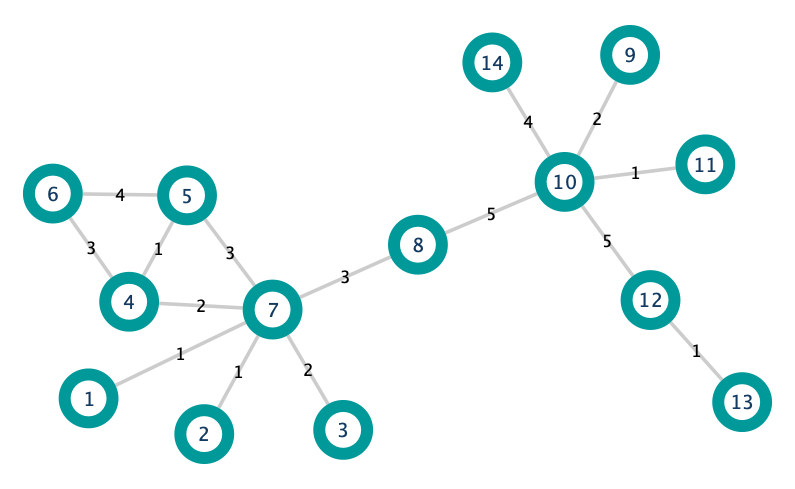}
\caption{Simple graph.} \label{line_toy}
\end{figure}

{In order to illustrate the above Definitions, let us consider the edge-weighted graph in Figure \ref{line_toy}. Moreover, we consider the edges $e_{78}$ and ${e_{75}}$. Notice that they both have the same weight (three). For this example, we set $\alpha=1$. Applying Definition~\ref{dcent}, we get:}
{\begin{align*}
C_D^{w1}(v_7)&=deg(v_7)^{(1-1)}\cdot S(v_t)^{1}=1\cdot 12=12\quad\text{and}\\ 
C_D^{w1}(v_5)&=deg(v_5)^{(1-1)}\cdot S(v_t)^{1}=1\cdot 8=8.
\end{align*}}

{From Definition~\ref{imporedge}:}  
{\begin{align*}
p^{1}_7&=(1+1)^{(1-1)}\cdot t^{\alpha}_7=2^0\cdot 2^1=2,\\
p^{1}_5&=(1+1)^{(1-1)}\cdot t^{\alpha}_5=2^0\cdot 1^1=1,\quad\text{and}\\
\lambda^{1}&=\frac{1^{(1-1)}\cdot\left(2+1\right)^{1}}{2}+1=\frac{5}{2}.
\end{align*}}

{Therefore,
\begin{align*}
I^{1}(e_{75})&=\frac{\left(C_D^{w1}(v_7)-p^{1}_7\right)\cdot \left(C_D^{w1}(v_5)-p^{1}_5\right)}{\lambda^{1}}\\
&=\frac{(12-2)\cdot (8-1)}{\frac{3}{2}+1}=28.
\end{align*}

In the same way with the edge $e_{78}$, we obtain 
\begin{align*}
I^{1}(e_{78})&=96.
\end{align*}

In conclusion, edge $ e_{68} $ is more important than edge $ e_{75}$}.
{The latter is reasonable because the edge $e_{78}$ is a bridging edge of the graph.}

{We have calculated the importance of the edge $ e_{78} $ of the graph in Figure~\ref{line_toy}. The contribution that $v_7$ makes to it is given by Definition \ref{contri}:}
{\begin{align*}
W^{1}(e_{78})&=I^{1}(e_{78})\cdot\frac{C_D^{w1}(v_7)-w_{78}^{1}}{C_D^{w1}(v_7)+C_D^{w1}(v_8)-2w_{78}^1}\\
&=96\cdot \frac{12-3}{12+8-2\cdot 3}\\
&=\frac{432}{7}.
\end{align*}}

{In the same way, the contribution that $v_8$ makes to $I^{1}(e_{78})$ is: }
{\begin{align*}
W^{1}(e_{87})&=I^{1}(e_{78})\cdot\frac{C_D^{w1}(v_8)-w_{78}^{1}}{C_D^{w1}(v_7)+C_D^{w1}(v_8)-2w_{78}^1}\\
&=96\cdot \frac{8-3}{12+8-2\cdot 3}\\
&=\frac{240}{7}.
\end{align*}}

{The above means that the node $v_7$ contributes more to the edge $e_{78}$ than node $v_8$.}

{In order to illustrate the Definition \ref{importencevertex}, we compute the importance of $v_7$ and $v_8$ in the graph of Figure~\ref{line_toy}.}
{\begin{eqnarray*}
L^{1}(v_7)&=&C_D^{w1}(v_7)+\sum_{v_j\in N(v_7)}W^{1}(e_{7j})\\
&=&12+W^{1}(e_{71})+W^{1}(e_{72})+W^{1}(e_{73})+W^{1}(e_{74})+W^{1}(e_{75})+W^{1}(e_{78})\\
&=&12+12+12+24+\frac{75}{7}+18+\frac{432}{7}\\
&=&78+\frac{507}{7}=\frac{1053}{7},
\end{eqnarray*}}
{and}
{\begin{eqnarray*}
L^{1}(v_8)&=&C_D^{w1}(v_8)+\sum_{v_j\in N(v_8)}W^{1}(e_{8j})\\
&=&8+W^{1}(e_{87})+W^{1}(e_{8,10})\\
&=&8+\frac{240}{7}+\frac{136}{5}\\
&=&8+\frac{2152}{35}=\frac{2432}{35}.
\end{eqnarray*}}

{Since $L^{1}(v_7)<L^{1}(v_8)$, then node $v_7$ is more important than node $v_8$ (according to DIL-W$^{1}$ ranking).}\\

On the other hand, one of the good qualities of the DIL ranking is that it recognizes the importance of bridge nodes (see more in \cite{Musial}). This quality is not lost in our extension to weighted graphs. For instance, we can see in Figure \ref{exgraph} that when applying the DIL-W $^{\alpha}$ ranking with $\alpha = 0.5$, vertex 10 has a better ranking than vertex 4, this due to the vertex 10 is a bridge node. Indeed, from Definitions \ref {importencevertex} we get
\begin{align*}
L^{0.5}(v_4)&=C_D^{w\alpha}(v_4)+\sum_{v_j\in N(v_4)}W^{0.5}(e_{4j})=13.7834,
\end{align*}
and 
\begin{align*}
L^{0.5}(v_{10})&=C_D^{w\alpha}(v_{10})+\sum_{v_j\in N(v_10)}W^{0.5}(e_{10j})=17.4523.
\end{align*}
Therefore $L^{0.5}(v_4)<L^{0.5}(v_{10})$.
\begin{figure}[!htb]
\begin{center}
 \includegraphics[width=0.6\textwidth]{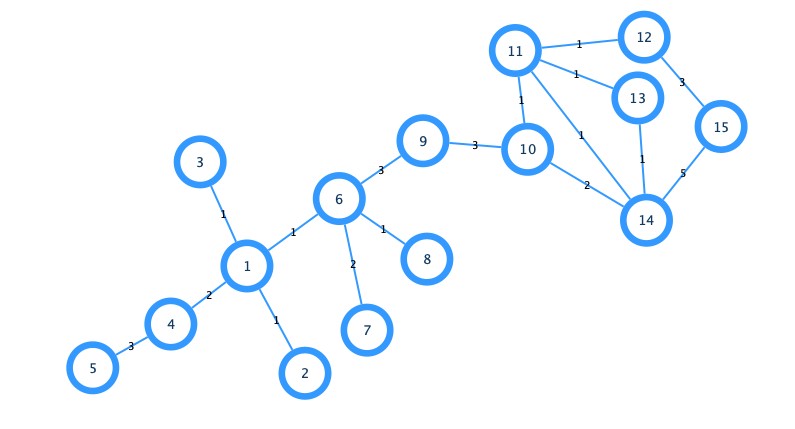}%
   \caption{Simple network example.}
    \label{exgraph}
\end{center}
\end{figure}

On the other hand, the consideration of the centrality measure, that combines degree and strength, picks up the main idea of the original measure, which is the number of neighbors of a vertex, an idea that is lost when considering only the strength.

\section{Comparison and analysis results}\label{Comparison}
\subsection{Data and source code}
In our research, we compare our proposal DIL-W$^{\alpha}$ for $\alpha=1$ and $\alpha=0.5$ with DIL-W on some real networks, these are: Zacharys karate club network \cite{zachary} (with 34 vertices and 78 edges), US air transport network \cite{colizza2007} (with 500 vertices and 2980 edges), 
meta-analysis network of human whole-brain functional coactivations \cite{Crossley11583} (with 638 vertices and 18625 edges), colony of ants network \cite{mersch2013tracking} (with 113 vertices and 4550 edges), and wild bird network \cite{firth2015experimental} (with 131 vertices and 1444 edges).

For the databases of colony of ants and wild birds networks see \cite{nr-aaai15}. For US air transport network see \cite{vitonicoruso}. For brain functional coactivactions network database see \cite{RUBINOV20101059}.

\subsection{Evaluation}\label{evaluation}
According to \cite{PhysRevLett}, network efficiency is an index used to indicate the quality of network connectivity. A high connectivity indicates a higher efficiency of the graph. The efficiency of a graph, denoted by $\eta$, is defined as follows:
\begin{align}
\eta&=\frac{1}{n(n-1)}\sum_{v_i\neq v_j}\frac{1}{d^{w}(v_i,v_j)},
\end{align}
where $d^{w}(v_i,v_j)$ is the distance between the vertex $v_j$ and the vertex $v_j$, defined as,
\begin{align}
d^{w}(v_i,v_j)&=\min \left(\frac{1}{w_{ih}}+\ldots+\frac{1}{w_{hj}}\right),
\end{align}
where $h$ are the indices of the intermediary vertices on paths between $v_i$ and $v_j$, and $w_{ih},\ldots,w_{hj}$ the weights of the edges $e_{ih},\ldots, e_{hj}$.
(see more details in \cite{brandes2001faster} and \cite{newman2001scientific}).
We consider $\alpha=1$ and $\alpha=0.5$ to compare DIL-W$^{\alpha}$ with DIL-W, according to what was discussed at the end of Section \ref{proposal}.

The authors in \cite{lai2004} show that between 5\% and 10\% of important nodes may cause the entire network to falter. To compare the DIL-W$^{\alpha}$ proposal with DIL-W, the 10\% of the vertices are deleted one by one from the networks according to the importance ranking produced by DIL-W$^{\alpha}$ and DIL-W, and we compute the decline rate of network efficiency, denoted by $\mu$, and defined by
\begin{align*}
\mu&=1-\frac{\eta}{\eta_0},
\end{align*}
where $\eta_0$ is the efficiency of the original graph, and $\eta$ is the graph efficiency after some vertices are removed \cite{ren2013node}. We expect a greater decline rate of the network efficiency if we delete a more important vertex of the obtained DIL-W$^{\alpha}$ ranking.

In Figure \ref{figura1}, we see the ranking done by DIL-W, DIL-W$^{1}$, and DIL-W$^{0.5}$ on the example networks versus the decline rate of the network efficiency.  There exists a  correlation between the rankings DIL-W,  DIL-W$^{1}$, and DIL-W$^{0.5}$ and the decline rate of the network efficiency in Zacharys karate club,  US air transport, and colony of ants networks. Whereas in the case of wild birds and brain functional coactivations networks the correlation is not evident.
\begin{figure}[!h]
\begin{center}
  \subfigure
    {\includegraphics[width=0.48\textwidth]{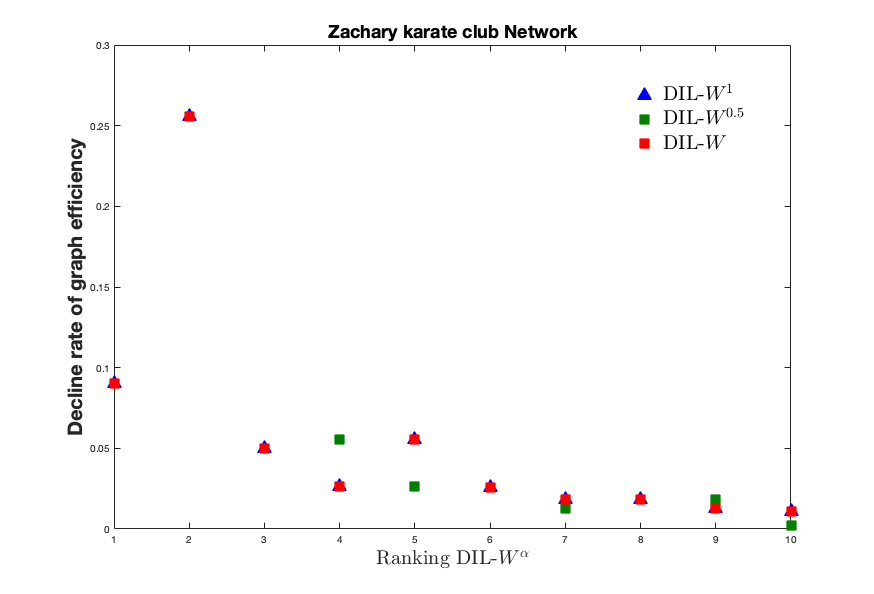}}  
  \subfigure
    {\includegraphics[width=0.48\textwidth]{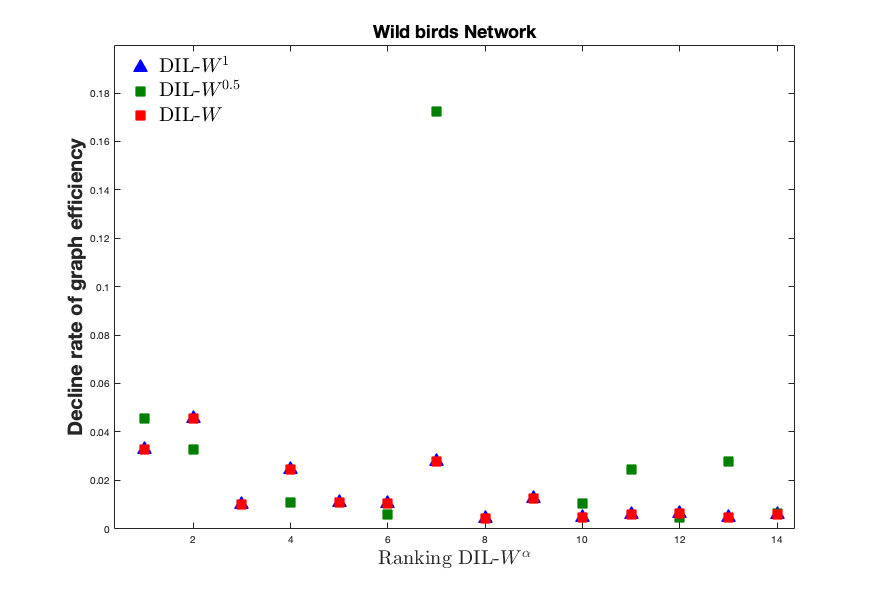}}
   \subfigure
    {\includegraphics[width=0.48\textwidth]{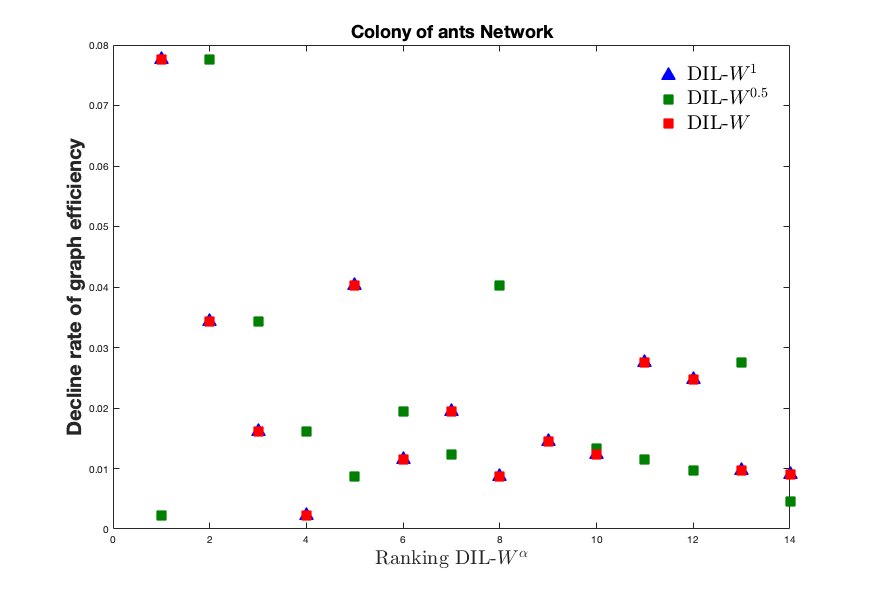}}
     \subfigure
    {\includegraphics[width=0.48\textwidth]{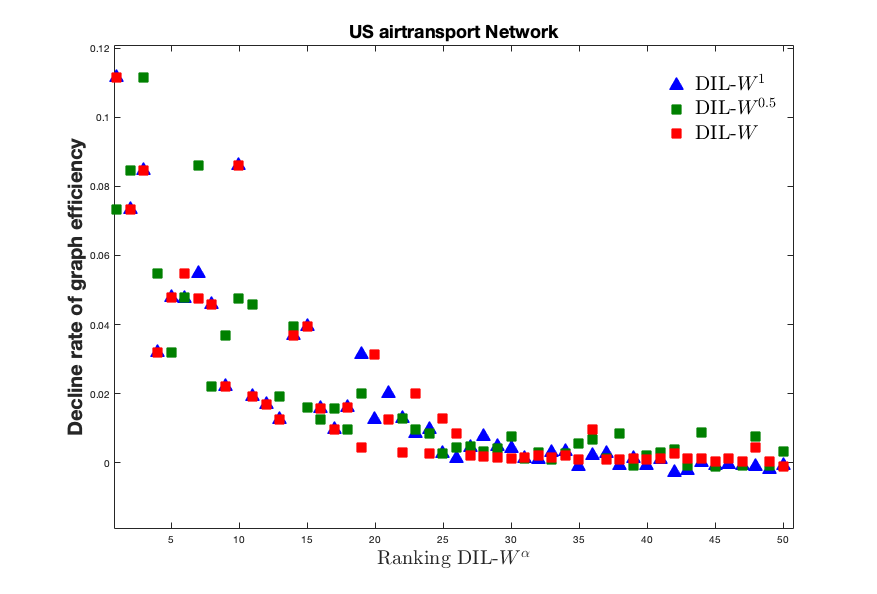}}
     \subfigure
    {\includegraphics[width=0.48\textwidth]{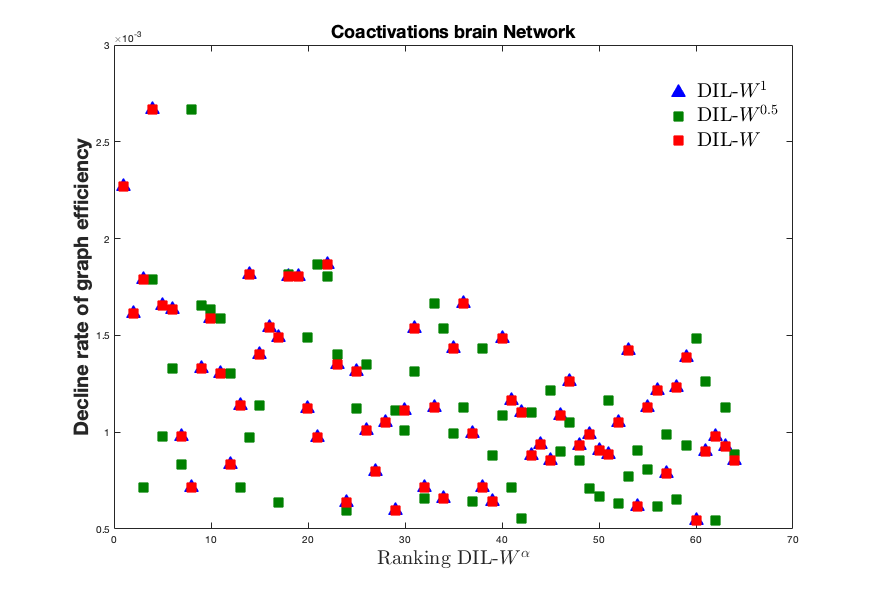}}
  \caption{The relation between decline rate of network efficiency and the ranking  DIL-W$^{1}$,  DIL-W$^{0.5}$ and DIL-W on the example networks.}
  \label{figura1}
  \end{center}
\end{figure}

In each implementation only one node is removed according to the importance ranking list, then we compute the decline rate of the network efficiency.

In Figure \ref{figura3}, we can see the relation between the decline rate of network efficiency and the number of vertices deleted from five real networks.
\begin{figure}[h!tb]
\begin{center}
  \subfigure
    {\includegraphics[width=0.45\textwidth]{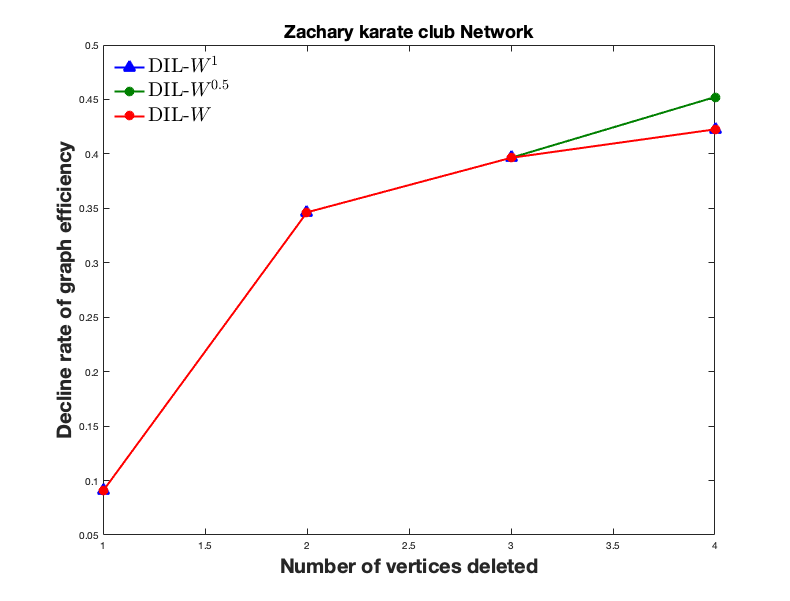}}
   \subfigure
    {\includegraphics[width=0.45\textwidth]{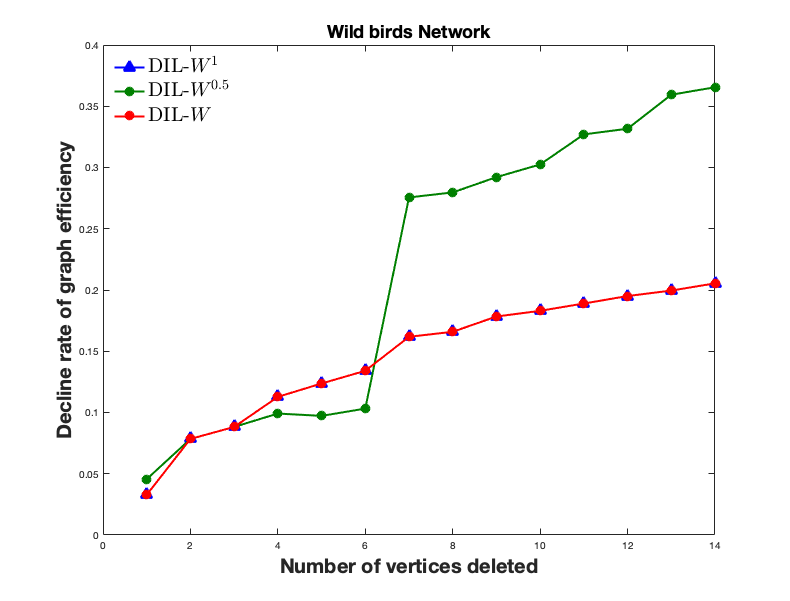}}
  \subfigure
    {\includegraphics[width=0.45\textwidth]{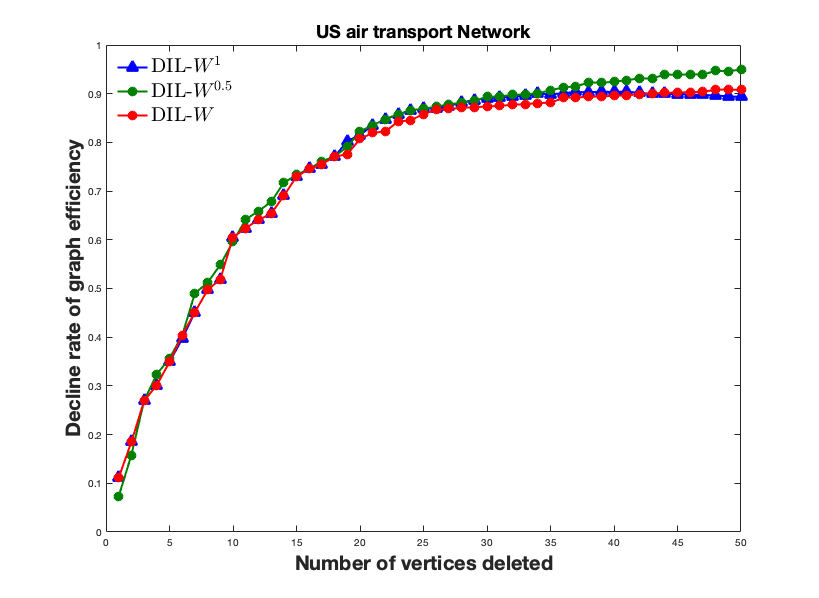}}
    \subfigure
    {\includegraphics[width=0.45\textwidth]{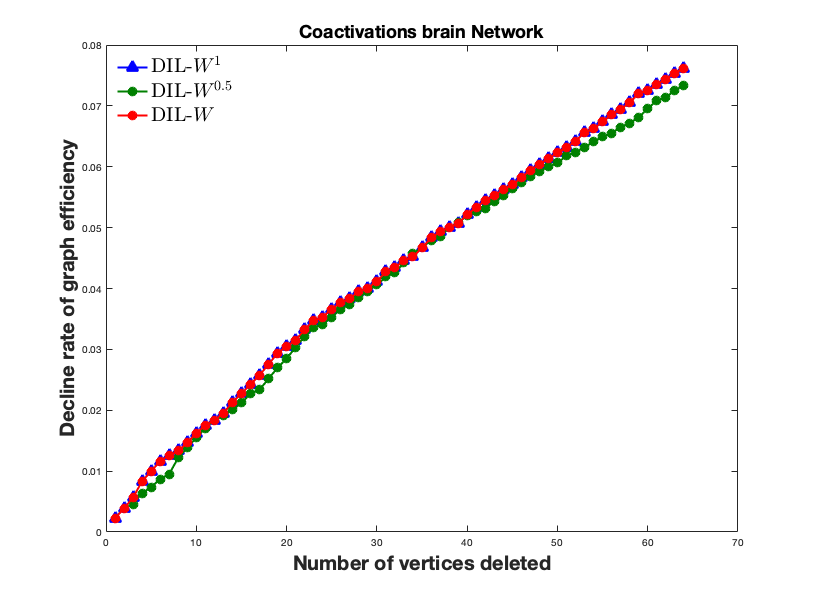}}
    \subfigure
    {\includegraphics[width=0.45\textwidth]{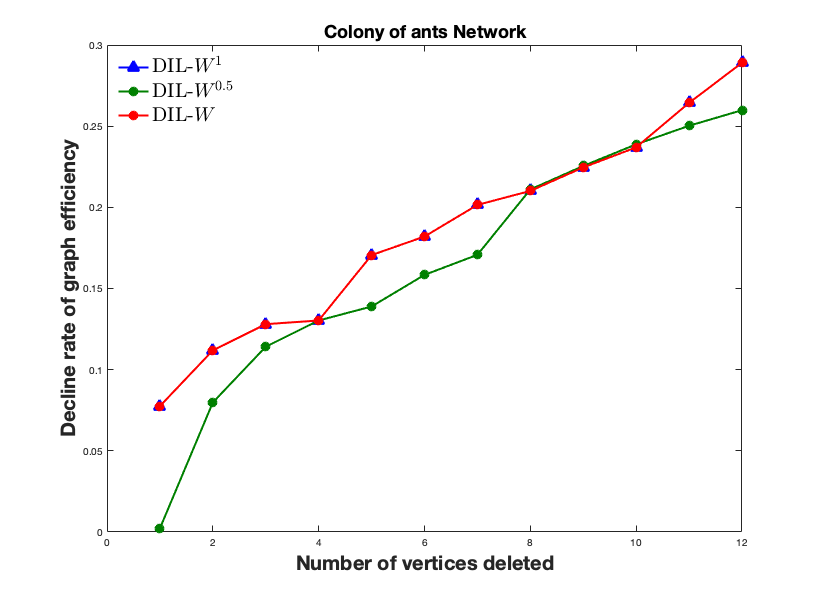}}
   \caption{The decline rate of the network efficiency as a function of deleting the top 10\% of the vertices ranked by DIL-W$^{1}$, DIL-W$^{0.5}$ and DIL-W from five real networks (Zacharys karate club, wild birds, US air transport, brain functional coactivations, and colony of ants).}
  \label{figura3}
 \end{center}
\end{figure}

In Zachary karate club, US air transport, and wild birds networks, DIL-W$^{0.5}$ performs the best between all three rankings. In brain functional coactivations, and colony of ants networks, DIL-W$^{1}$ and DIL-W perform the best. Notice that, only in the case of US air transport, DIL-W$^{1}$ does not match completely with DIL-W. Table \ref{table1} shows the nodes ranking on the wild bird and US air transport networks making evident the same ranking of importance of nodes for DIL-W and DIL-W$^1$ in wild birds network and the slight difference of ranking in US air transport network. 

\begin{table}[h!bt]
\centering
\resizebox{10cm}{!} {
\begin{tabular}{c|c|c|c|c|}
\cline{2-5}
\multicolumn{1}{l|}{}                           & \multicolumn{2}{c|}{\textbf{Wild birds network}} & \multicolumn{2}{c|}{\textbf{US air transport network}} \\ \cline{2-5} 
\multicolumn{1}{l|}{}                           & \textbf{DIL-W}        & \textbf{DIL-W$^1$}       & \textbf{DIL-W}          & \textbf{DIL-W$^1$}          \\ \hline
\multicolumn{1}{|c|}{\textbf{Ranking position}} & ID                    & ID                       & ID                      & ID                          \\ \hline
\multicolumn{1}{|c|}{1}  & 12 & 12 & 1  & 1  \\
\multicolumn{1}{|c|}{2}  & 27 & 27 & 3  & 3  \\
\multicolumn{1}{|c|}{3}  & 23 & 23 & 6  & 6  \\
\multicolumn{1}{|c|}{4}  & 10 & 10 & 10 & 10 \\
\multicolumn{1}{|c|}{5}  & 35 & 35 & 14 & 14 \\
\multicolumn{1}{|c|}{6}  & 77 & 77 & 7  & 5  \\
\multicolumn{1}{|c|}{7}  & 25 & 25 & 5  & 7  \\
\multicolumn{1}{|c|}{8}  & 51 & 51 & 4  & 4  \\
\multicolumn{1}{|c|}{9}  & 26 & 26 & 8  & 8  \\
\multicolumn{1}{|c|}{10} & 94 & 94 & 2  & 2  \\
\multicolumn{1}{|c|}{11} & 6  & 6  & 12 & 12 \\
\multicolumn{1}{|c|}{12} & 56 & 56 & 11 & 11 \\
\multicolumn{1}{|c|}{13} & 61 & 61 & 13 & 13 \\
\multicolumn{1}{|c|}{14} & 11 & 11 & 21 & 21 \\ \hline
\end{tabular}}
\caption{The ranking results of DIL-W and DIL-W$^1$ on wild birds and US airtransport networks.}
\label{table1}
\end{table}

\subsection{Ranking DIL-W$^{\alpha}$ for different values of $\alpha$. }\label{discussion}
In the previous section we could see that, with respect to efficiency, DIL-W $^{\alpha}$ performs better rankings in brain functional coactivations and colony of ants networks, when $\alpha=1$, whereas in US air transport, Zachary karate club, and wild birds networks the ranking is better with $\alpha=0.5$.\\
In this section we compare the different rankings that DIL-W $^{\alpha} $ performs for different values of $\alpha$. We have considered the values $0.1$, $0.3$, $0.5$, $0.7$, $0.9$ and $1$. Figure \ref{figura4} shows the results. 


\begin{figure}[!htb]
\begin{center}
  \subfigure
    {\includegraphics[width=0.45\textwidth]{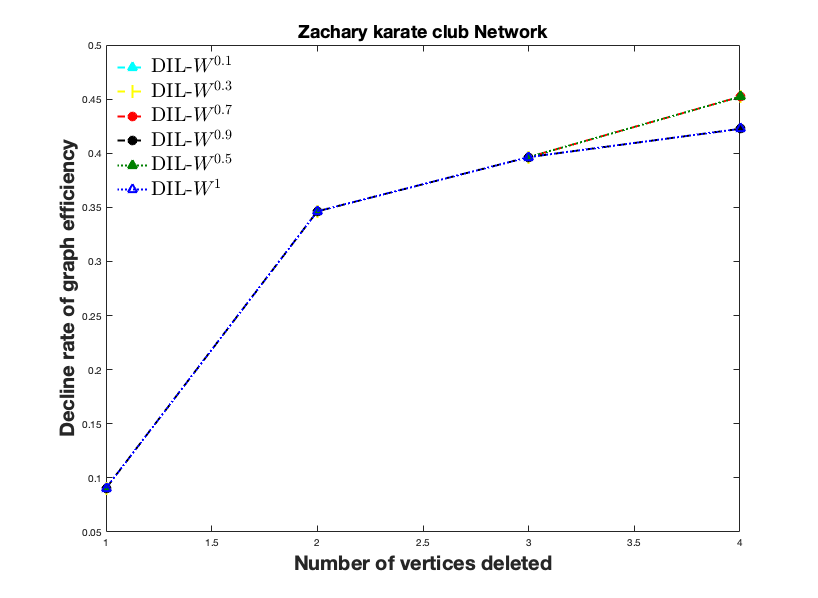}}
   \subfigure
    {\includegraphics[width=0.45\textwidth]{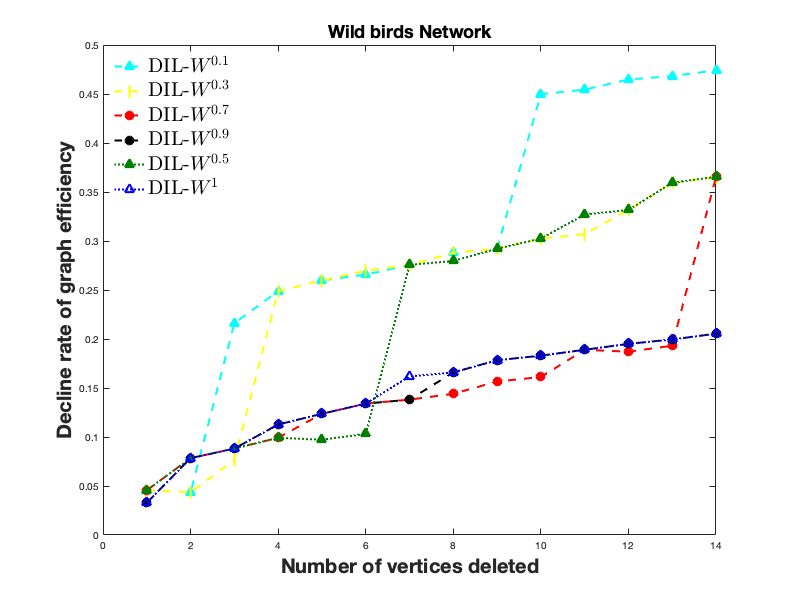}}
  \subfigure
    {\includegraphics[width=0.45\textwidth]{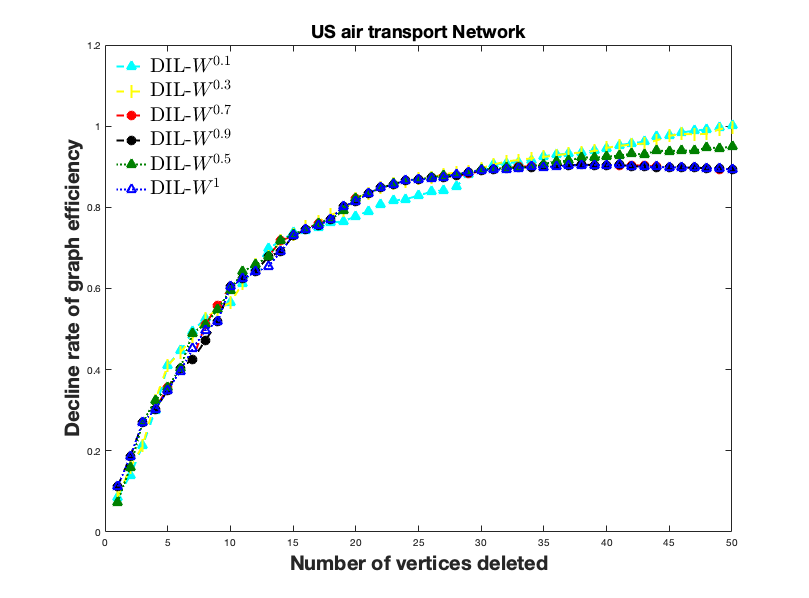}}
    \subfigure
    {\includegraphics[width=0.45\textwidth]{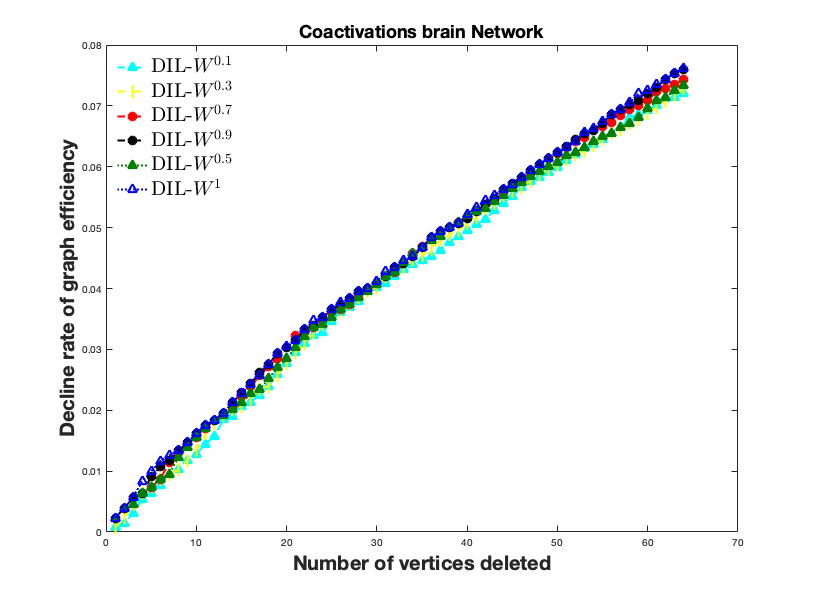}}
    \subfigure
    {\includegraphics[width=0.45\textwidth]{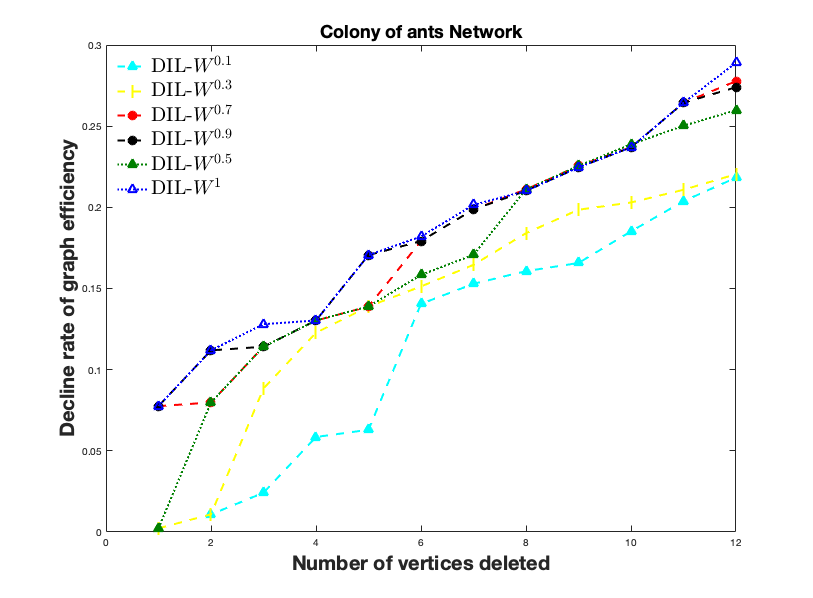}}
   \caption{The decline rate of the network efficiency as a function of deleting the top 10\% of the vertices ranked by DIL-W$^{0.1}$, DIL-W$^{0.3}$, DIL-W$^{0.5}$, DIL-W$^{0.7}$, DIL-W$^{0.9}$ and DIL-W$^1$ from five real networks (Zacharys karate club, wild birds, US air transport, brain functional coactivations, and colony of ants).}
    \label{figura4}
\end{center}
\end{figure}

For the US air transport, Zachary karate club and wild birds networks, it is observed that the smaller the $\alpha$, the better the efficiency ranking is performed. Conversely, in colony of ants and brain coactivations networks the best performance for efficiency is with $\alpha = 1 $ or $\alpha $ close to $1$. 
In conclusion, we can see that DIL-W $^{\alpha}$ is equal to or better in efficiency than the DIL-W proposal given in \cite{almasi2019measuring} in the five networks that we have considered. In particular, they coincide in efficiency in brain coactivations and colony of ants networks  for $\alpha=1 $ and cannot be improved with another value of $\alpha$ as we concluded from Figure \ref{figura4}. In contrast, DIL-W $^{\alpha} $ performs better than DIL-W on the US air transport, Zachary karate club, and wild birds networks, as soon as $\alpha<1$. \\

Moreover, we deepen into the possible relations between DIL and DIL-W$^\alpha$ for the different values of $\alpha $ seen previously (i.e $0.1, 0.3, 0.5, 0.7, 0.9,$ and $1$) on all the networks that we have tested. Figure \ref{coef} shows the relations. In each network it is observed that the correlation coefficient between both rankings are very close to 1. In particular, for the Zachary karate club network we have all coefficients above 0.988. While for US air transport we have the largest range of correlation coefficients among all networks (between 0.7308 and 0.9998).

\begin{figure}[!htb]
\begin{center}
  \subfigure
    {\includegraphics[width=0.45\textwidth]{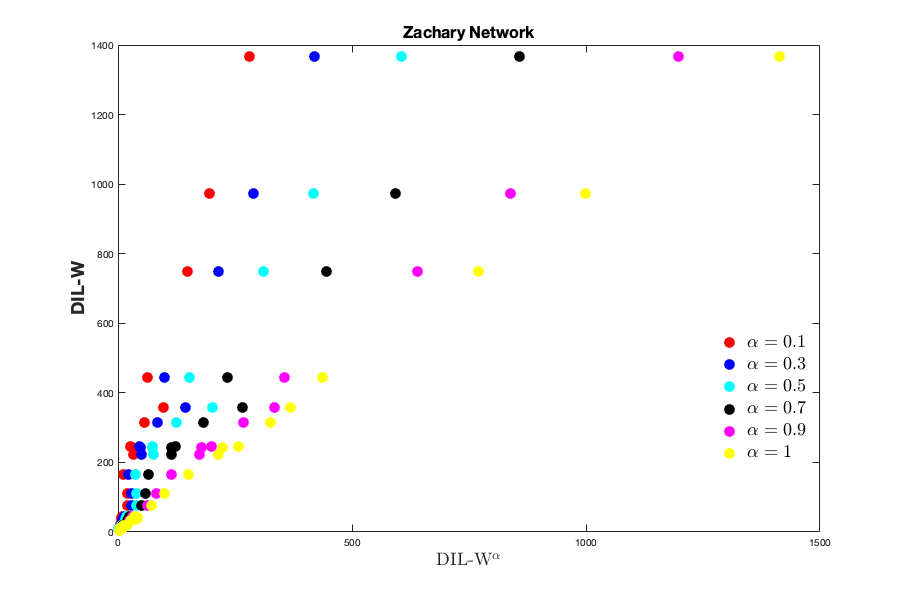}}
   \subfigure
    {\includegraphics[width=0.45\textwidth]{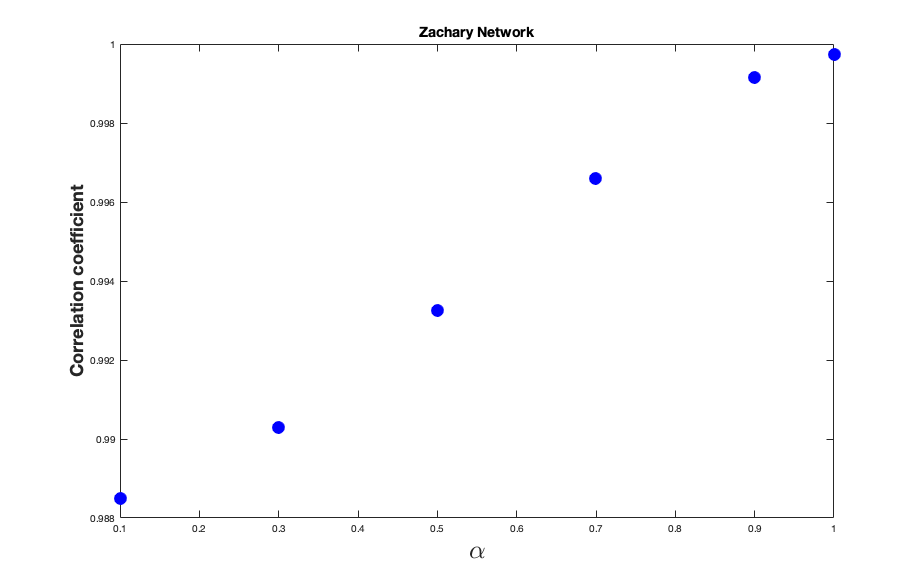}}
  \subfigure
    {\includegraphics[width=0.45\textwidth]{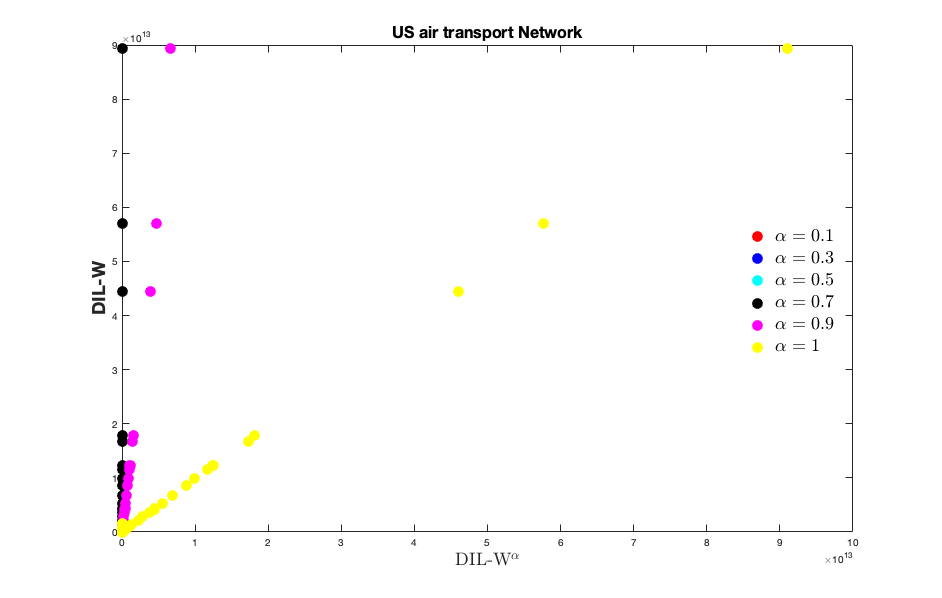}}
    \subfigure
    {\includegraphics[width=0.45\textwidth]{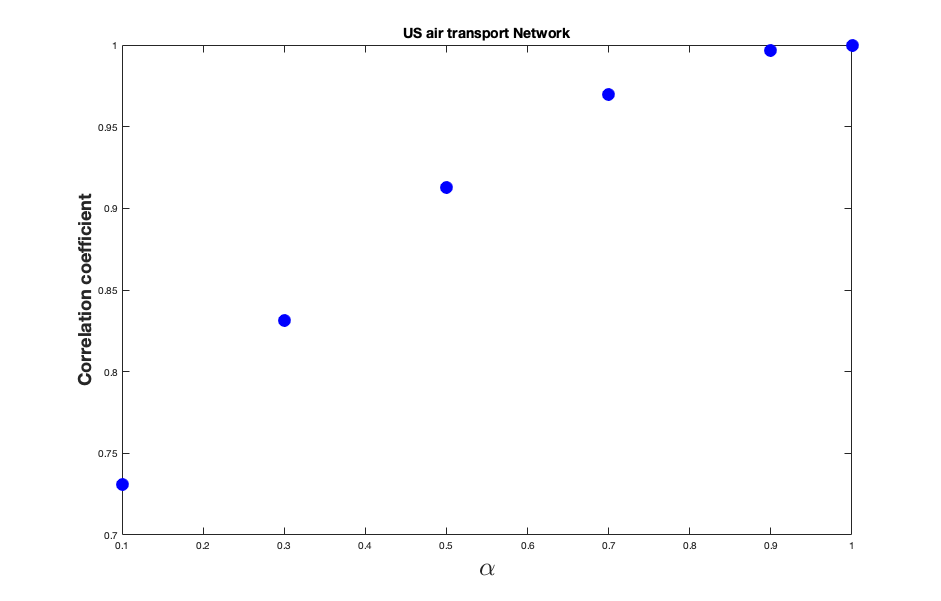}}
    \subfigure
    {\includegraphics[width=0.45\textwidth]{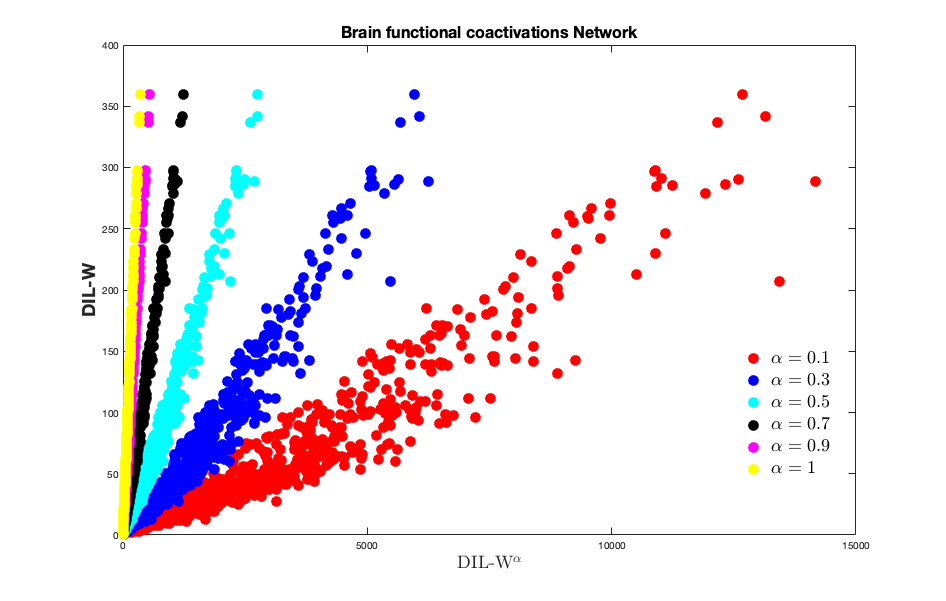}}
      \subfigure
    {\includegraphics[width=0.45\textwidth]{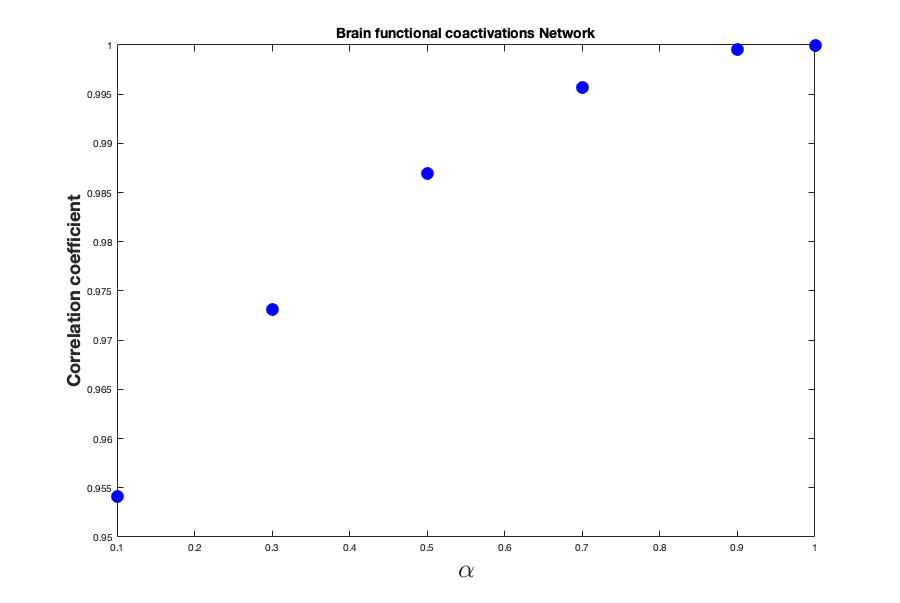}}
   \subfigure
    {\includegraphics[width=0.45\textwidth]{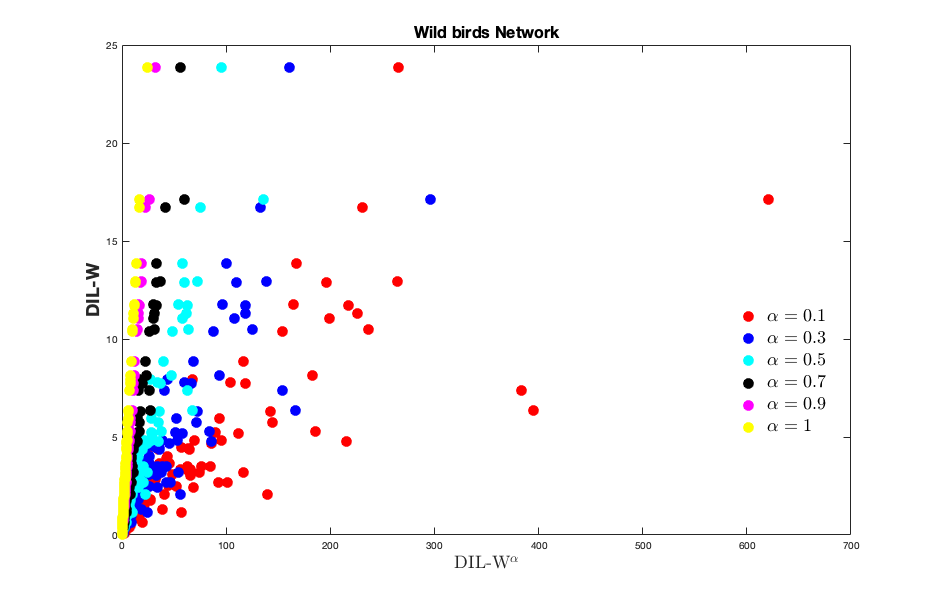}}
  \subfigure
    {\includegraphics[width=0.45\textwidth]{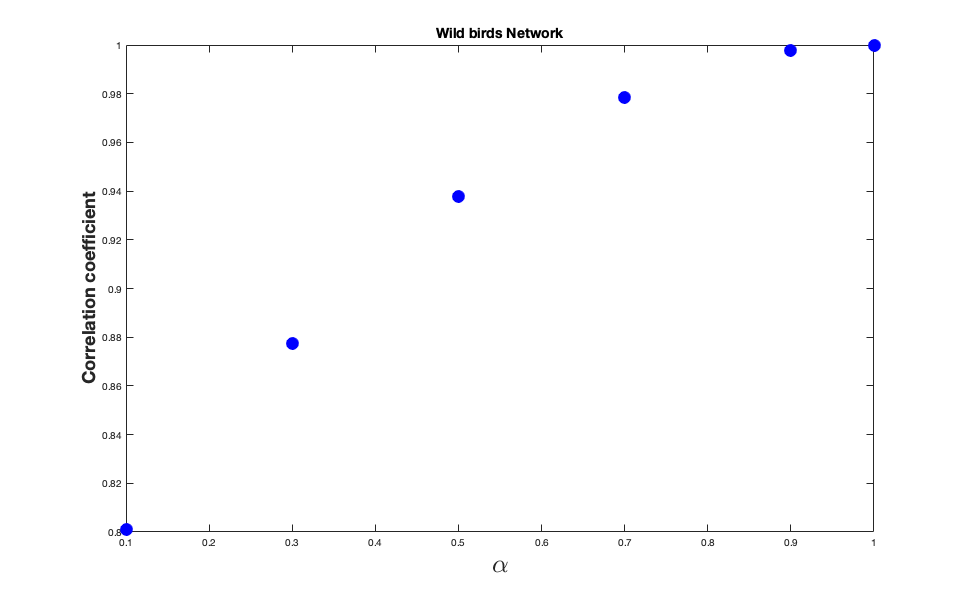}}
    \subfigure
    {\includegraphics[width=0.45\textwidth]{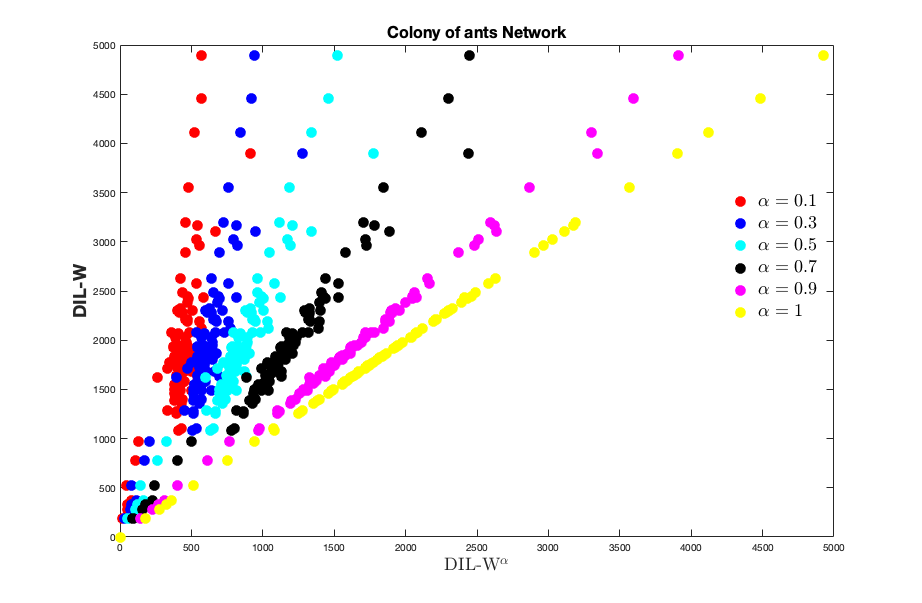}}
    \subfigure
    {\includegraphics[width=0.45\textwidth]{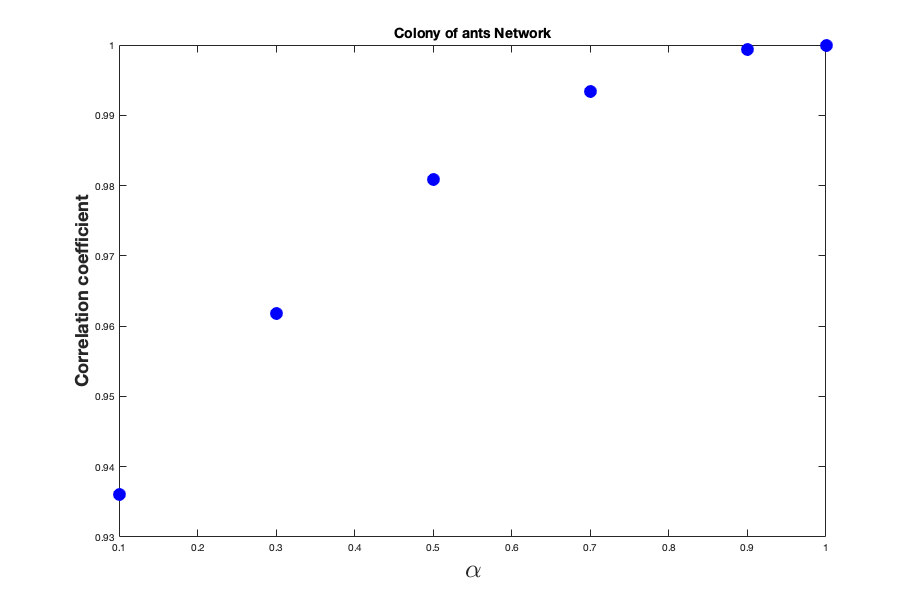}}
   \caption{On the left the relationship between the DIL-W and DIL-W $^{\alpha}$ rankings for different values of $\alpha $. On the right the correlation coefficient for the different values of $\alpha $.}
    \label{coef}
\end{center}
\end{figure}


\subsection{Computational complexity}
In general the real complex networks have many vertices, then the computational complexity plays an important role. DIL-W$^{\alpha}$ has a computational complexity of $O(n\cdot <k>)$, where $n$ is the total numbers of vertices of the graph and $<k>$ is the average degree of vertices in a network. Moreover, DIL-W$^{\alpha}$ uses local information of a vertex for its importance evaluation. By the above, we believe that DIL-W$^{\alpha}$ can be efficient to compute in large networks.

\section{Conclusion}\label{conclusionchap2}
In this chapter, we propose a ranking of importance of nodes which is applicable to undirected and edge-weighted networks. We can see that DIL-W $^{\alpha}$ is equal to or better in efficiency than the DIL-W proposal given in \cite{almasi2019measuring} in the five networks that we have considered.

We assess effectiveness by comparing the rate of decrease of network efficiency caused by node removal. The higher the rate of decrease of network efficiency caused by node removal, the more important will be the node. In Figure \ref{figura4}, the results show that DIL-W$^{\alpha}$ can evaluate the importance of nodes better in the US air transport, Zachary karate club and wild birds networks, when $\alpha $ is less than one. Conversely, in colony of ants and brain coactivations networks the best performance for efficiency is with $\alpha=1 $ or $\alpha $ close to $1$. In consequence, when modifying the parameter $\alpha$ it is possible to obtain better rankings for the network in the sense of effectiveness. 

From the above, we are presented with an important task or challenge, which is to establish what value of $\alpha $ makes the ranking perform more optimally. It is difficult to determine the exact value of $\alpha$ to use. We think that there may be a dependency on the weights at the edges. For instance, is it better to have many edges with weights between 0 and 1 or few edges with weights greater than 1?. Such answers would allow a better understanding of the optimal $\alpha $ in certain networks.

From table \ref {table1} we can see that only in the US air transport network the rankings made by DIL-W and DIL-W$^1$ do not match completely, since there is a small difference in positions 6 and 7 of the rankings. 

With respect to the computational complexity, we believe that DIL-W$^{\alpha}$ can be efficient to compute in large networks because it uses local information of a vertex for its importance evaluation.

In summary, our research provides a way to generalize the classification method for edge- weighted and undirected graphs. This turns out to be an element for the discussion around the usefulness of the concept of degree of centrality and its standard use has limitations in complex networks, as pointed out by Sciarra~et~al.~\cite{sciarra}.

\chapter[Protection Strategy for Edge-Weighted Graphs]{Protection Strategy for Edge-Weighted Graphs in Disease Spread}\label{pro}
\minitoc

\section{Introduction}
Fake news, viruses on computer systems or infectious diseases on communities are some of the problems that are addressed by researchers dedicated to study complex networks. The immunization process is the solution to these challenges and hence the importance of obtaining immunization strategies that control these spreads. 

For more than 2 decades, the~complex networks have been the focus of many researchers due to their multiple applications in economy, traffic problems, spread of diseases, electrical networks, biological systems and the famous social networks (see for \mbox{instance~\cite{almasi2019measuring,an2014synchronization,Crossley11583, guangzeng2009novel,demongeot2013archimedean,montenegro2019linear,pastor-satorras_epidemic_2001}}). Furthermore, in~discrete mathematics the most important object is the~graph.

Modeling the spread of disease over networks has many applications in real life. For~example, rumors or spam that spread on large social networks such as Facebook or Twitter~\cite{guess2019less}, viruses that spread through computer systems~\cite{yang2012towards}, or~diseases that spread over a population such as it does the actual SARS-CoV-2 (see for instance~\cite{ronald1}). This leads to solve the problem of totally or partially controlling these spreads.

It is well known that a set of nodes can spread disease over the entire network. The~nodes that have this characteristic are called influencing nodes or spreader nodes. Consequently, determining this type of nodes is crucial for the objective of controlling spreads over networks. That is, we can prevent the spread of contagious diseases over a network by immunizing the influencing nodes. In~that direction, efforts have been directed in many investigations in the last time. Among~them, Pastor-Satorras and Vespignani in~\cite{pastor2002immunization} concluded that uniform random immunization of nodes does not lead to the eradication of infections in all networks, while targeted immunization drastically reduces the vulnerability of the network to epidemic attacks.
In~\cite{cohen2003efficient} the authors provided a purely local strategy, which requires minimal information about the randomly selected nodes.
Tong~et~al.~\cite{tong2010} proposed NetShield: an effective immunization strategy which use the properties of matrix perturbation. A~few years later (2016) its NetShield$+$ variant appears to balance the optimization quality and speed~\cite{chen2015node}.
The authors in~\cite{hebert2013global} use different features of the organization of the network to identify influential diffusers.
In 2014 and 2015 Zhang and Prakash developed DAVA and DAVA-fast, two methods in which all infected nodes merge into a supernode by constructing a weighted dominator tree of the input network~\cite{zhang2014dava,zhang2015data}. 
Song~et~al. in~\cite{song2015node} provided NIIP, which selects $ k $ nodes to immunize over a period of time. For~each time point, the~NIIP algorithm will take decisions about which nodes to immunize given the estimated value of $ k $ for that time point. 
 In~\cite{gupta}, Gupta~et~al. also agree that community structure is important for understanding the spread of an epidemic. Wang~et~al. propose a dynamic minimization model of the influence of a rumor with the user experience (DRIMUX), considering both the global popularity and the individual attraction of the rumor~\cite{wang2017drimux}. 
GraphShield method is developed by Wijayanto and Murata in~\cite{wijayanto2017flow} taking into account the function of infection flow, graphics connectivity and top-grade centrality.
Saxena~et~al. designed a method to identify the best ranked nodes in order to control an epidemic through immunizations~\cite{saxena}. Ghalmane~et~al. analyzed the nonoverlapping community structure network based on an immunization strategy, recognizing the relevance of the node given the characteristics of the communities~\cite{ghalmane2019immunization}, while the same author asserts that modular centrality must include the influences of the nodes both in their own communities and in others~\cite{ghalmane2019centrality}.
In~\cite{wijayanto2019effective} the authors introduced ReProtect and ReProtect-p methods, which divide the size of protection budget into several turns and protect nodes according to the currently observed temporal snapshot of dynamic networks. 
Tang~et~al.~\cite{tang2020research} proposed the weighted $K$-order propagation number algorithm to extract the disease propagation based on the network topology to evaluate the node~importance.

In~\cite{ronaldranking}, the~authors show that the DIL-W $^{\alpha}$ ranking provides good results, regarding the rate of decline in network efficiency (more details in~\cite{ren2013node}). Furthermore, one of the good qualities of the DIL-W$^\alpha$ ranking is that it recognizes the importance of bridge nodes, these nodes are those that connect the peripheral nodes and the peripheral groups with the rest of the network (see more details in~\cite{musial}). This attribute is inherited from the version of the DIL ranking for graphs not weighted at the edges (see~\cite{liu2016evaluating}). For~this reason, we have chosen this ranking to evaluate its effectiveness in the immunization of nodes that are attacked by an infectious disease that spreads on an edge-weighted graph. The~immunization is done according to the importance ranking list produced by DIL-W$^{\alpha}$ ranking. The~effectiveness is measured with the ratio of vertices that remain uninfected at the end of the disease over the total numbers of vertices subject to budget protection. The~experimentation was done on real {and scale-free} networks.
In this chapter, we evaluate the effectiveness of the DIL-W$^{\alpha}$ ranking in the immunization of nodes that are attacked by an infectious disease that spreads on an edge-weighted graph using a graph-based SIR model. The experimentation was done on real {and scale-free} networks and the results illustrate the benefits of this ranking.\\

\section{Strategy~Protection}\label{strategy}
In this section, we provide definitions of the protection of a graph when a disease spreads on it. Moreover, we state the protection strategy.
\begin{Definition}\label{def_proc}
Protecting a vertex means removing all of its corresponding edge (See Figure~\ref{protectionexample1}).
\end{Definition}
It is also possible to find in the literature that protecting a vertex means removing the vertex from the graph. See for instance~\cite{tong2010}. 
\begin{figure}[h]
\centering
  \subfigure
    {\includegraphics[width=0.3\textwidth]{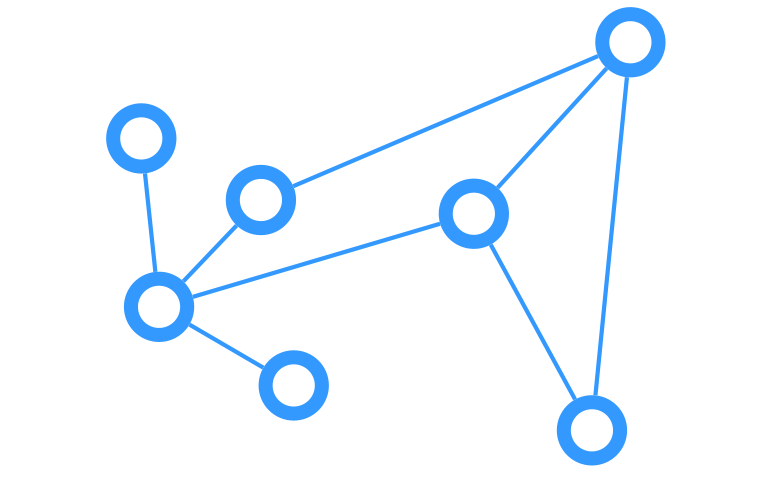}}
  \subfigure
    {\includegraphics[width=0.34\textwidth]{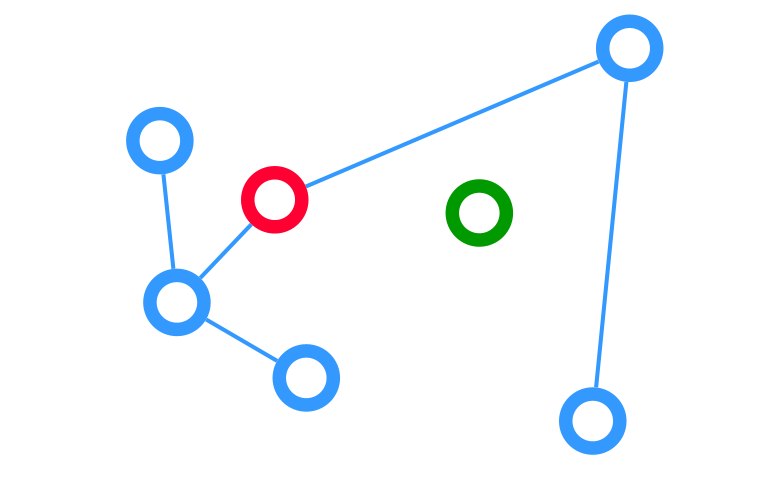}}
  \caption{The left side shows a graph without protection or infection. On~the right side, we see an infected node (red) and a protected node (green) in the same~graph.}
  \label{protectionexample1}
\end{figure}

\begin{Definition}
The numbers of vertices allowed to protect is called protection budget, denoted by $k$.
\end{Definition}

It is clear that $k\in\mathbb{Z}^+$.
\begin{Definition}
We will say that the survival rate, denoted by $\sigma$, is the ratio of vertices that remain uninfected at the end of disease over the total numbers of vertices.
\end{Definition}

Therefore, our problem is: given a graph $G=(E,V)$, SIR model, and~a protection budget $k$, the~goal is to find a set of vertices $\mathcal{S}\subseteq V$ such that 
\begin{align}\label{problem}
\displaystyle\theta^{*}&=\arg\max_{\mathcal{S}\subseteq V}\sigma,
\end{align}
with $|\mathcal{S}|=k$.
However, the~problem (\ref{problem}) is NP-Hard (see~\cite{1318579}).

It is well known that an index to measure the connection of a graph is the efficiency of the networks (see~\cite{PhysRevLett}). High connectivity of the graph indicates high efficiency. In~\cite{ronaldranking} the authors show that the DIL-W $^{\alpha}$ ranking provides good results, regarding the rate of decline in network efficiency (more details in~\cite{ren2013node}), when it comes to eliminating the nodes best positioned by this ranking. One of the good qualities of the DIL-W$^\alpha$ ranking is that it recognizes the importance of bridge nodes (see more in~\cite{musial}). This quality is inherited from the version of the DIL ranking for graphs not weighted at the edges (see~\cite{liu2016evaluating}).  For~this reason, we have chosen this ranking to evaluate its effectiveness in the immunization of nodes that are attacked by an infectious disease that spreads on an edge-weighted~graph.

From the results in~\cite{lai2004}, we have that 5\% to 10\% of important nodes can cause the entire network to fail. According to the latter, our protection budget $k$ will be 10\% of the network~nodes. 

Finally, in~order to illustrate in a simple way why the DIL-W$^{\alpha}$ ranking has been chosen, let us consider the graph of Figure \ref{free_toydil} with 16 edges, 15 nodes, and the respective weights on the edges. When we apply the DIL-W$^1$ ranking, the~first $3$ places are occupied by nodes 3, 5 and 1, respectively. These nodes are precise bridge nodes and, when protecting them, according to Definition \ref{def_proc}, the~graph loses connectivity (see Figure \ref{free_toy2}). If~we apply the Strength ranking, the~first three places are occupied by nodes $3$, $1$ and $4$, respectively. Note that the order in which it positions the nodes and the importance it gives to node $4$ makes the loss of network connectivity lower than the loss when applying DIL-W$^1$ (see Figure \ref {free_toy2}).

\begin{figure}[h]
{\includegraphics[width=0.55\textwidth]{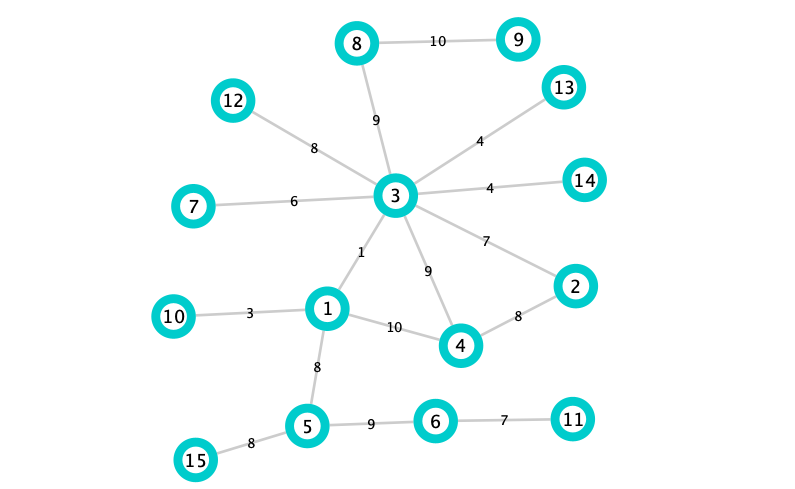}}
\caption{Graph with 15 nodes, 16 edges, and the respective weights.} \label{free_toydil}
\end{figure}
\begin{figure}[h]
\subfigure
{\includegraphics[width=0.49\textwidth]{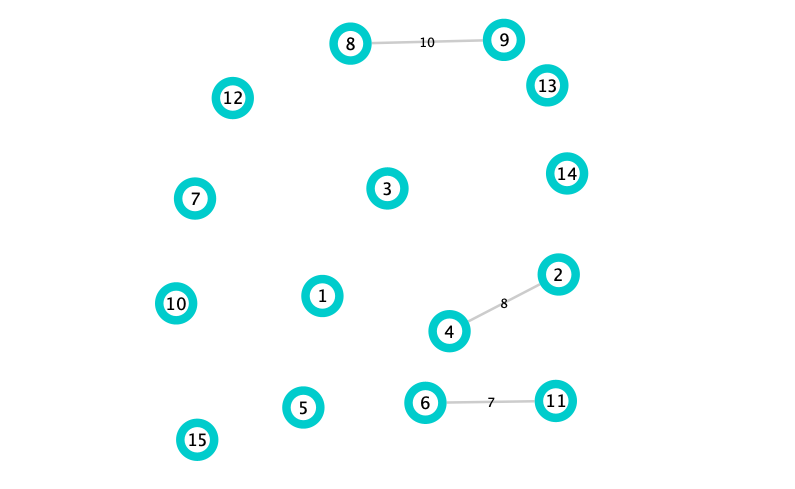}}
\subfigure
{\includegraphics[width=0.49\textwidth]{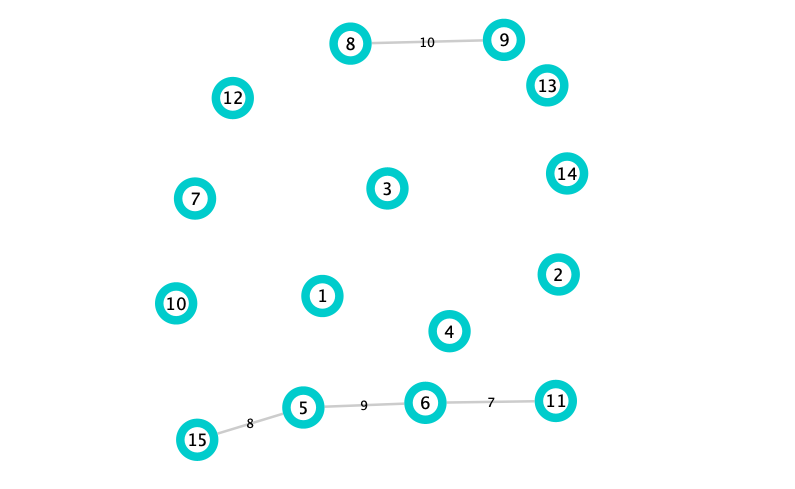}}
\caption{In the left, the first 3 protected places generated by the DIL-W$^1$ ranking. On the right, the first 3 protected places generated by Strength ranking.} \label{free_toy2}
\end{figure}
\clearpage

In summary, we protect the ten percent of the network nodes according to the importance ranking list produced by DIL-W$^{\alpha}$.


\section{Simulation of Disease Spread with Protection and Analysis~Results}\label{simulation}
\subsection{Data and~Methodology}
The protection tests are carried out on 4 networks, these are: Zacharys karate club network~\cite{zachary} (with 34 nodes and 78 edges), wild bird network~\cite{firth2015experimental} (with 131 nodes and 1444~edges), Sandy authors network~\cite{imrich2000product} (with 86 nodes and 124 edges), and~CAG-mat72 (with 72 nodes and 678 edges)
where each network will have the immunization corresponding to the 10\% of its highest ranked nodes according to the DIL-W$^{\alpha}$ ranking~\cite{ronaldranking}, strength of the node~\cite{barrat2004architecture}, weighted betweenness centrality~\cite{brandes2001faster}, weighted closeness centrality~\cite{newman2001scientific}, and~Laplacian centrality for undirected and edge-weighted graph~\cite{qi2012laplacian}. For~the ranking DIL-W$^{\alpha}$, we consider $3$ different values for $\alpha$, they are $0$, $0.5$ and $1$. In~this way, we compare how the protection performs according to the value of $\alpha $. Finally, the~databases of our test networks can be found in~\cite{nr-aaai15}.
\subsection{Simulation of~Disease}
In this section, we use a graph-based SIR model in the same way as in~\cite{ronald1}, that is, each individual is represented by a vertex in an edge-weighted graph. At~time $t$, each vertex $v_i$ is in a state $v_i^t$ belonging to $\mathcal{S}=\{0,1,-1\}$, where $0,1$ and $-1$  represent the three discrete states:  susceptible (S), infected (I) and recovered (R). 

The initial population contains one infected node and all the simulations considering $\displaystyle\delta=\frac{1}{15}$. For~each network, it was consider a different $\rho$, so in each case the percentage of population affected is similar. Table~\ref{tabla0} shows the different $\rho$ for each test~network. 

\begin{table}[h]
\caption{Different $\rho$ for each test~network.} 
\begin{tabular}{lcccc}
\toprule
                & \textbf{Zachary Karate Club} & \textbf{Wild Bird} & \textbf{Sandy Authors} & \textbf{CAG-mat72} \\ \midrule
\textbf{$\rho$} & $0.15$                       & $0.23$             & $0.12$                 & $0.0052$           \\ \bottomrule
\end{tabular}
\label{tabla0}
\end{table}

Finally,  2000 simulations on each network of disease were performed. 
Figure~\ref{simulacion1} shows the results.

In Sandy authors and CAG-mat72 networks the DIL-W$^{0}$ ranking performs the best with respect to the minimum peak of the infected curve, respectively. Furthermore, DIL-W$^{0.5}$ ranking comes second in both networks. The~Strength and Laplacian rankings perform the best with respect to the minimum peak of the infected curve in Zachary karate club network. The~DIL-W$^{0.5}$ and DIL-W$^{1}$ rankings are in fourth and fifth place respectively. However, the~computational complexity of Laplacian and Strength is $O(n\cdot \max_{v\in V}(deg(v))$ and $O(m)$ respectively, while DIL-W$^{\alpha}$ has a computational complexity of $O(n\cdot< k>)$ (see~\cite{liu2016evaluating} or~\cite{ronaldranking}), where $n$ is the total number of vertices of the graph, $m$ is the total number of edges of the graph, and~$<k>$ is the average degree of vertices in the graph. In~Wild bird network,  DIL-W$^{1}$ performs better in the same~criterion. \\
On the other hand, regarding the decrease of the infected curve, DIL-W$^{\alpha}$ performs the best in Wild bird, Sandy authors, and~CAG-mat72 networks. The~Laplacian is better on the Zachary network. 
The above is summarized in the survival rate. In~Table~\ref{tabla1}, we can see the survival rates obtained according to the protections applied to each~network.

\begin{table}[h]
\caption{Survival rate in each network according to the protection ranking with a protection budget of 10\% of the network~nodes.}
\label{tabla1}
\begin{tabular}{lcccc}
\toprule
\multicolumn{1}{c}{} & \textbf{Zachary} & \textbf{Wild Birds} & \textbf{Sandy Authors} & \textbf{CAG-mat72} \\ \midrule
\textbf{DIL-W
$^0$}     & 0.8946 & 0.7871 & 0.9671 & 0.8947 \\
\textbf{DIL-W$^{0.5}$} & 0.8912 & 0.7943 & 0.9637 & 0.8887 \\
\textbf{DIL-W$^1$}     & 0.8941 & 0.7375 & 0.9613 & 0.8792 \\
\textbf{Strength}      & 0.9051 & 0.7041 & 0.9536 & 0.8685 \\
\textbf{Closeness}     & 0.8211 & 0.7327 & 0.9435 & 0.8451 \\
\textbf{Betweenness}   & 0.8890 & 0.6952 & 0.9616 & 0.8078 \\
\textbf{Laplacian}     & 0.9016 & 0.7023 & 0.9382 & 0.8707 \\ \bottomrule
\end{tabular}
\end{table}
\unskip

\begin{figure}[h]
\centering
\subfigure
{\includegraphics[width=0.49\textwidth]{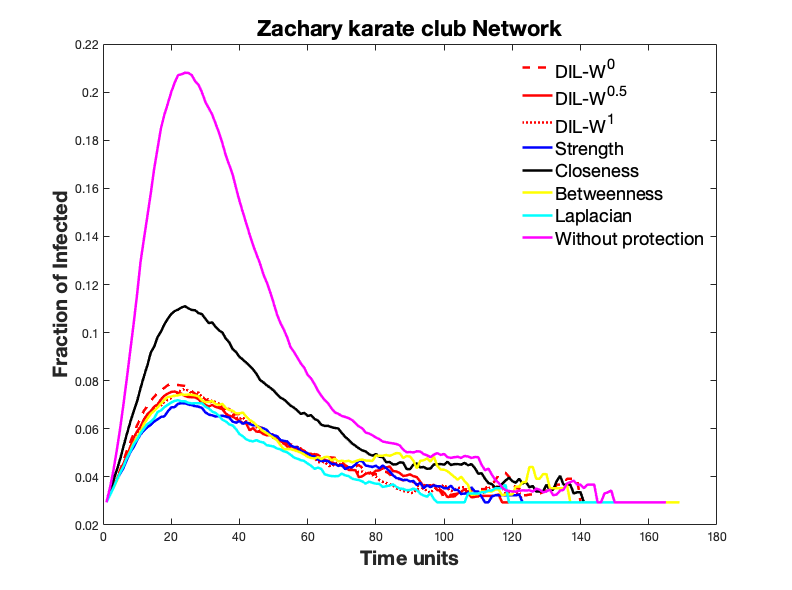}}
\subfigure
{\includegraphics[width=0.49\textwidth]{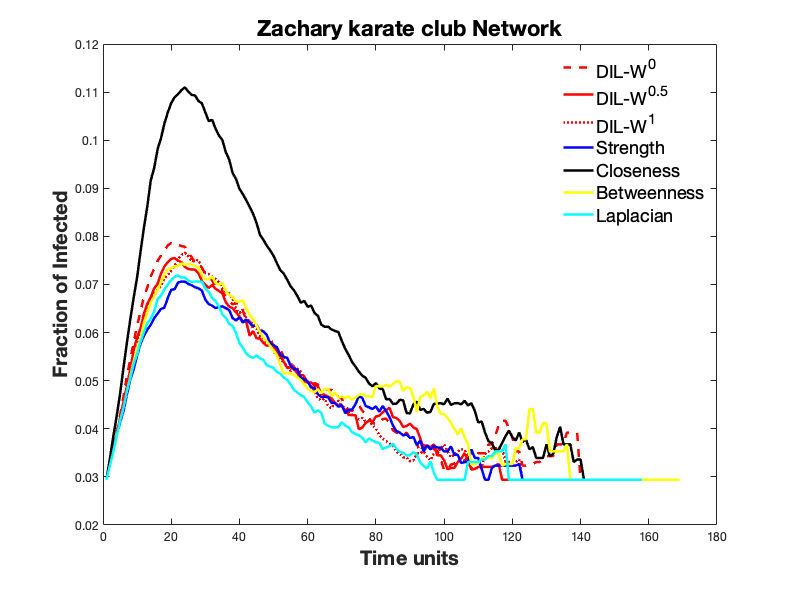}}
\subfigure
{\includegraphics[width=0.49\textwidth]{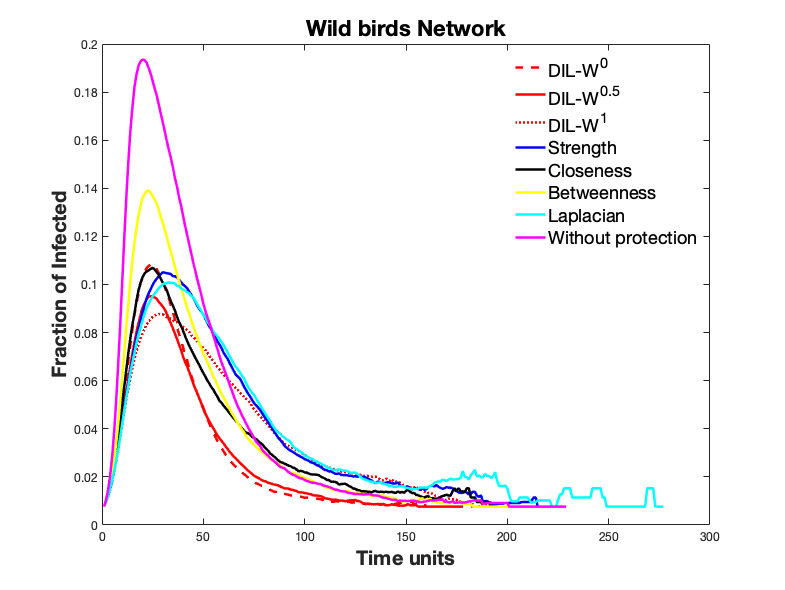}}
\subfigure
{\includegraphics[width=0.49\textwidth]{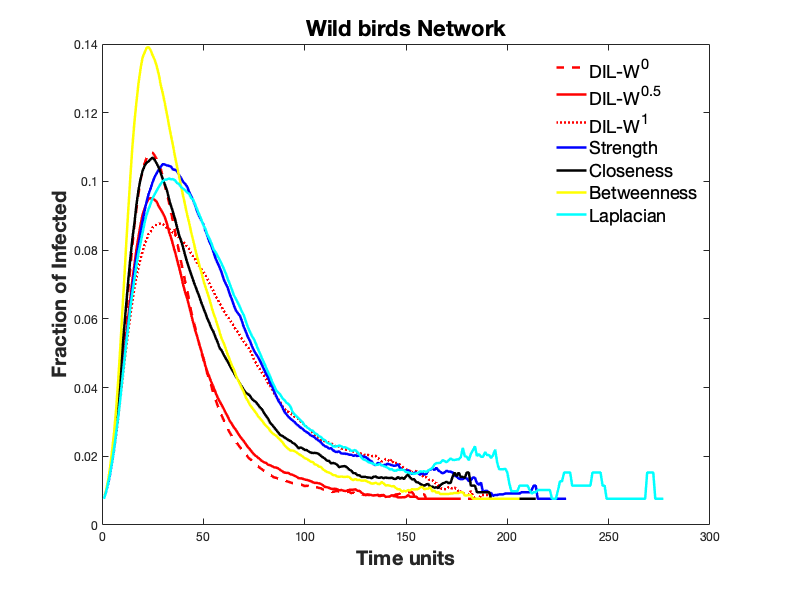}}
\subfigure
{\includegraphics[width=0.49\textwidth]{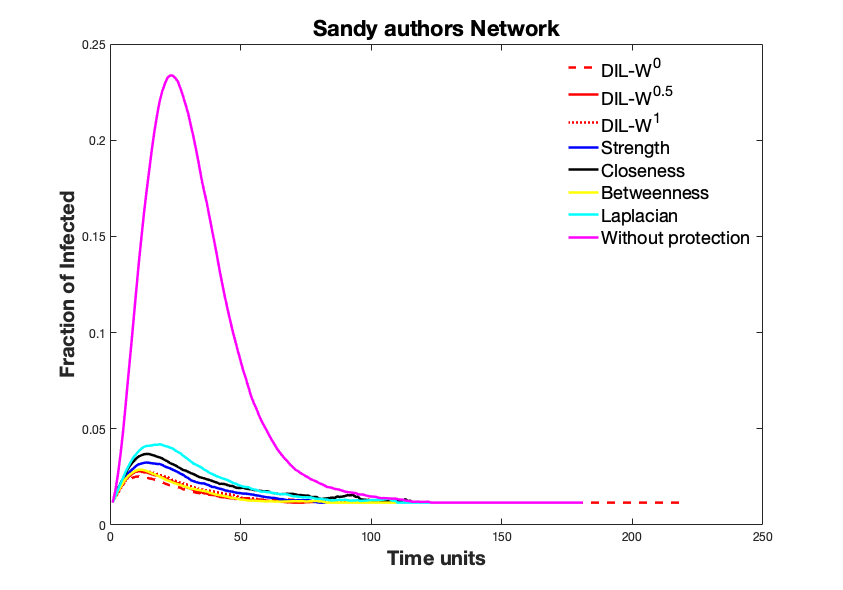}}
\subfigure
{\includegraphics[width=0.49\textwidth]{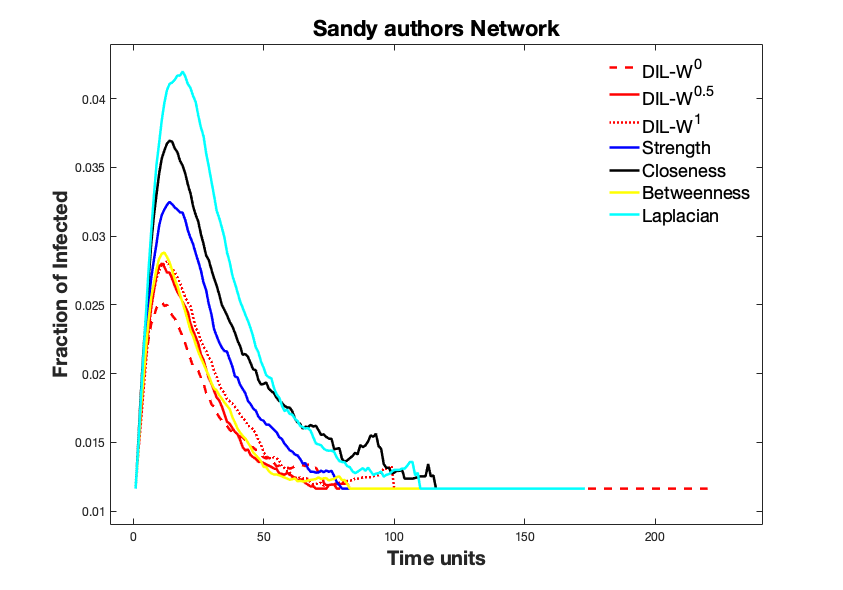}}
\subfigure
{\includegraphics[width=0.49\textwidth]{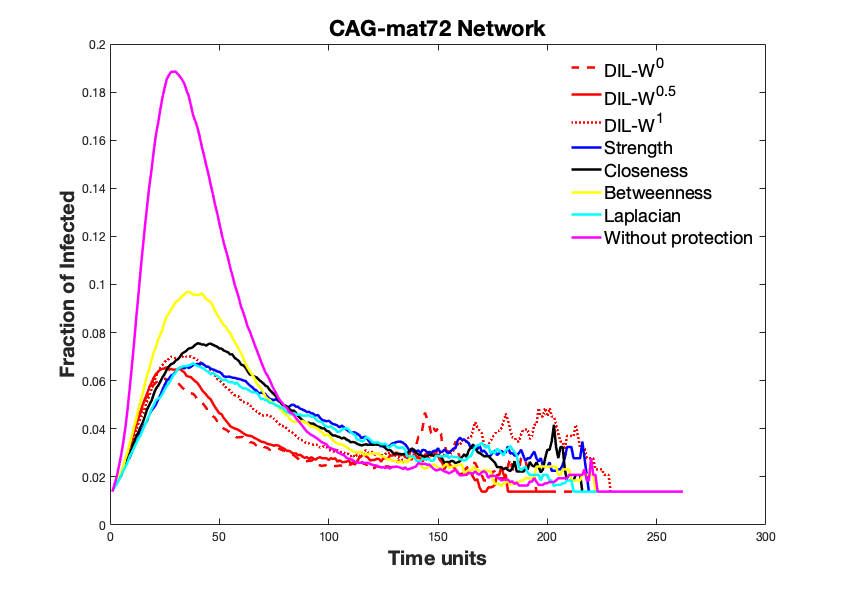}}
\subfigure
{\includegraphics[width=0.49\textwidth]{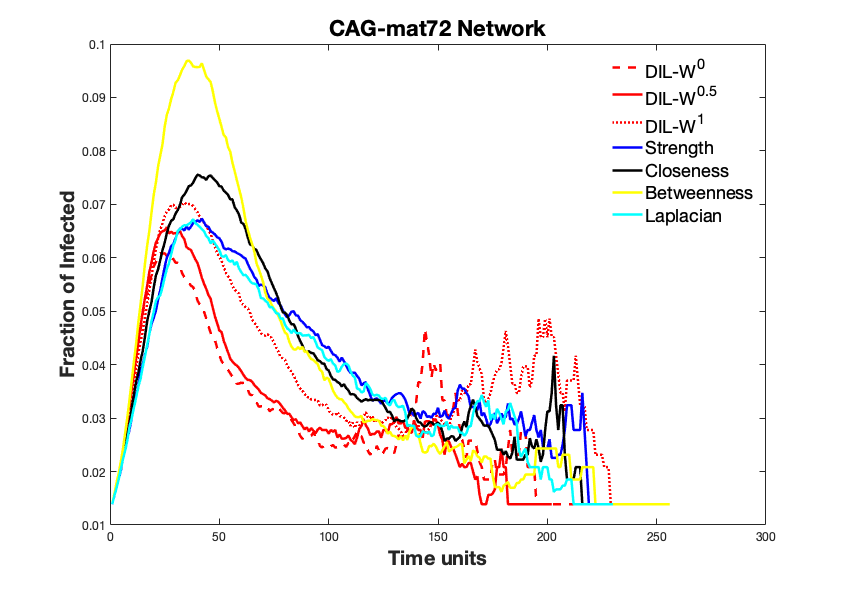}}
\caption{On the \textbf{right}, curve of infected with and without protection on each network. On~the \textbf{left}, only curves of infected with protection according to the different~rankings.} \label{simulacion1}
\end{figure}
\clearpage
\subsection{Survival~Rate}
In this Section, we address the variation in the survival rate ($\sigma$) when the protection budget ($k$) is modified. As~we have seen in the above Section, $k$ is the 10\% of the network nodes. The~variation of $k$ is between $5$ and $50$ percent of the network nodes. On~each test network and each protection budget 2000 simulations were done. Figure~\ref{survivalnetwork} shows the~results. 

\begin{figure}[h!]
\centering
\subfigure
{\includegraphics[width=0.49\textwidth]{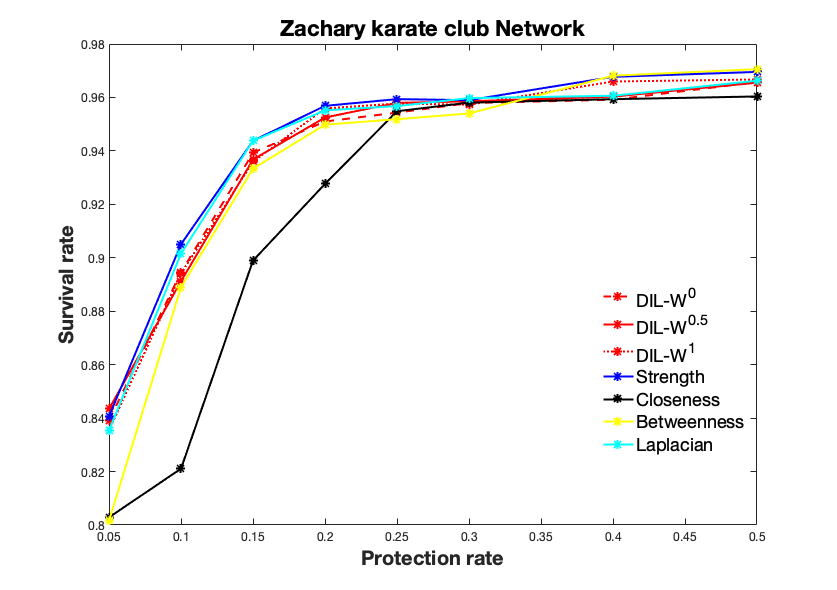}}
\subfigure
{\includegraphics[width=0.49\textwidth]{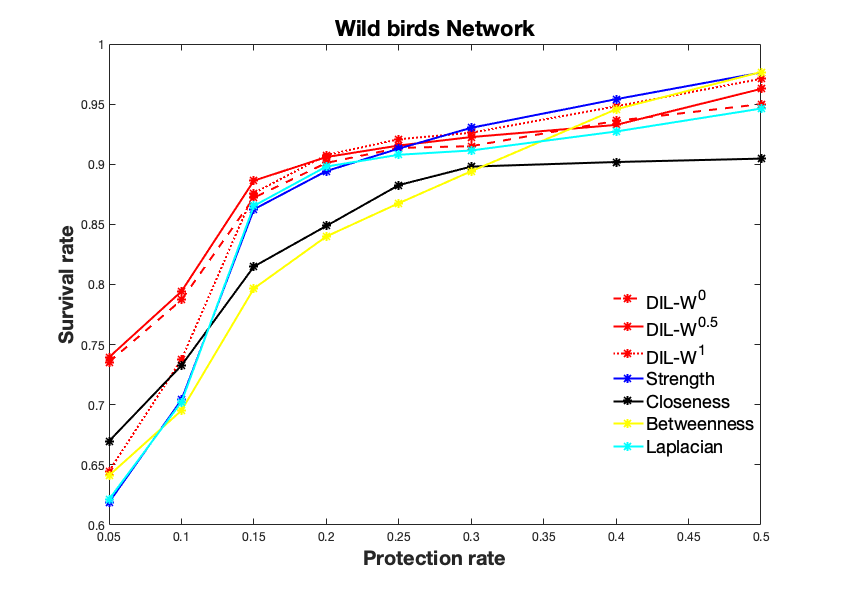}}
\subfigure
{\includegraphics[width=0.49\textwidth]{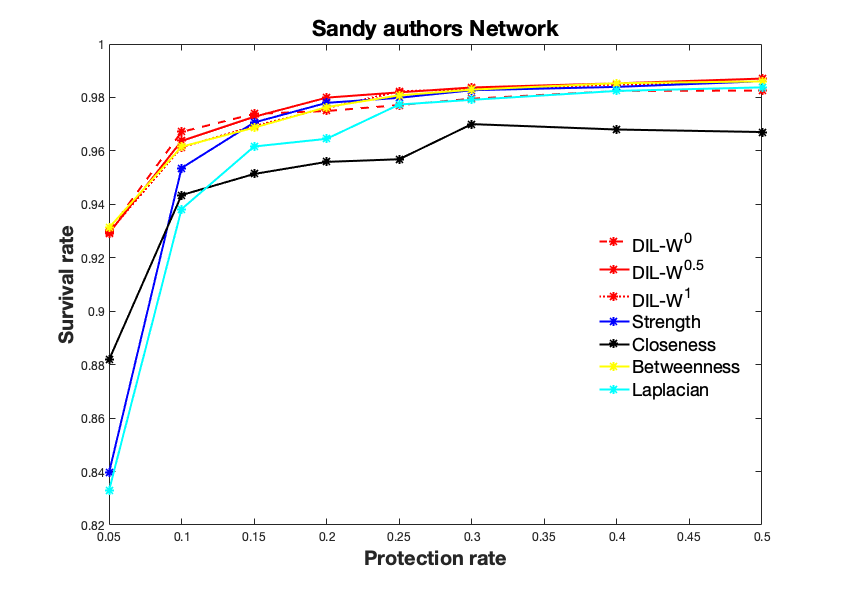}}
\subfigure
{\includegraphics[width=0.49\textwidth]{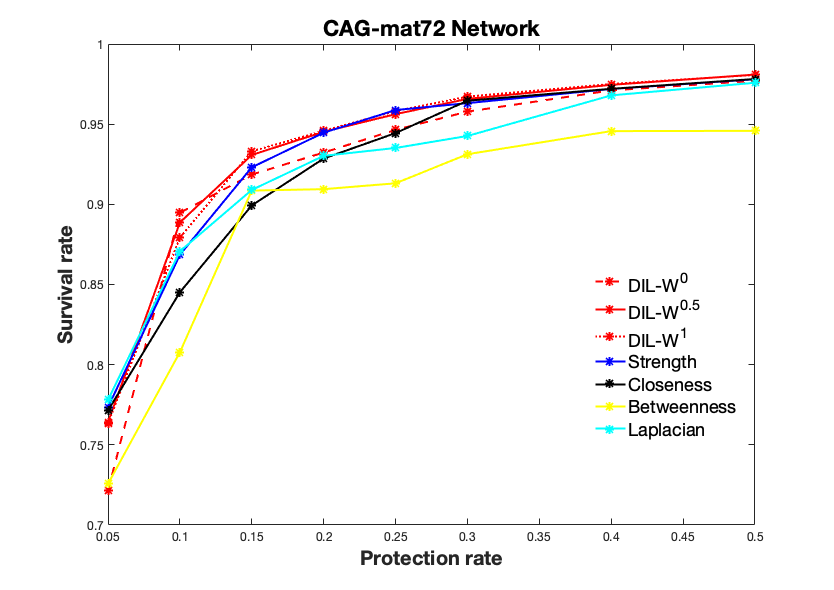}}
\caption{$\sigma$ as a function of $k$ on each~network.} \label{survivalnetwork}
\end{figure}



We can see that, at~the beginning (that is considering $k=5\%$), DIL-W$^{\alpha}$ is always among the first 3 places with the highest survival rate in Wild bird, Zachary karate club, and~Sandy authors networks. Even in the CAG-mat72 network it ranks fourth, very close to the~tops.

With the 10\% protection budget, DIL-W$^0$, DIL-W$^{0.5}$, and~DIL-W$^1$ perform best with the top three spots on Wild birds and  CAG-mat72 networks. It similarly happens in Sandy-authors network since DIL-W$^1$ has a very close rate to that generated by the betweenness ranking ($0.9613$ and $0.9616$ respectively). In~Zachary karate club network  the strength ranking performs better with this protection budget. The~DIL-W$^0$ ranking occupies the third position behind Laplacian ranking who performs second. However, DIL-W$^0$ has minor computational complexity than~Laplacian.   

Finally, with~50\% of the network nodes as the protection budget, in~the CAG-mat72 network DIL-W$^0$ and DIL-W$^1$ rankings are better performed. In~the Wild-birds network, the~betweenness and strength  rankings finish with the best performance respectively. The~DIL-W$^1$ ranking follows. In~the Sandy authors network, DIL-W$^{0.5}$ performs better while in the Zachary network, betweenness and strength displace DIL-W$^1$ to third place. Tables~\ref{table_zachary}--\ref{table_cag} show in detail the survival rates according to the protection budget in each test network (see Appendix \ref{apendi}).
{\subsection{Scale-Free~Network}
The scale-free networks are networks whose distribution degree follows a power law distribution with an exponent between 2 and 3. The~study of epidemics and disease dynamics on scale-free networks is a relevant theoretical issue~\cite{liu_2011}, because~this networks are a model for the spread of sexually transmitted diseases (see for instance~\cite{Mao_Xing_2009}) or a model to explain the early spread of COVID-19 in China (see~\cite{song2020massive}). Many works that address this type of network in spread of disease can be found in the literature. See for instance~\cite{schneeberger2004scale, moreno2003disease,yang2019dynamics,ke2006immunization}}.

Following the previous structure, we analyze the protection on a Scale-free network using the model proposed by Albert-L\'aszl\'o Barab\'asi and R\'eka Albert in~\cite{barabasi_emergence_1999}. The~variables size of the graph ($N$) and the average degree of the vertices ($d$) were studied with respect to the survival rate. The~weights are uniformly distributed between $1$ and $10$. Finally,  5000~simulations on each network were done, considering $\rho=0.011$, $\displaystyle\delta=\frac{1}{14}$, and~the initial population containing one infected vertex.

In the same direction as before, we analyze the variation of $\sigma$ according to the change of $k$. For~that, we fix $d=5$ and $N=100$.
The results obtained are the same as those obtained in the previous Section, this is to say, the~ranking DIL-W$^{\alpha}$ performs the best. Indeed, with~$k=5\%$ the best is DIL-W$^{1}$. For~the following protection values, DIL-W$^{0}$ performs the best very close to Strength ranking (see Figure~\ref{free0}). In~Appendix \ref{apendi}, Table~\ref{tab_free} shows in detail $\sigma$ according to $k$.

\begin{figure}[h!]
\centering
{\includegraphics[width=0.6\textwidth]{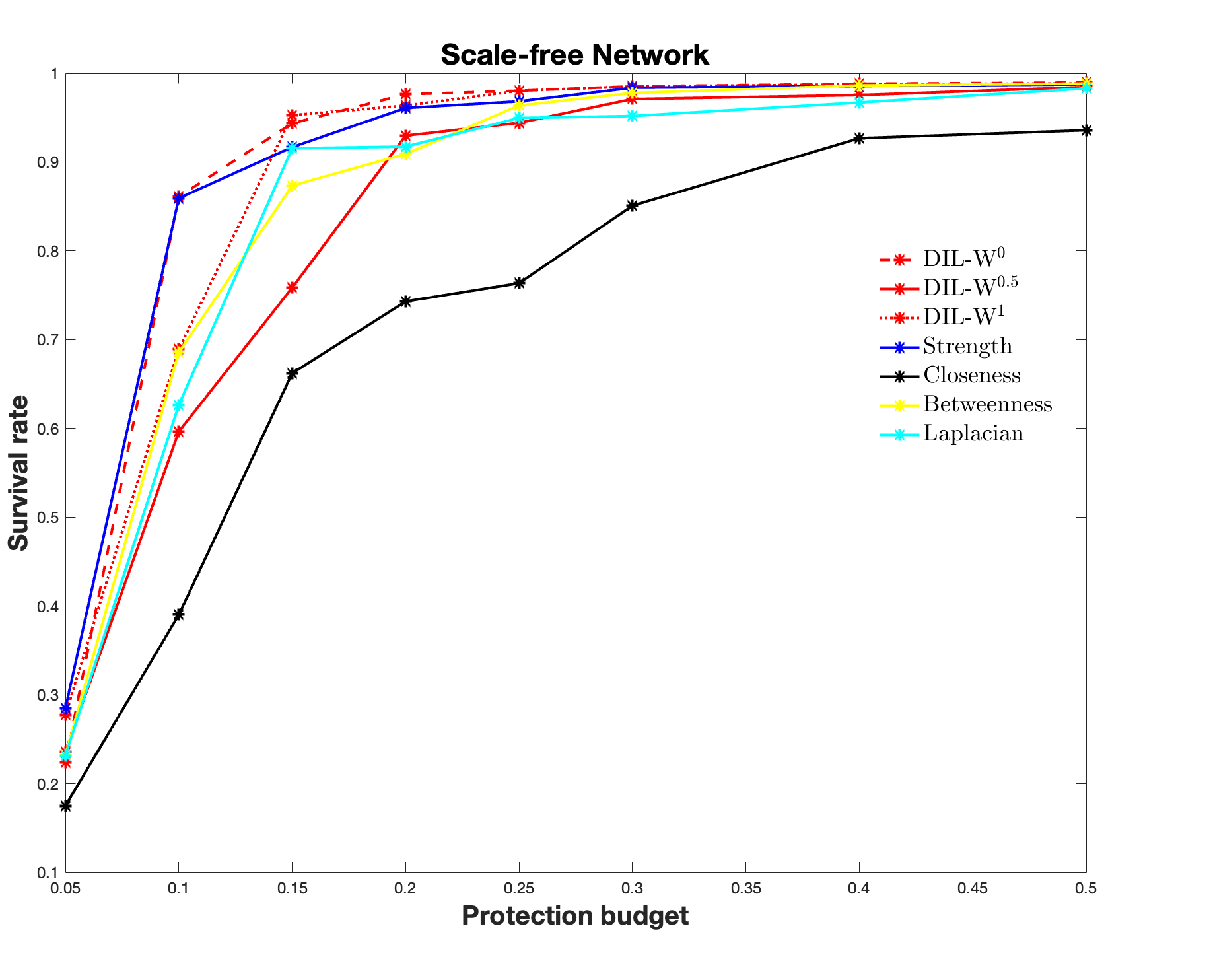}}
\caption{$\sigma$ as a function of $k$ on a Scale-free network with $N=100$ and $d=5$.} \label{free0}
\end{figure}

In the variation of the network size, we fix $d=5$ and $k=10\%$ of the network nodes. The~DIL-W$^{\alpha}$ ranking performs the best as well as the Strength ranking when $\alpha=1$ or $\alpha=0$. It is clear that the survival rate is increasing along with the size of the network, when we consider DIL-W$^{\alpha}$. Figure~\ref{free1} shows $\sigma$ as a function of the size of the network.

\begin{figure}[h!]
\centering
{\includegraphics[width=0.6\textwidth]{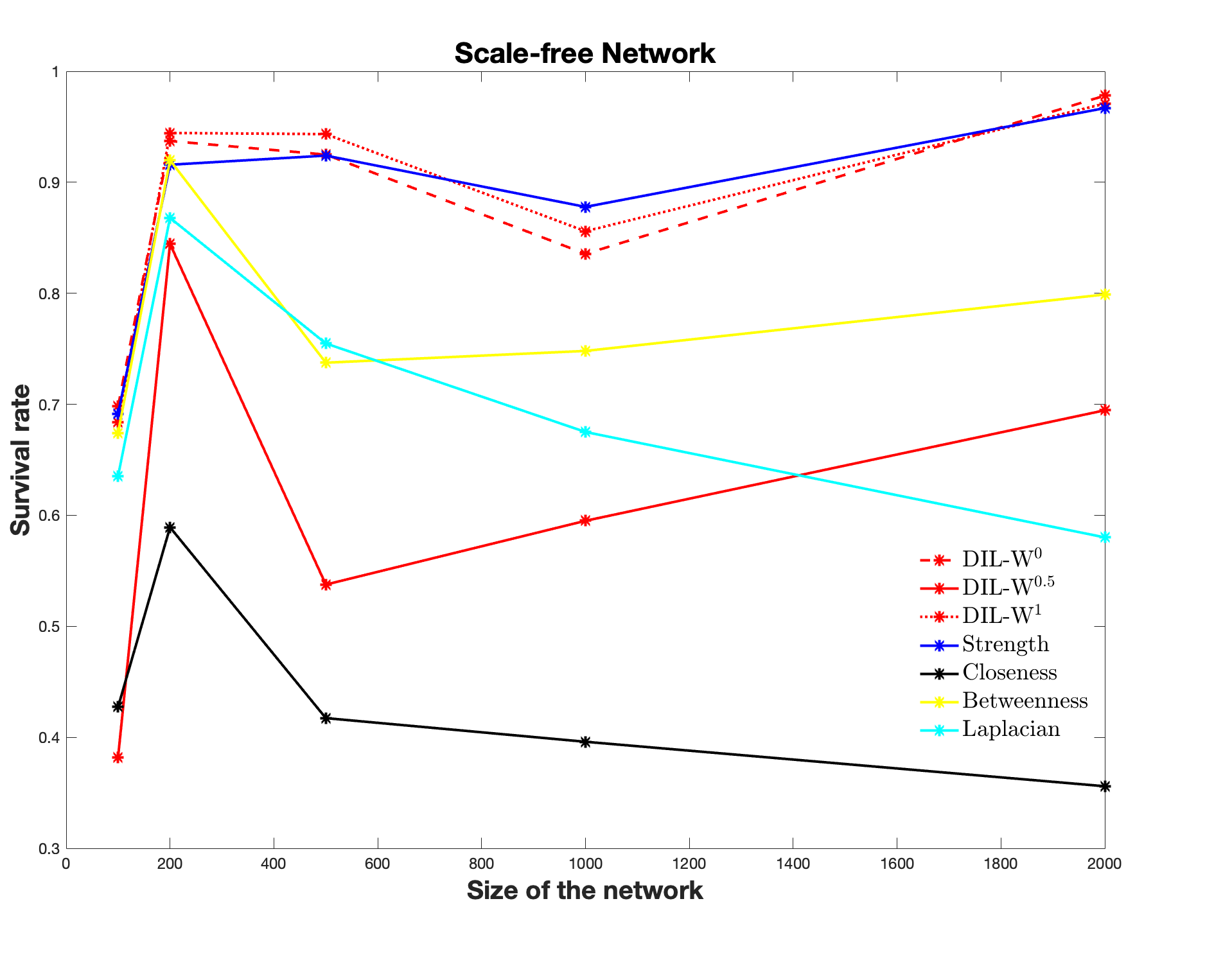}}
\caption{$\sigma$ as a function of $N$ on a Scale-free network with $d=5$.} \label{free1}
\end{figure}

{When $\sigma$ is a function of the average contact, we fix $N=100$ and $k=10\%$ of the network nodes. The~best performance is for the DIL-W$^{1}$ ranking up to the average of 30~contacts, that is, 30\% of the total network nodes. The~survival rate is decreasing because the higher the average number of contacts, the~greater the probability of contagion. \mbox{Figure~\ref{free2}} shows the results. (See Appendix \ref{apendi}, Table~\ref{tab_free2},  to~see in detail $\sigma$ according to $d$.)}

\begin{figure}[h!]
\centering
{\includegraphics[width=0.6\textwidth]{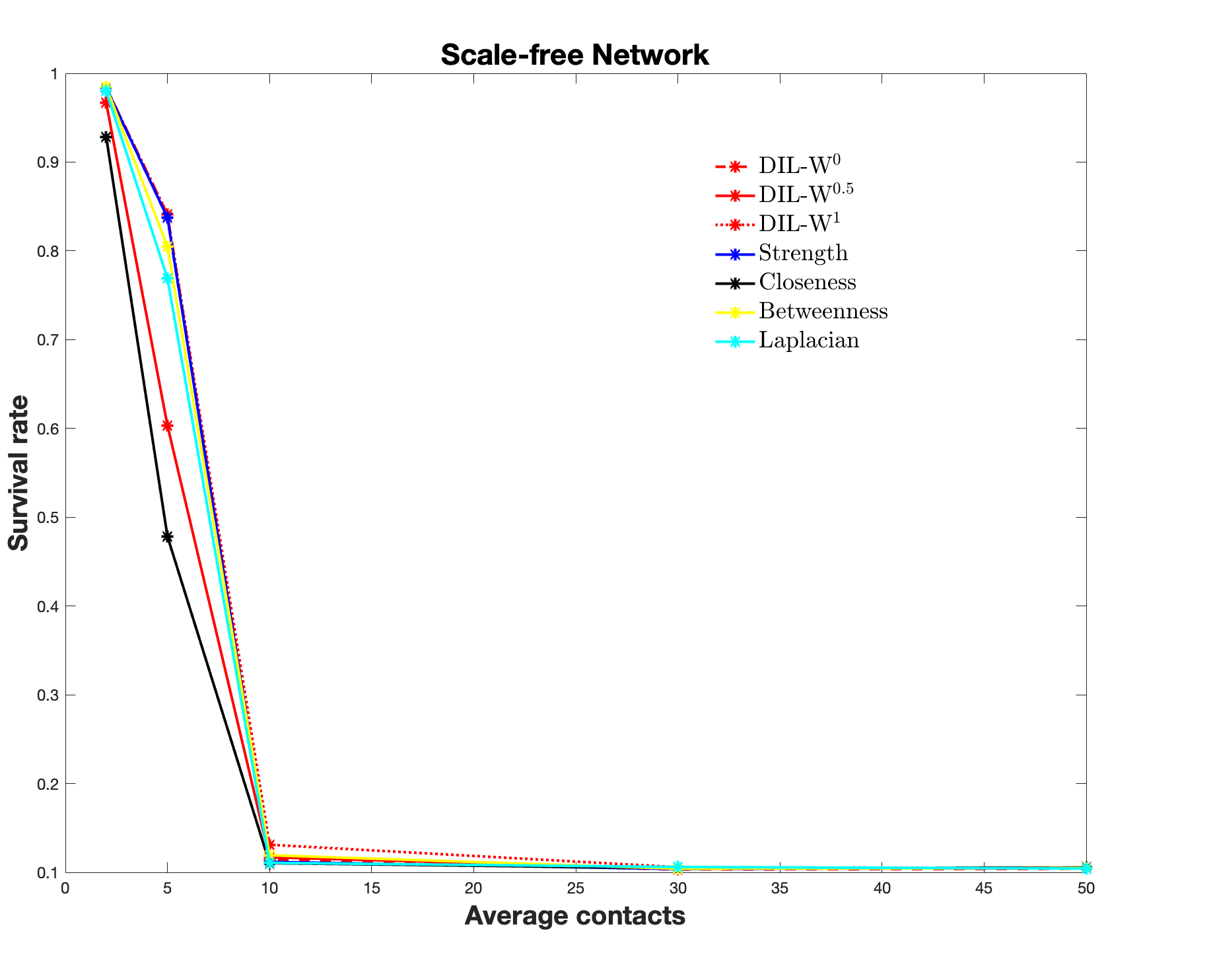}}
\caption{$\sigma$ as a function of $d$ on a Scale-free~network.} \label{free2}
\end{figure}

\begin{Remark}
If the weights are uniformly distributed between $\epsilon$ and $1$, with~$\epsilon$ close to $0$, then the results are equals to the above.
\end{Remark}

\section{Conclusions}\label{conclusionpro}

In this chapter, we evaluate the effectiveness of the DIL-W$^{\alpha}$ ranking in the immunization of nodes that are attacked by an infectious disease that spreads on an edge-weighted graph using a graph-based SIR model. This protection strategy shows better survival rates in 3 of the four networks that we have considered, conditioned to the protection budget equal to 10\% of the network nodes. In~the case where DIL-W$^{\alpha}$ does not perform better than the other strategies, it is still a good alternative due to its lower computational~complexity.

When we modify the protection budget, the~DIL-W$^{\alpha} $ ranking continues to show itself as one of the 3 main immunization strategies with the best survival rate. Even for more than a value of $\alpha$. Again, when it is not in the top places, with~respect to performance, its computational complexity makes DIL-W$^{\alpha} $ a good alternative to choose over the~others.

{In the case of scale-free networks, the~DIL-W$^{\alpha}$ ranking maintains the trend of obtaining better survival rates than the other compared rankings when the protection budget is modified. The~size of the network does not affect this result, because~DIL-W$^{\alpha}$ reaches the first place with $\alpha=0$ or $\alpha=1$. Something similar happens when modifying the average number of contacts, DIL-W$^{1}$ has the best performance up to 30\% of total nodes.
}

An interesting and complex task to solve is to determine which value of $\alpha $ will be chosen in certain networks so that the ranking generated would be the optimal one. Not always the same value makes the best performance. However, when considering this method, there are as many rankings as there are numbers between 0 and~1.

\chapter[Protection Strategy Applied to a Covid-19 Case]{Protection Strategy against an Epidemic Disease on Edge-Weighted Graphs Applied to a Covid-19 Case}\label{covid}
\minitoc

\section{Introduction}

Infectious diseases have been the focus of multiple fields of research. In public health and epidemiology, efforts are directed at establishing transmission dynamics, the characteristics of infectious agents, and the populations most affected by pathogens, among others, which are of high importance for science \cite{nature}. In recent decades, research on infectious diseases has involved the application of complex theories from mathematics and engineering. In particular, the use of network models has allowed explanations of the spread of diseases from infected people (nodes) and their links with others (edges) \cite{lloyd2007network}.

Network models establish the connection between population groups, which is useful not only in the field of public health or epidemiology, but also in engineering and social sciences \cite{haslbeck2018well} (see \cite{an2014synchronization,montenegro2019linear,guangzeng2009novel,mersch2013tracking,firth2015experimental}). In the health field, being a theoretical approach, the importance of recognizing the complexities of community structures has been discussed in order to understand social dynamics in the spread of infectious diseases \cite{wu2008community}.  For example, Magelinski et al. developed a model to estimate the role played by certain nodes in community structures \cite{magelinski2021measuring}, while Ghalmane et al. included the dimension of centrality in complex networks \cite{ghalmane2019centrality}.

Among the diverse and important applications that networks currently have is the modeling of infectious diseases. Immunization, or the process of protecting nodes in the network, plays a key role in stopping diseases from spreading. Hence the importance of having tools or strategies that allow the solving of this challenge. 

This study aims to analyze the protection effect against COVID-19 using the DIL-W$^{\alpha}$ ranking with real data from a city in Chile (Olmu\'e-City), obtained from the Epidemiological Surveillance System of the Ministry of Health of Chile, from which we obtain an edge-weighted graph, denoted by $G_{\mathcal{E}}$, according to the method proposed in \cite{ronald1}. We apply the protection to the $G_{\mathcal{E}}$ network, according to the importance ranking list produced by DIL-W$^{\alpha}$, considering different protection budgets. For the ranking DIL-W$^{\alpha}$, we consider three different values for $\alpha$; they are $0$, $0.5$ and $1$. In this way, we compare how the protection performs according to the value of $\alpha$. We use a graph-based SIR model, namely, each individual is represented by a vertex in $G_{\mathcal{E}}$. At time $t$, each vertex $v_i$ is in a state $v_i^t$ belonging to $\mathcal{S}=\{0,1,-1\}$, where $0,1$ and $-1$ represent the three discrete states:  Susceptible (S), Infected (I) and Recovered or Removed (R). Five hundred simulations were performed on $G_{\mathcal{E}}$; the initial population contains one infected node and all the simulations, considering $\displaystyle\delta=\frac{1}{15}$ (recovered rate).

\section{Method}\label{meth}
The data that are modeled correspond to the city of Olmu\'e (Valpara\'iso {region}, Chile) and were obtained from the database of the Epidemiological Surveillance System of the Ministry of Health of Chile, which included the notified cases (positive or negative) and their contacts from 3 March  2020 to 15 January  2021 with a total of 3866 registered persons. 

We denote by $\mathcal{E}_{pi}$ the database of the Epidemiological Surveillance System of the Ministry of Health of Chile. From the total of variables included in $\mathcal{E}_{pi}$ ($K=279$) 7 of them are relationship variables ($K_1=7$). They are: full address ($\mathcal{X}_1$); the street where the people live ($\mathcal{X}_2$); town ($\mathcal{X}_3$); place of work ($\mathcal{X}_4$); workplace section ($\mathcal{X}_5$); health facility where they were treated ($\mathcal{X}_6$) and the region of the country where the test was taken to confirm, or not, the contagion ($\mathcal{X}_7$).

In our criteria, the hierarchical order of the seven variables in descending form is  $\mathcal{X}_1, \mathcal{X}_2, \mathcal{X}_3, \mathcal{X}_4, \mathcal{X}_5, \mathcal{X}_6, \mathcal{X}_7$. Moreover, we consider that the variables $\mathcal{X}_1, \mathcal{X}_2 $ and  $\mathcal{X}_3$ have the same weight. In the same way, we also consider the variables $\mathcal{X}_4$ and $\mathcal{X}_5$ with equal weight. Hence,
$A_1=\{\mathcal{X}_1, \mathcal{X}_2, \mathcal{X}_3\}$, $A_2=\{\mathcal{X}_4, \mathcal{X}_5\}$, $A_3=\{\mathcal{X}_6\}$ and $A_4=\{\mathcal{X}_7\}$  are the different classes that are defined by the different weights. Hence, by Definition \ref{weightvariable}.
$$p_1=\frac{3}{7},\quad p_2=\frac{2}{7},\quad p_3=\frac{1}{7},\quad p_4=\frac{1}{7}.$$

Figure \ref{grapholmue} shows the obtained graph. 

\begin{figure}[h]
\centering
\includegraphics[width=0.9\textwidth]{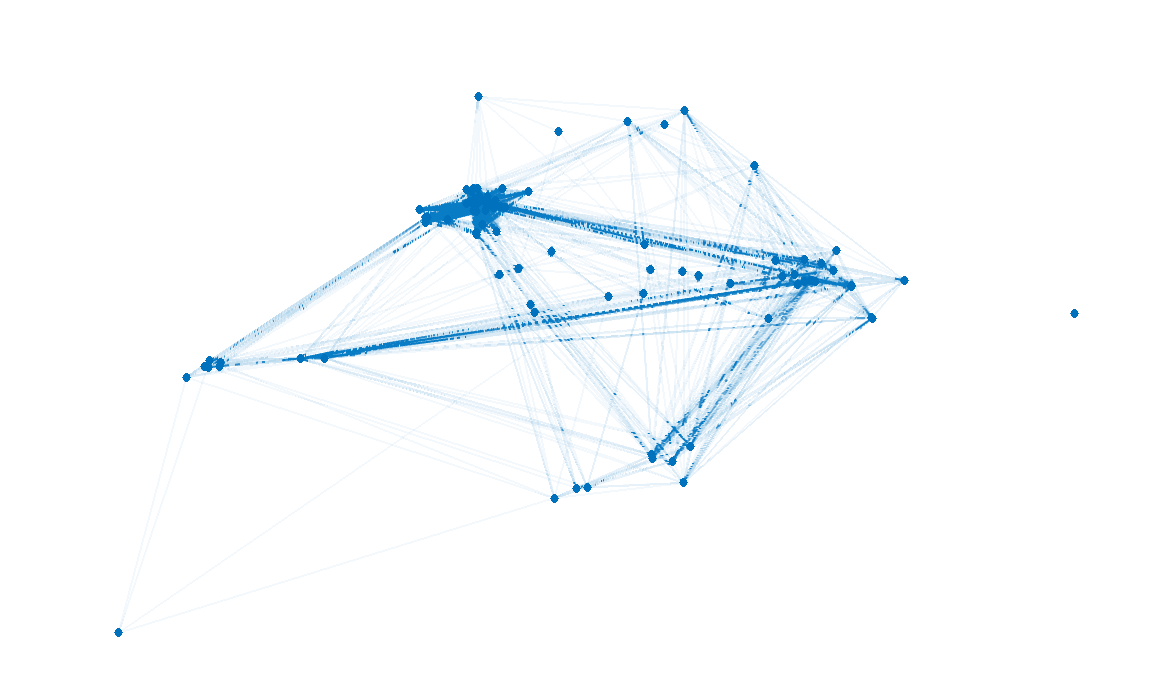}
\caption{Graph obtained from database of Olmu\'e city, Chile, with  $3866$ vertices and 6,841,470 edges.} \label{grapholmue}
\end{figure}

Let us denote by $G_{\mathcal{E}}$ the graph obtained from database $\mathcal{E}_{pi}$.
\newpage
\section{Protection strategy}
Our chosen protection strategy corresponds to the DIL-W$^{\alpha}$ ranking (\cite{ronaldranking}) described in Chapter \ref{pro}.
Furthermore, \cite{ronald2} evaluated the effectiveness of the DIL-W$^{\alpha}$ ranking in the immunization of nodes that are attacked by an infectious disease that spreads on an edge-weighted graph using a graph-based SIR model.
In summary, we apply the protection to the $G_{\mathcal{E}}$ network, according to the importance ranking list produced by DIL-W$^{\alpha}$, considering different protection budgets. For the ranking DIL-W$^{\alpha}$, we consider three different values for $\alpha$; they are $0$, $0.5$ and $1$. In this way, we compare how the protection performs according to the value of $\alpha$.

\section{Results}\label{results}
In this section, we use a graph-based SIR model in the same way as in \cite{ronald1,ronald2}, namely, each individual is represented by a vertex in $G_{\mathcal{E}}$. At time $t$, each vertex $v_i$ is in a state $v_i^t$ belonging to $\mathcal{S}=\{0,1,-1\}$, where $0,1$ and $-1$  represent the three discrete states:  Susceptible (S), Infected (I) and Recovered or Removed (R).  

Moreover, we assume that the disease is present for a certain period of time and that, when individuals recover, they are immune, {that is, reinfection is not considered.}

The initial population contains one infected node and all the simulations that consider $\displaystyle\delta=\frac{1}{15}$. 
Five hundred simulations were performed on $G_{\mathcal{E}}$ with $\rho=0.00121$. Figure \ref{ajus1} shows the average infected curve and the real infected data in $\mathcal{E}_{pi}$. Moreover, it shows a curve fitted to the data following the SIR model; for this, we used the classic method of least squares to compare with our proposal.

\begin{figure}[h]
\begin{center}
    {\includegraphics[width=0.9\textwidth]{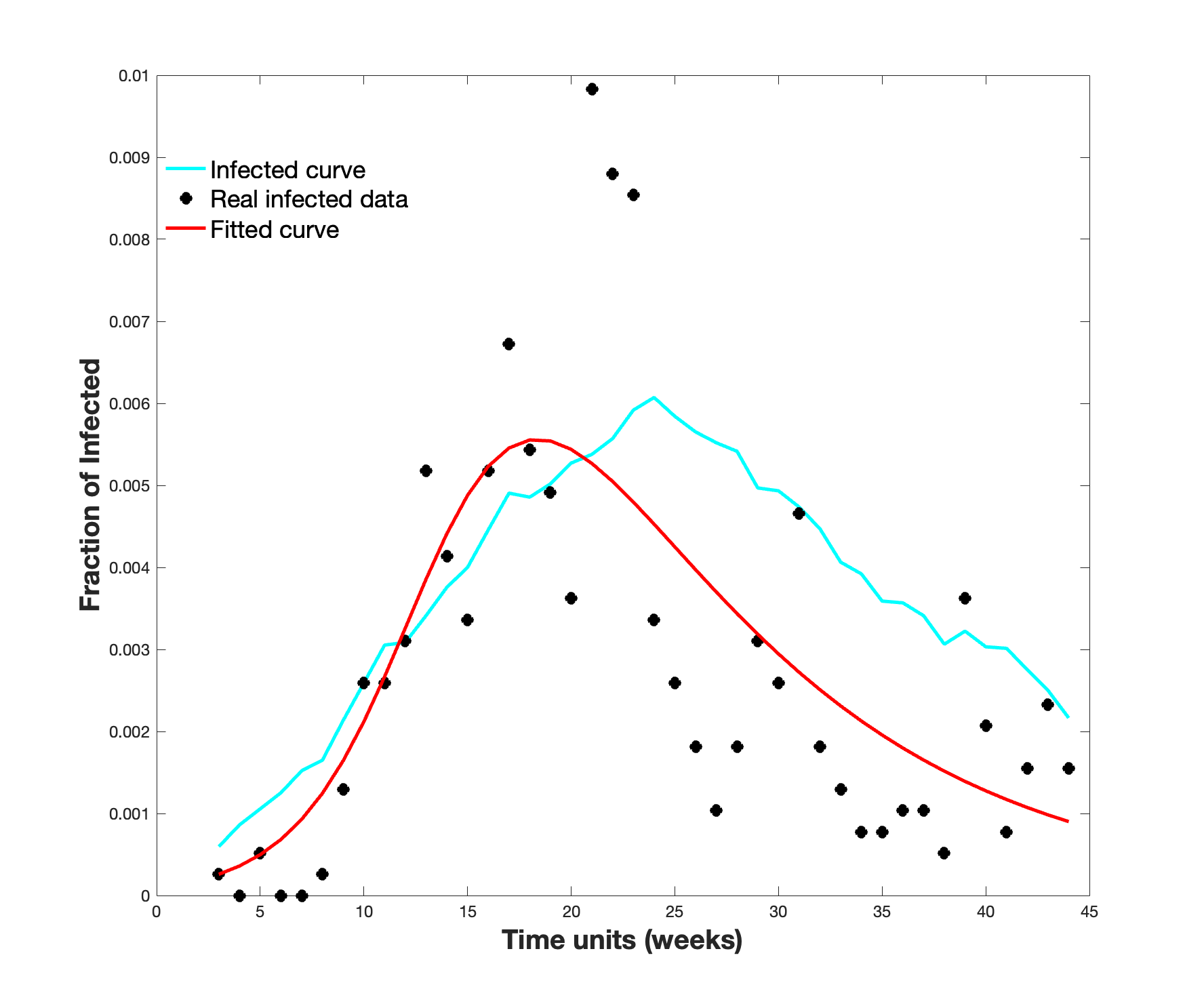}}
  \caption{Real infected data (black), fitted curve (red), and infected curve obtained in the spread on $G_{\mathcal{E}}$ (cyan).} 
  \label{ajus1}
  \end{center}
\end{figure}
\newpage
The graph $G_{\mathcal{E}} $ was protected with different protection budgets according to the importance of the DIL-W$^0$, DIL-W$^{0.5}$ and DIL-W$^1$ rankings. Protection is carried out in week 1, this is to say, at the beginning of the spread of the disease. Figure \ref{varia_protec} shows the results.

\begin{figure}[h]
\begin{center}
  \subfigure
    {\includegraphics[width=0.4\textwidth]{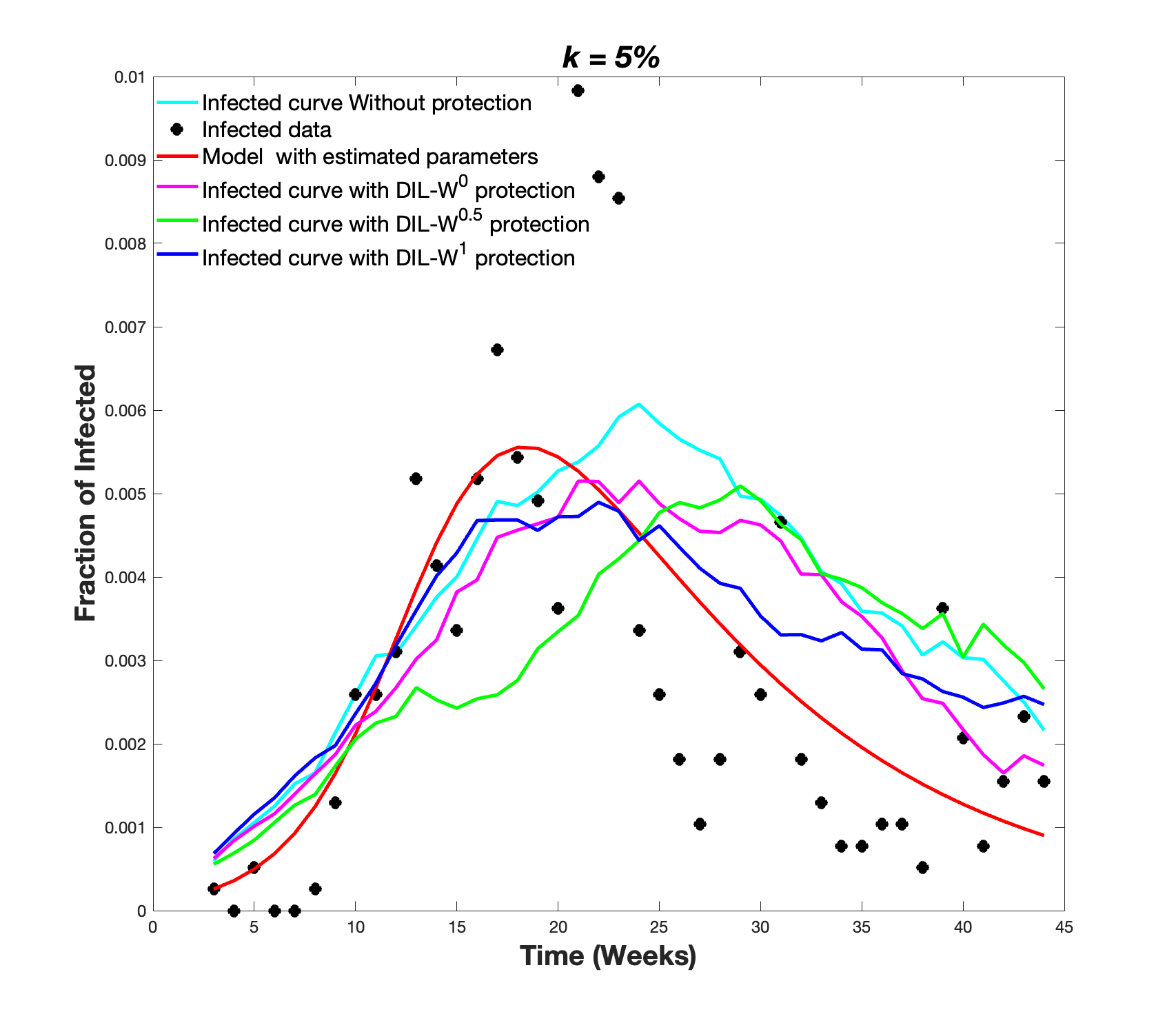}}
  \subfigure
    {\includegraphics[width=0.4\textwidth]{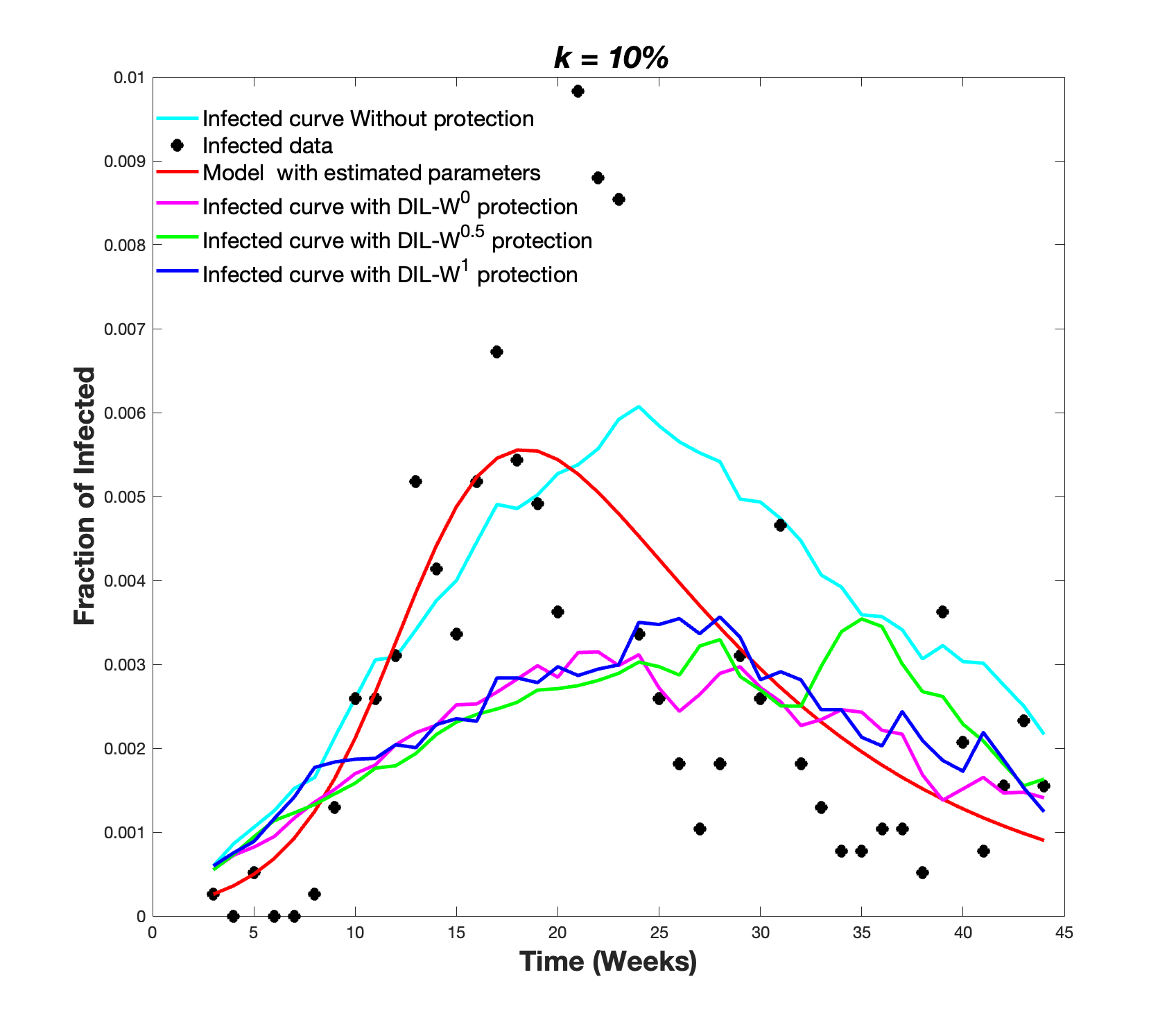}}
    \subfigure
    {\includegraphics[width=0.4\textwidth]{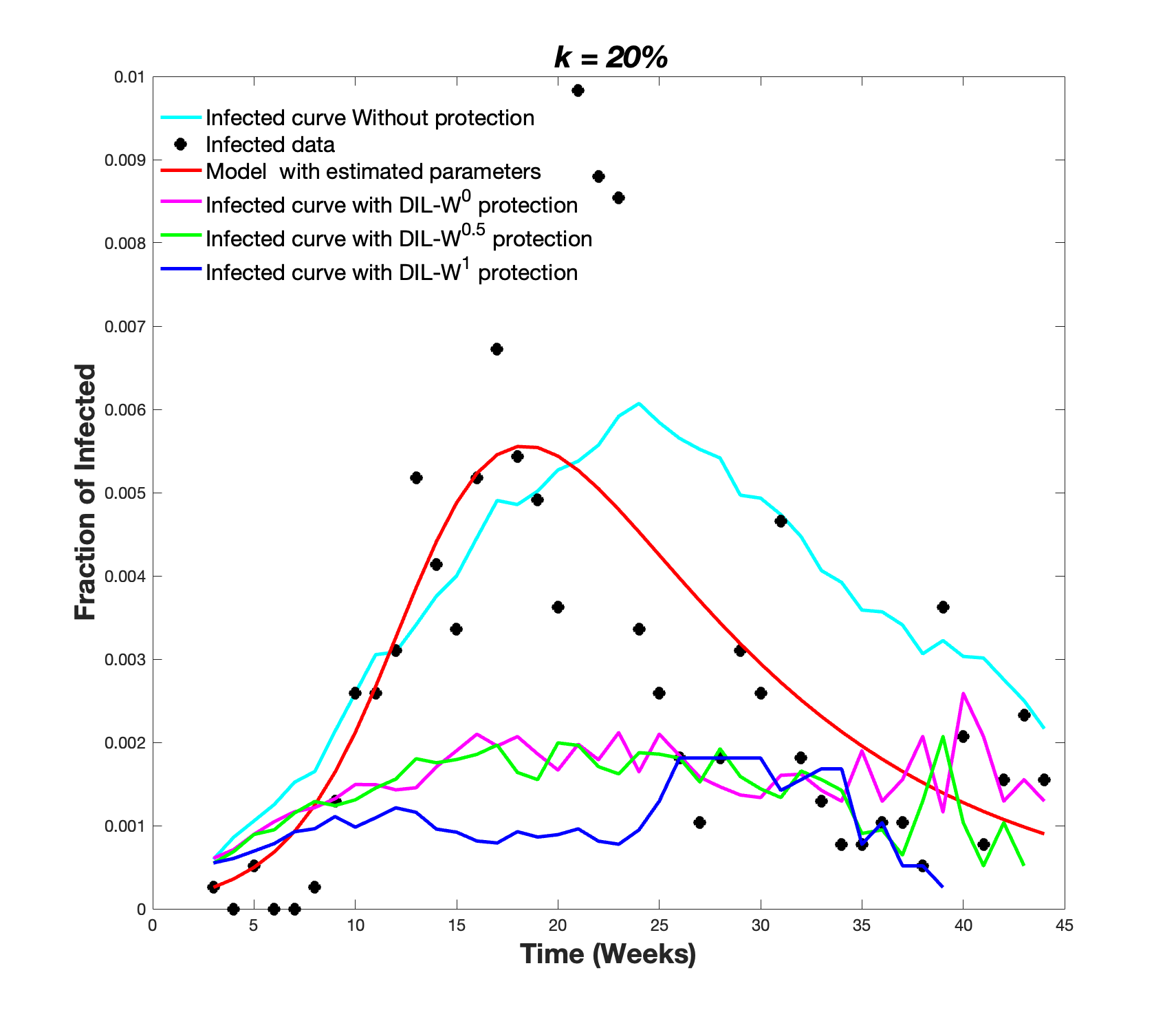}}
  \subfigure
    {\includegraphics[width=0.4\textwidth]{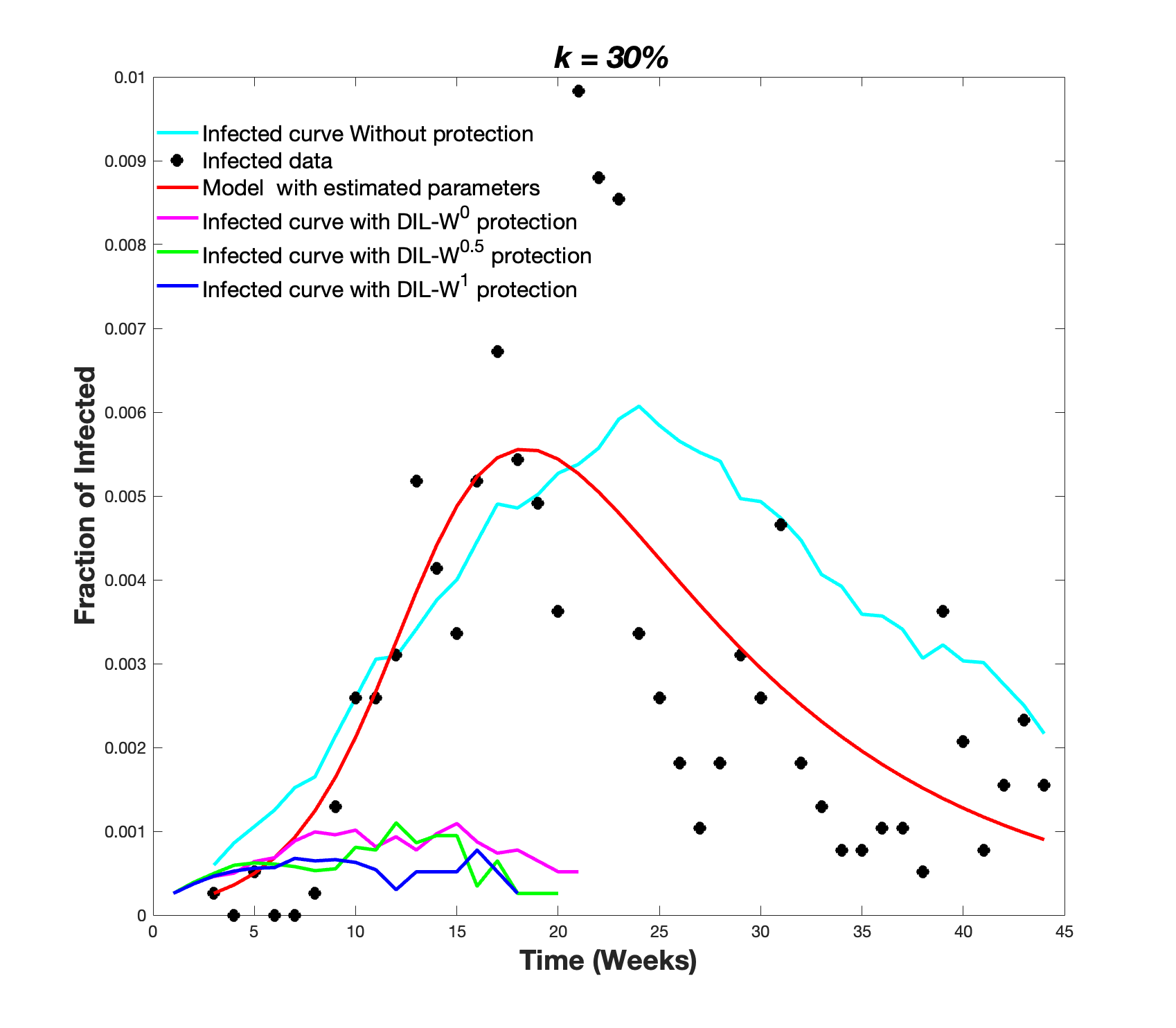}}
\subfigure
    {\includegraphics[width=0.4\textwidth]{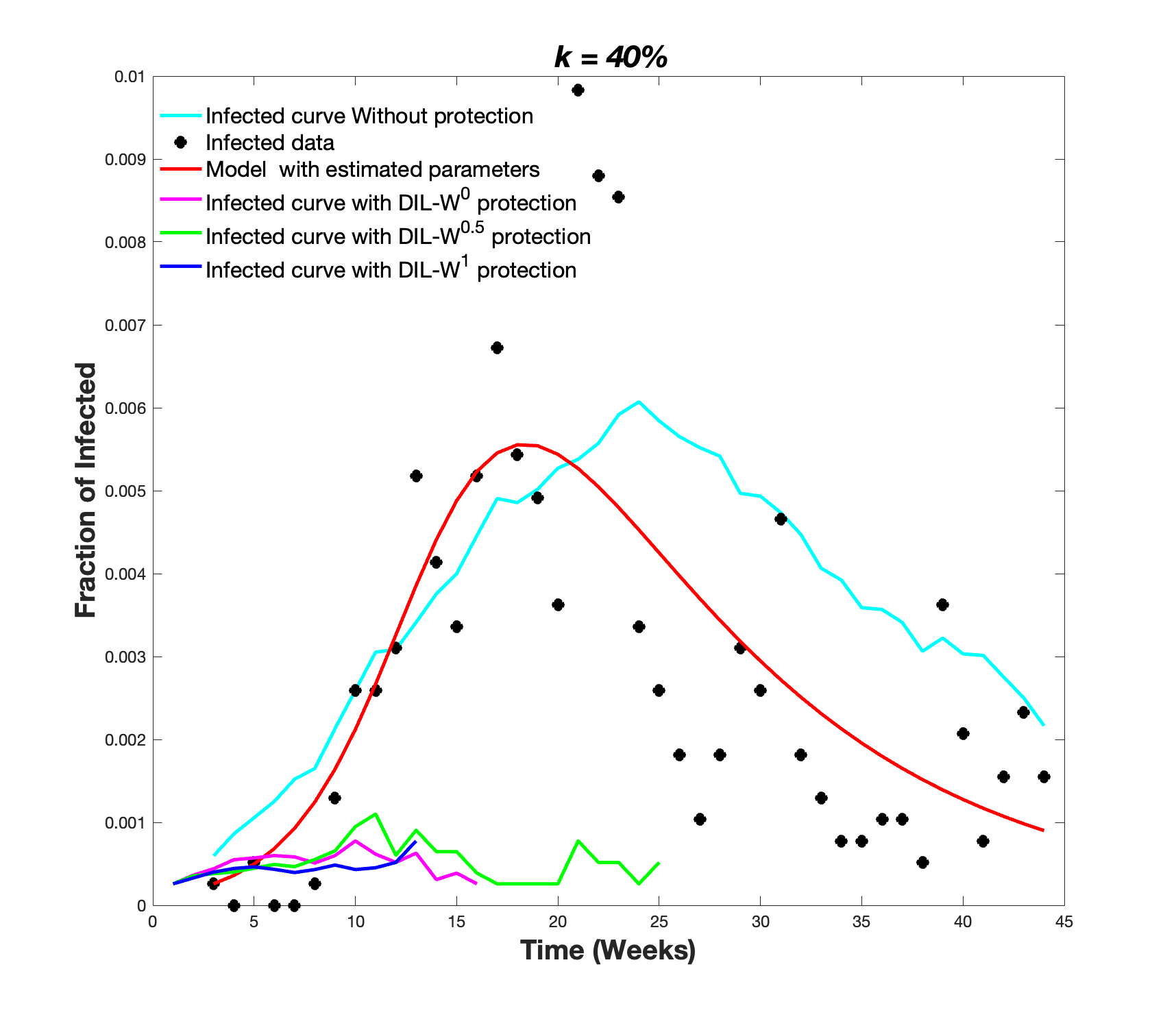}}
  \subfigure
    {\includegraphics[width=0.4\textwidth]{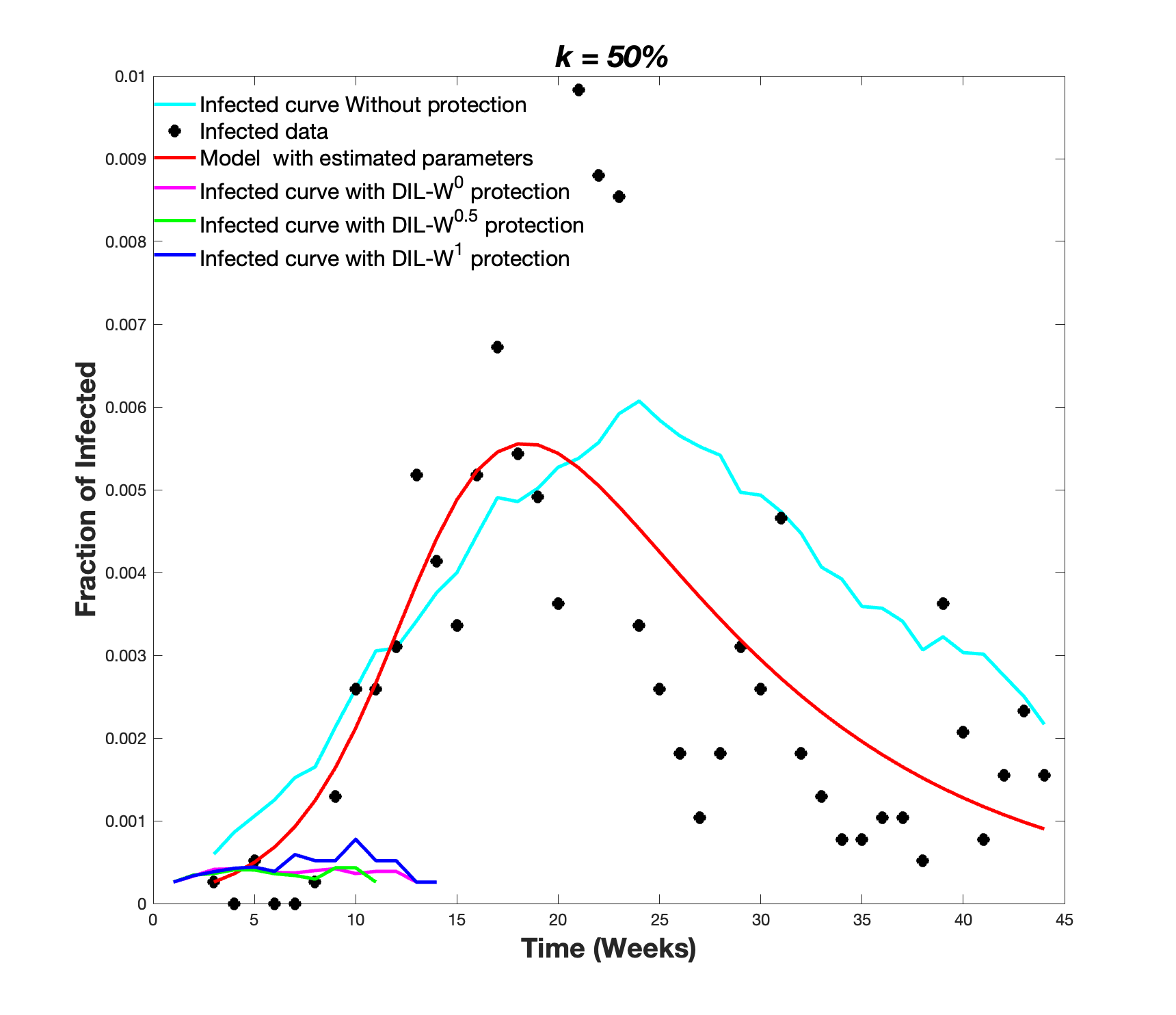}}
    \subfigure
    {\includegraphics[width=0.4\textwidth]{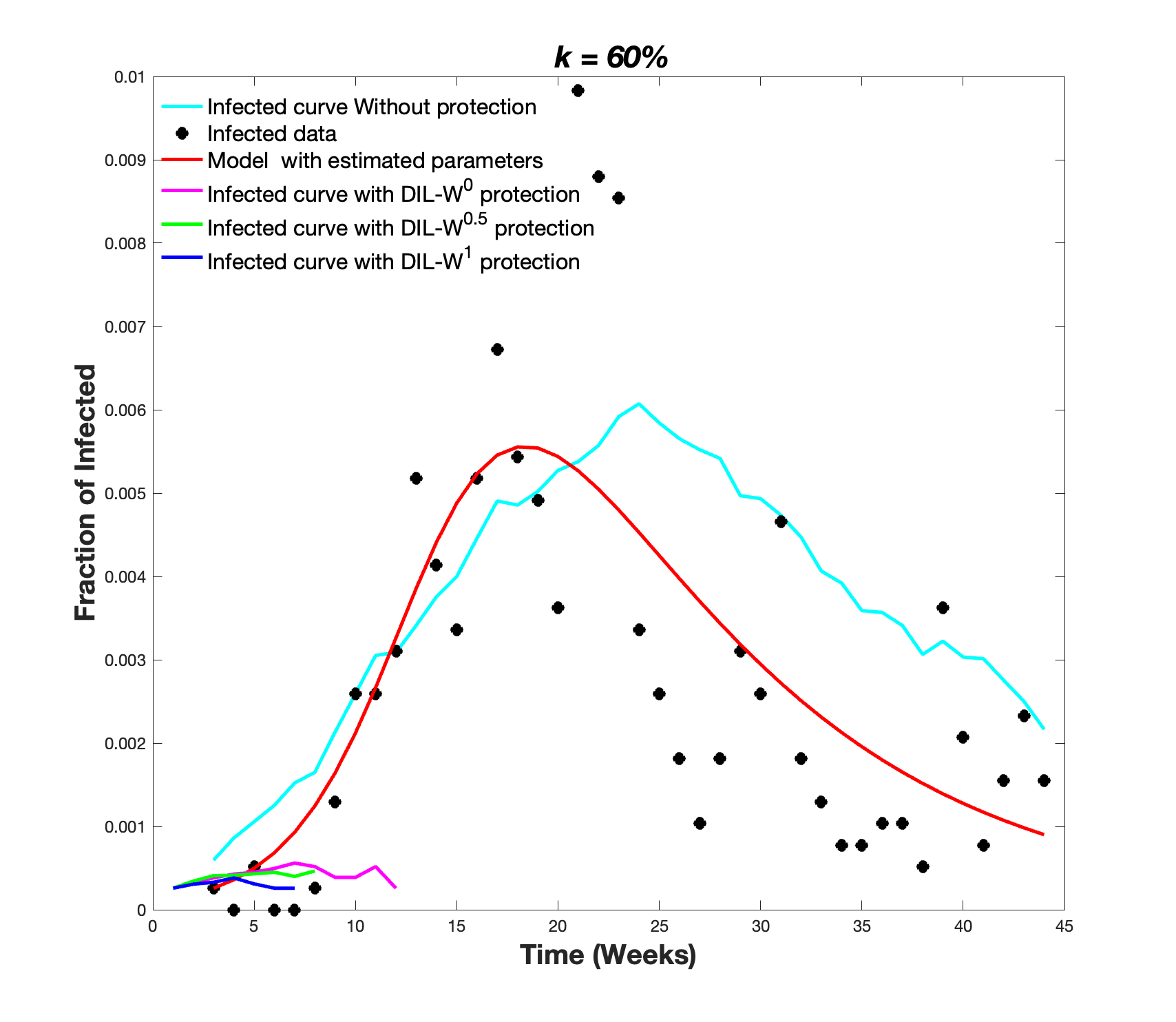}}
  \subfigure
    {\includegraphics[width=0.4\textwidth]{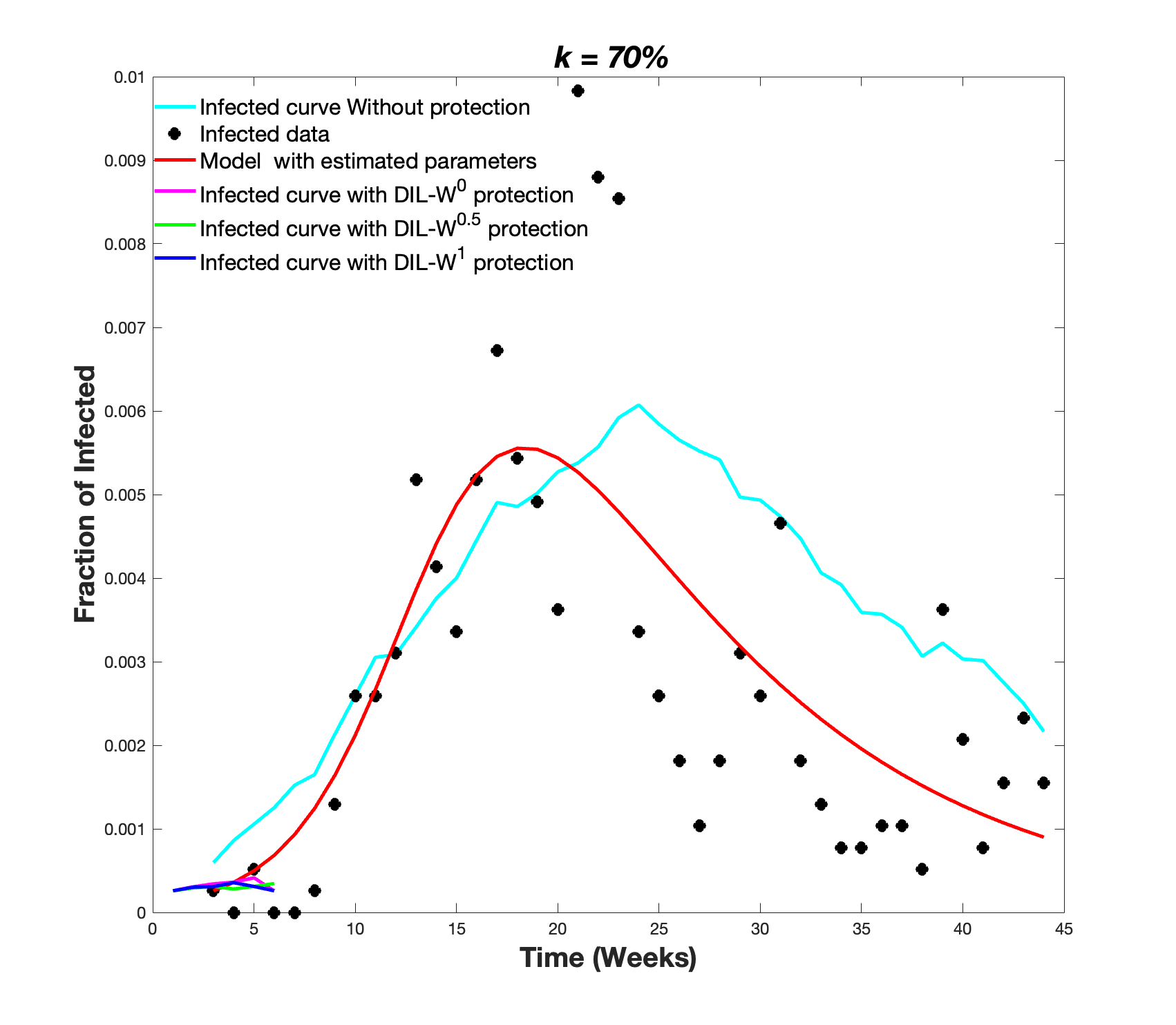}}
  \caption{Infected curves obtained according to different values of $k$.} 
  \label{varia_protec}
  \end{center}
\end{figure}
\clearpage

We can see the survival rate in Figure \ref{sup_protec}.
\begin{figure}[h]
\begin{center}
  \subfigure
    {\includegraphics[width=0.49\textwidth]{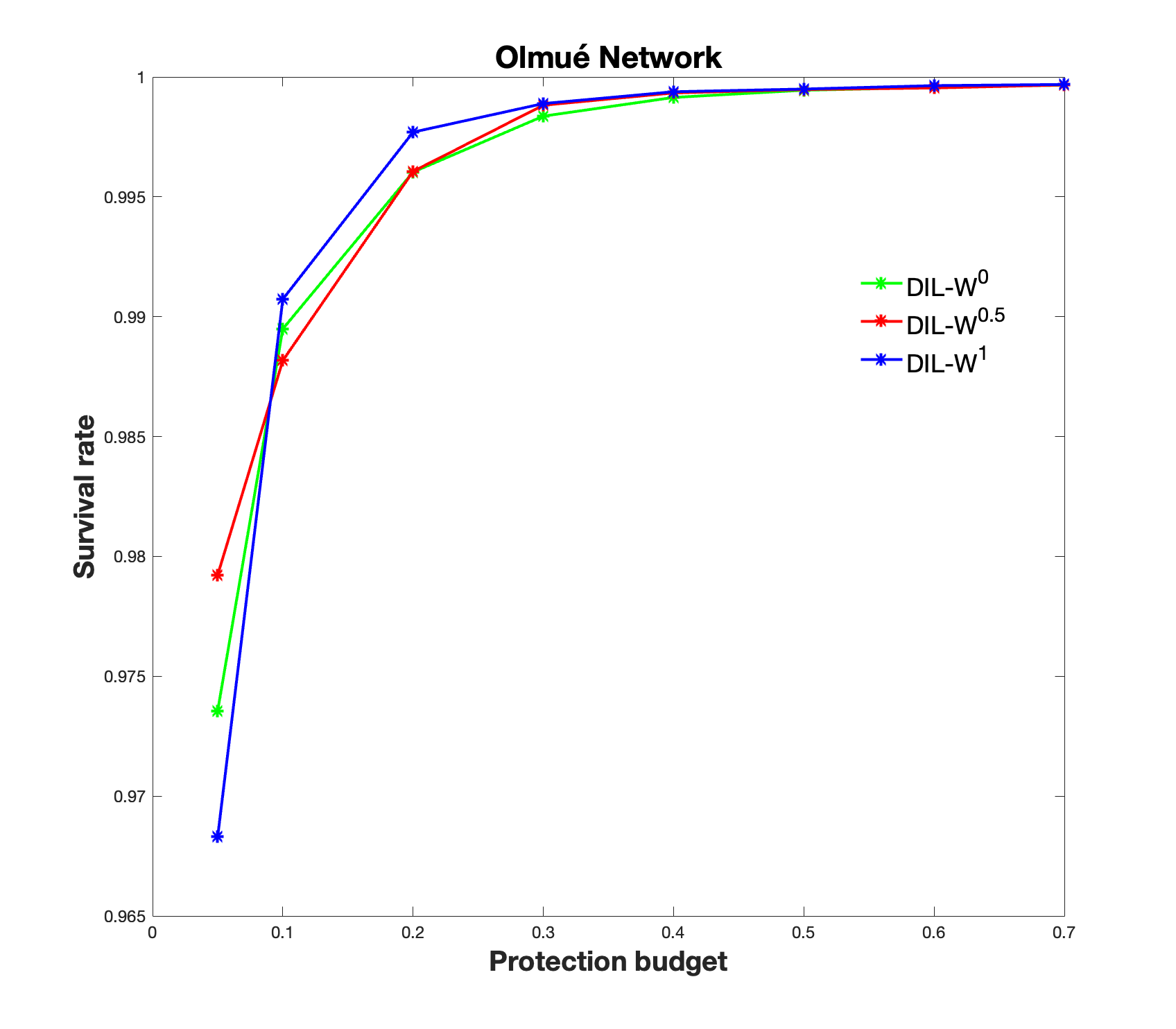}}
  \subfigure
    {\includegraphics[width=0.49\textwidth]{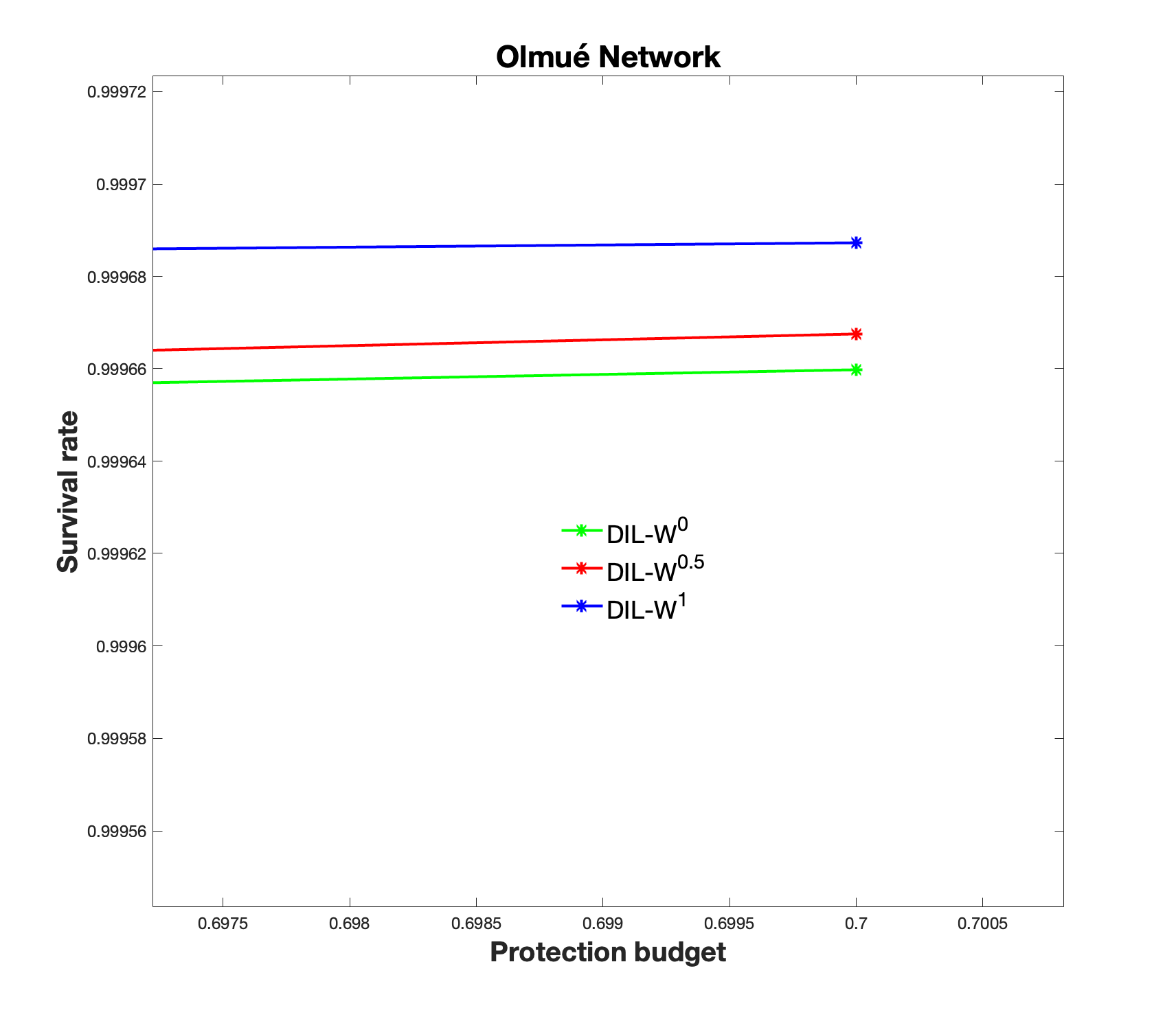}}
  \caption{Survival rate of each ranking.} 
  \label{sup_protec}
  \end{center}
\end{figure}

Figure \ref{real_top} shows the relationship between the real infected (450 people) and those immunized according to our proposal.
\begin{figure}[h]
\begin{center}
    {\includegraphics[width=0.6\textwidth]{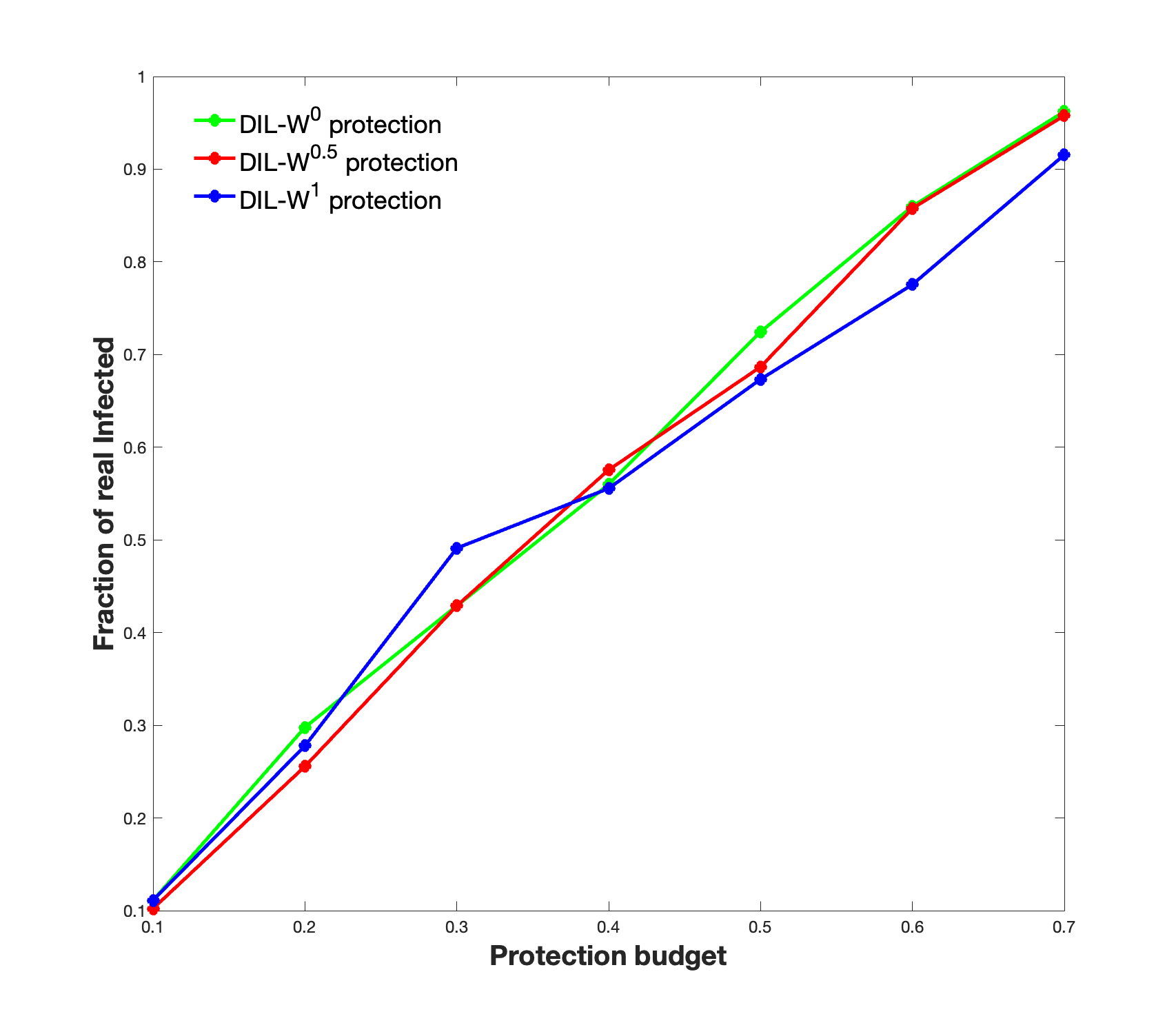}}
  \caption{Relationship between the real infected and those immunized according to the different DIL-W$^{\alpha}$ ranking.} 
  \label{real_top}
  \end{center}
\end{figure}

We can see that 80\% of the real infected, is located in the 60\% of the top ranked according to DIL-W$^{\alpha}$.
We think that it is a way to recognize those who will get sick. But it is not the solution.

Another element that we have considered investigating is the time in which the protection takes place. We modified the protection in the graph as the weeks advanced. In Figure \ref{varia_week}, we can see the different infected curves, considering the 10\% protection according to the DIL-W$^{\alpha}$ ranking, 
\begin{figure}[h]
\begin{center}
  \subfigure
    {\includegraphics[width=0.4\textwidth]{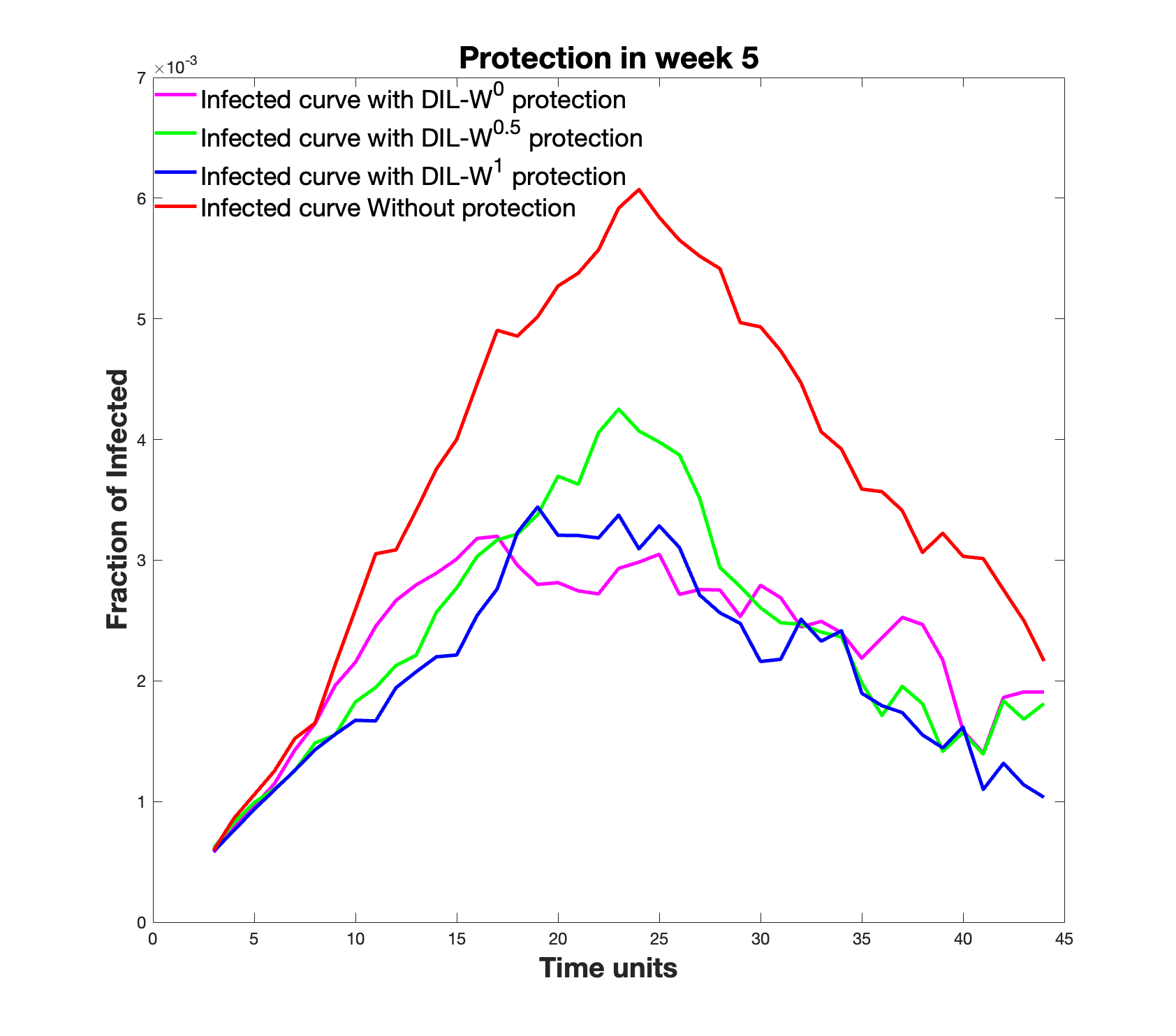}}
  \subfigure
    {\includegraphics[width=0.4\textwidth]{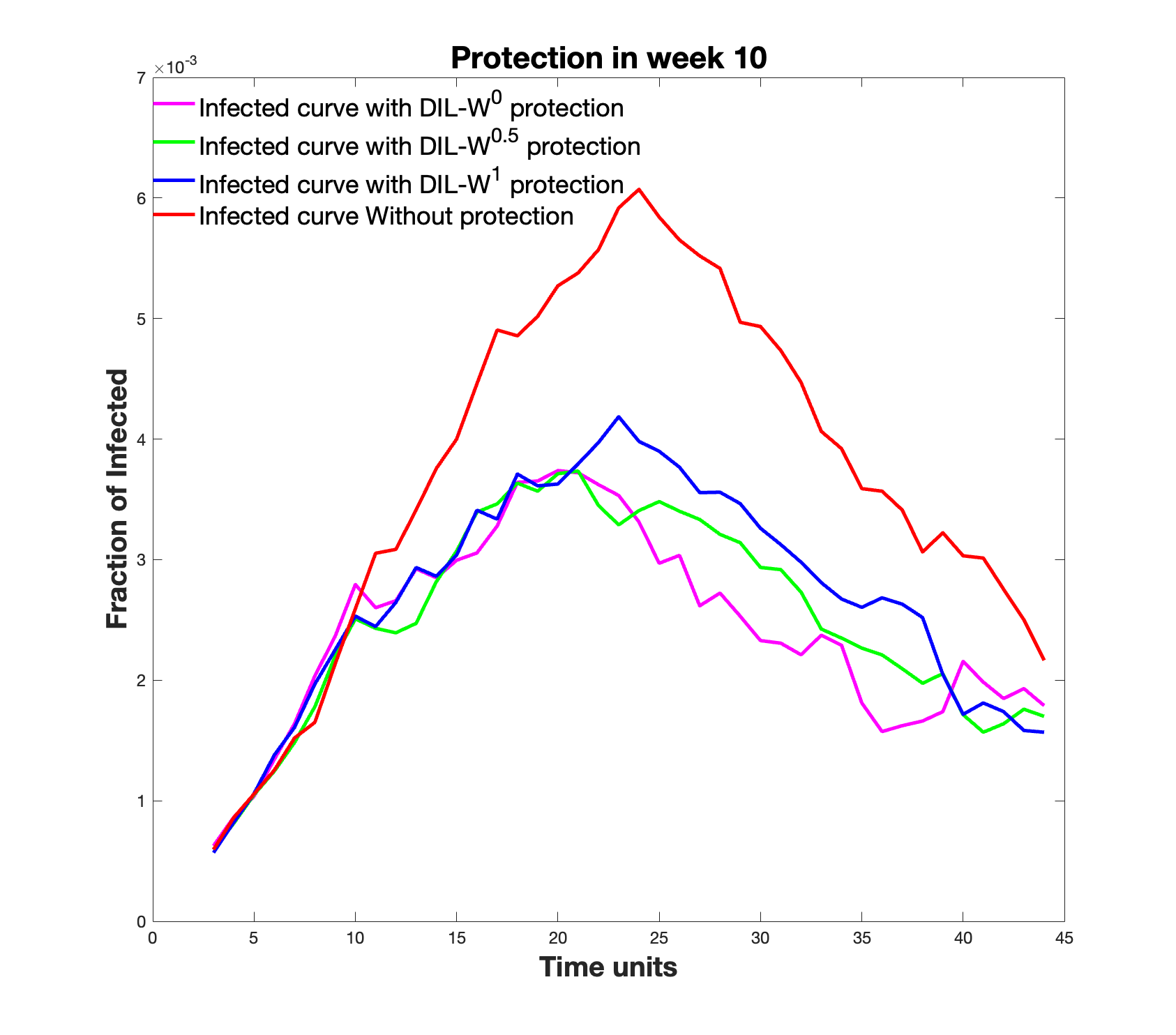}}
    \subfigure
    {\includegraphics[width=0.4\textwidth]{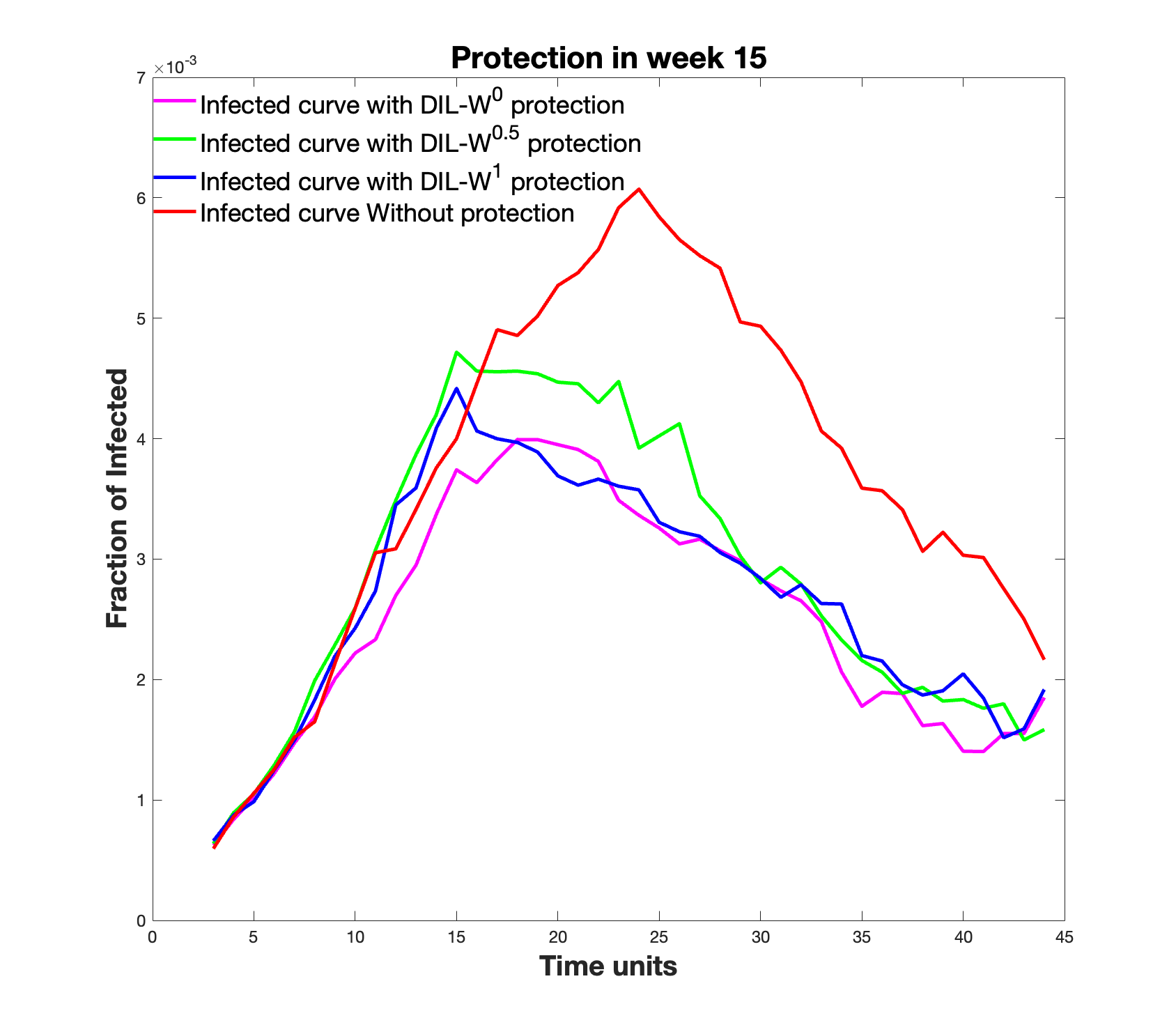}}
  \subfigure
    {\includegraphics[width=0.4\textwidth]{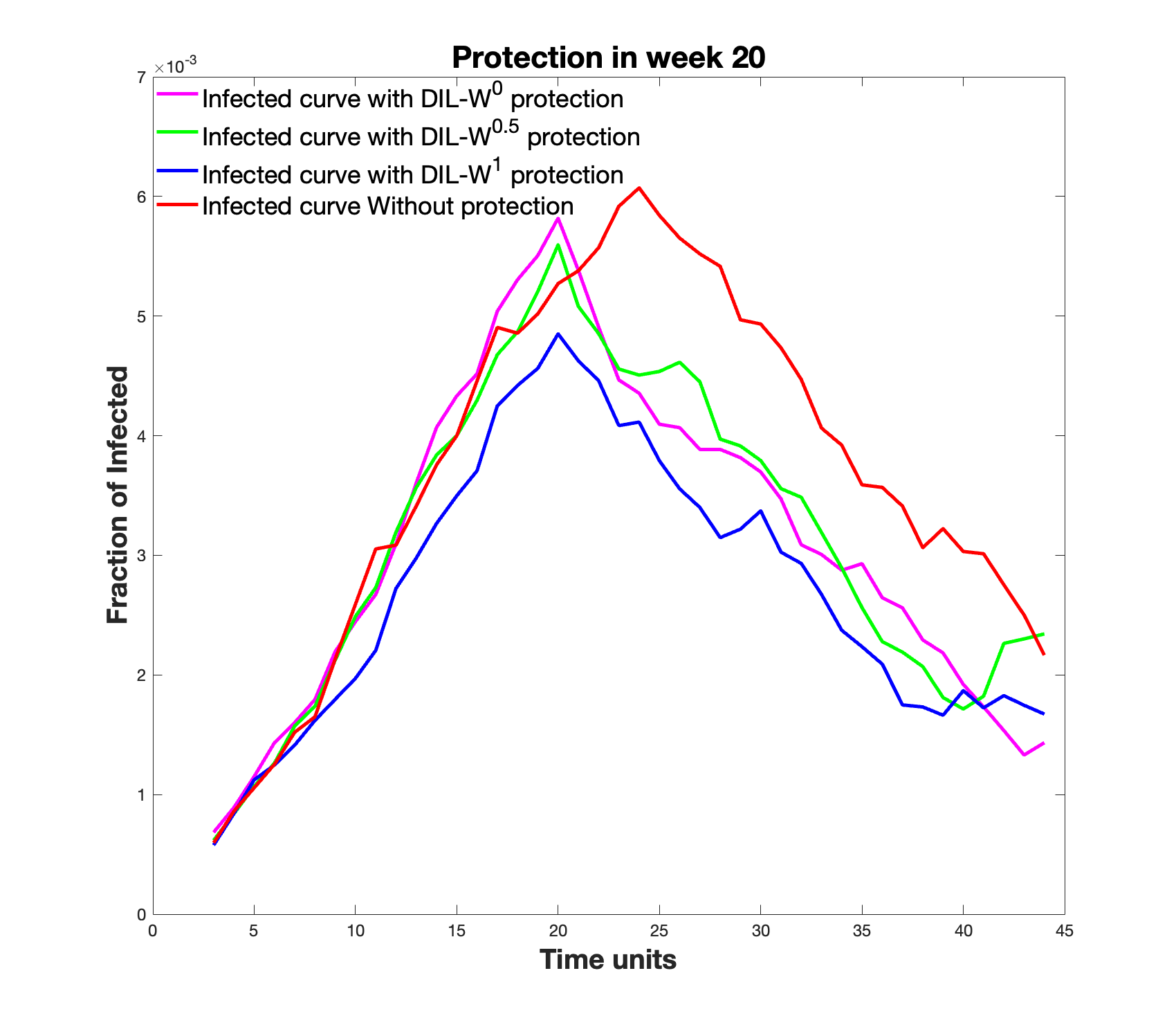}}
\subfigure
    {\includegraphics[width=0.4\textwidth]{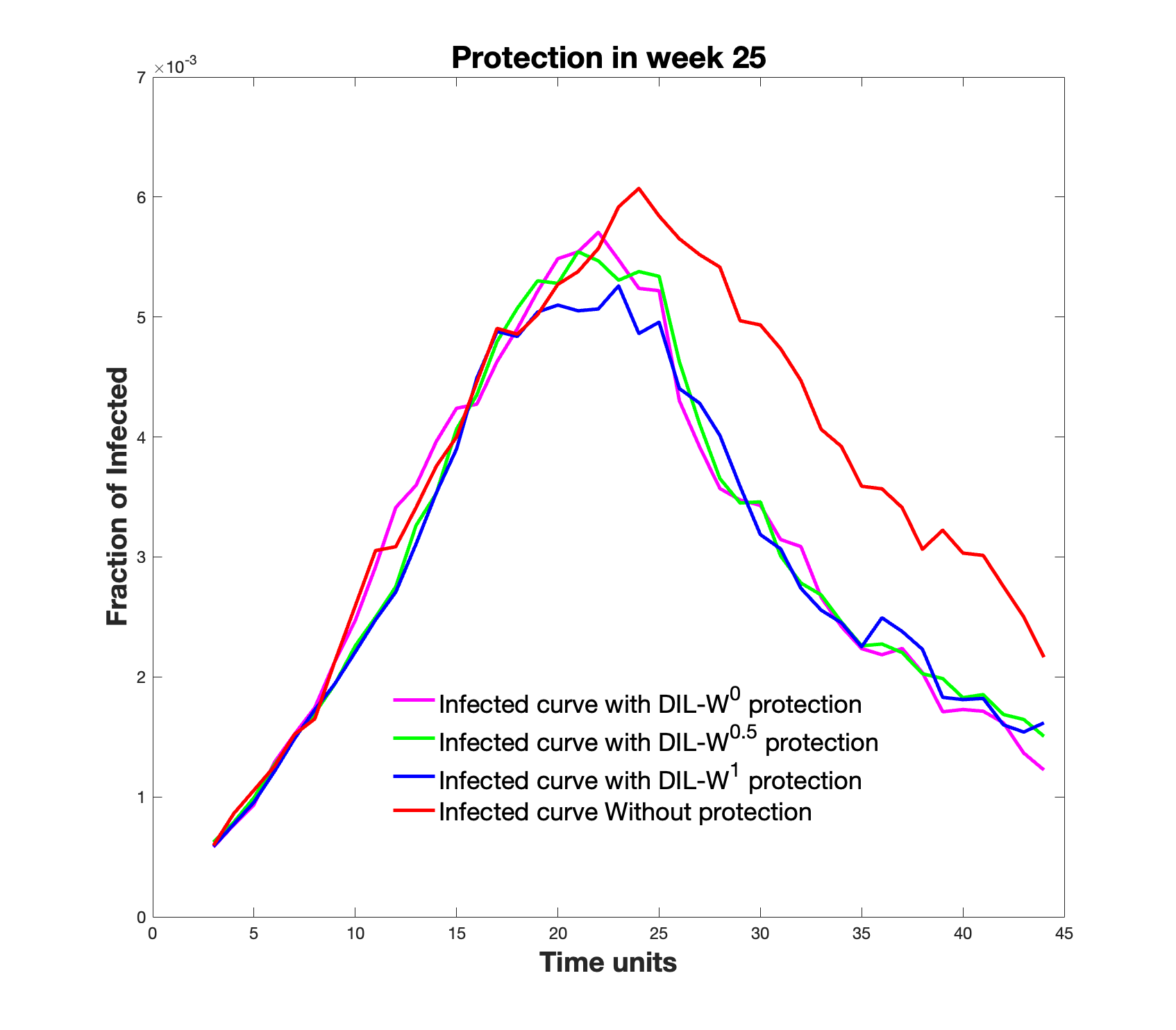}}
  \subfigure
    {\includegraphics[width=0.4\textwidth]{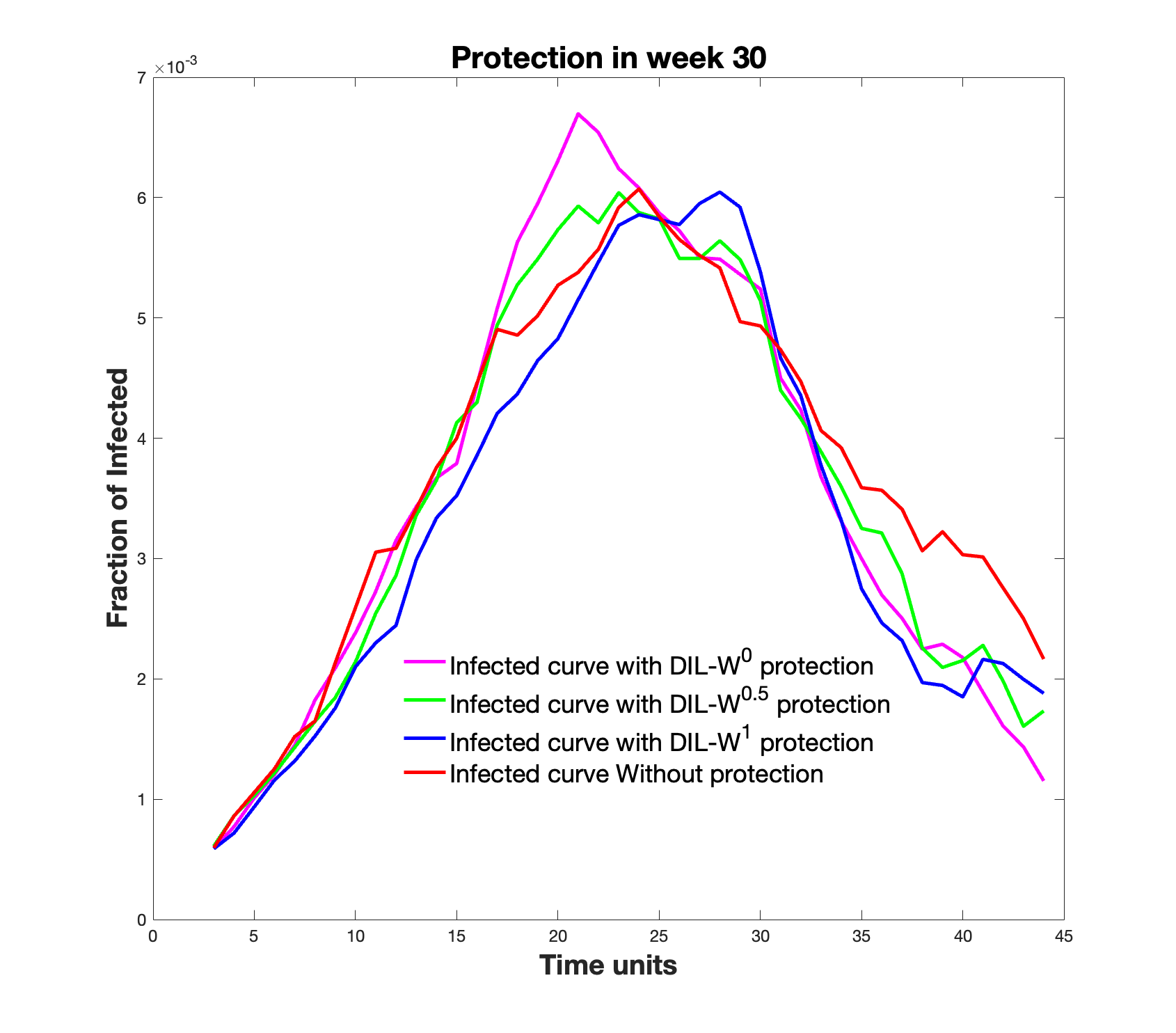}}
    \subfigure
    {\includegraphics[width=0.4\textwidth]{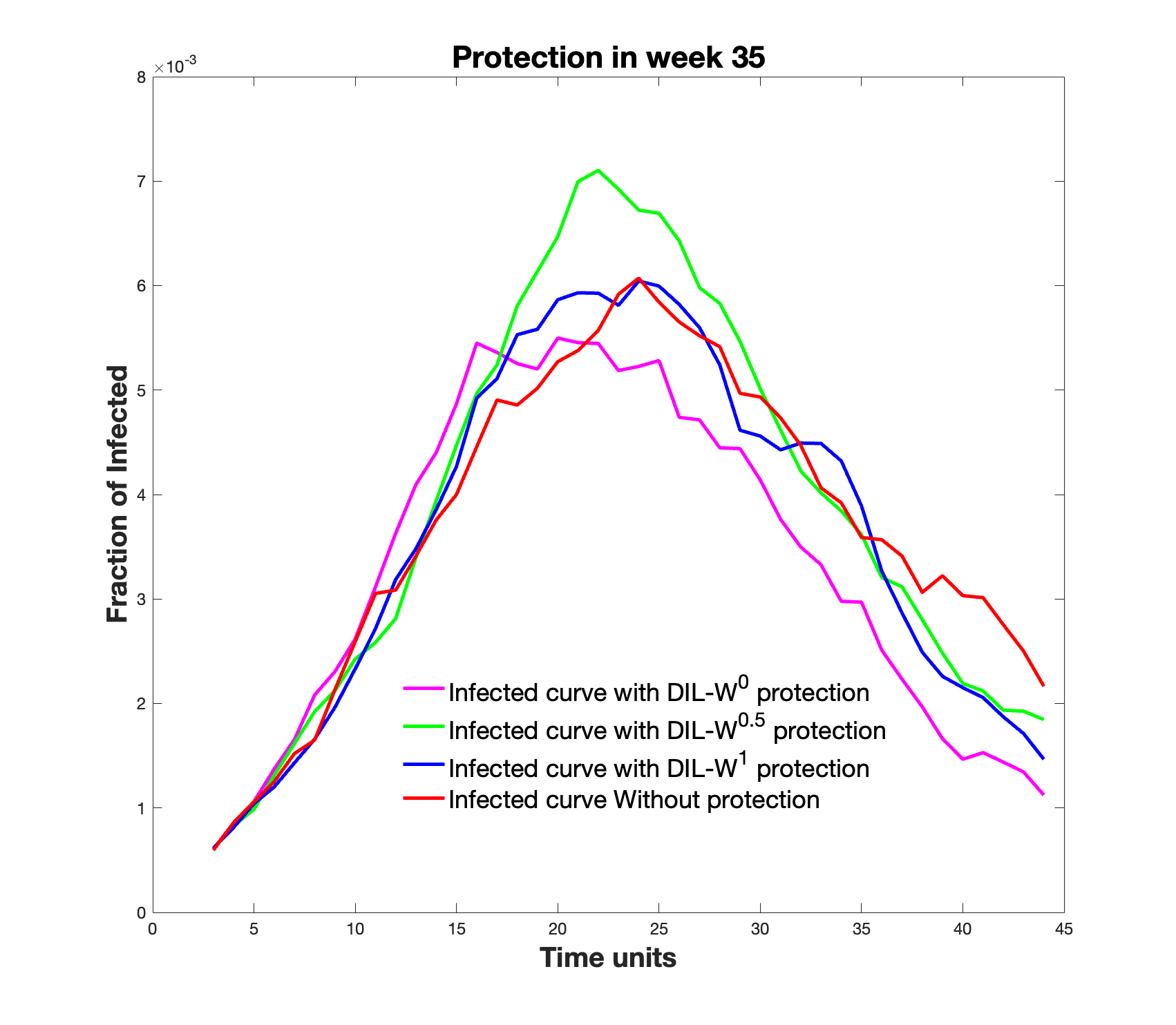}}
  \subfigure
    {\includegraphics[width=0.4\textwidth]{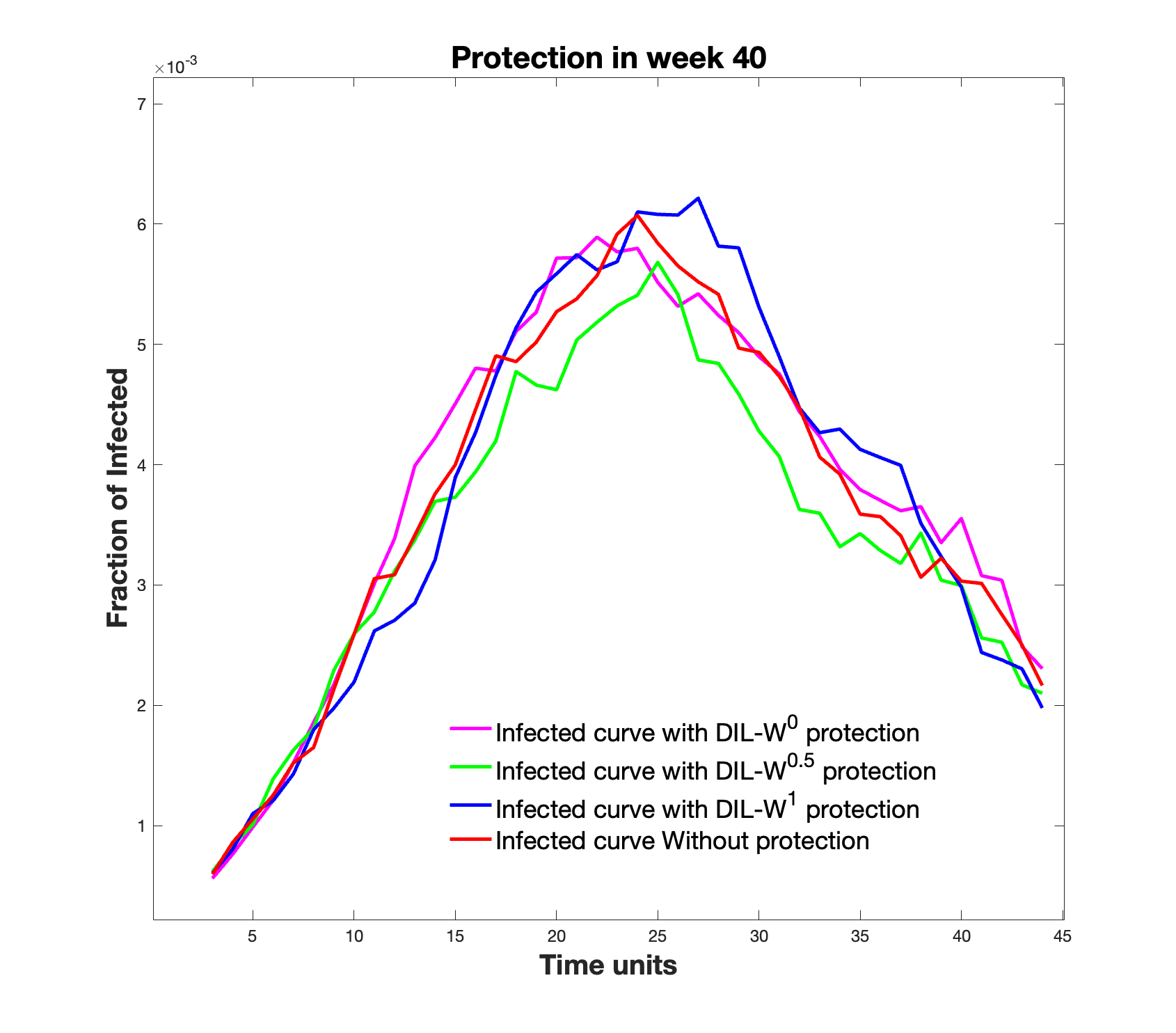}}
  \caption{The different infected curves, considering the 10\% protection according to the DIL-W$^{\alpha}$ ranking and done in different weeks.} 
  \label{varia_week}
  \end{center}
\end{figure}
\clearpage

Figure \ref{sup_week} shows the relationship between the survival rate and week in which the protection is carried out with our proposal. 

\begin{figure}[ht]
\begin{center}
    {\includegraphics[width=0.6\textwidth]{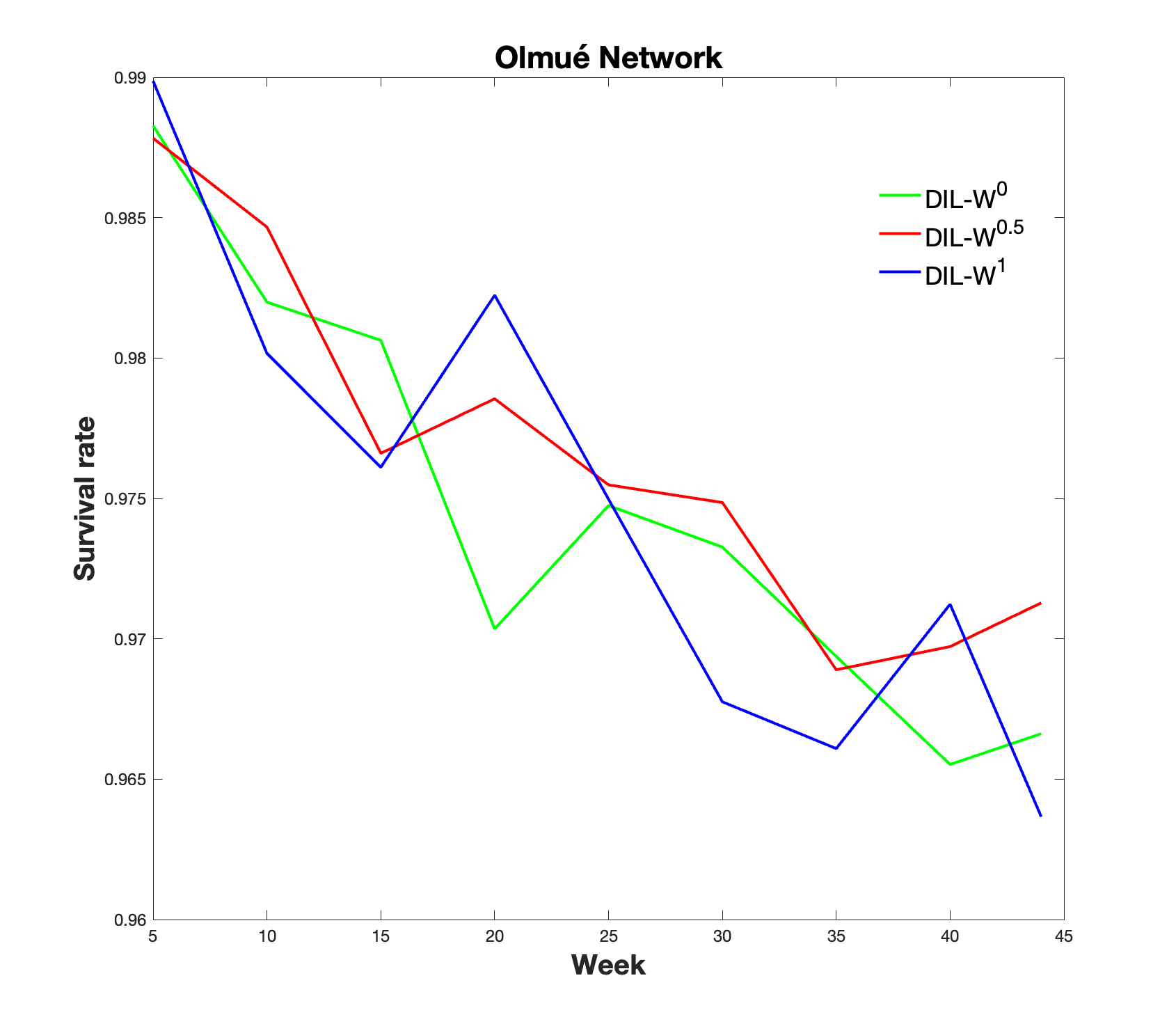}}
  \caption{Survival rate of each ranking.} 
  \label{sup_week}
  \end{center}
\end{figure}
\clearpage
\section{Discussion}\label{discussion}
The results of the present investigation are directed towards the analysis of the effectiveness of immunization using the DIL-W$^{\alpha}$ ranking with real COVID-19 data from the city of Olmu\'e-in Chile. Depending on the importance of the rankings, the immunization results were similar, despite the percentage of protection proposed in the simulations. Our method, therefore, goes in the direction of finding new optimization algorithms in network protection strategies  \cite{shams2014using}.

At the level of protection, it is evident that when the percentage of initial coverage is higher, the epidemic ends with a smaller fraction of people affected by COVID-19. This event is related not only to the random increase in immunization, but also to the possibility, in the model, of recognizing the bridge nodes to increase the effectiveness of vaccination. This is consistent with other investigations that indicate that the recognition of central nodes or high-risk individuals improves the efficiency of immunization strategies in real networks, a situation that favors the protection of the network and the best use of vaccine doses \cite{chen2008finding,nian2010efficient}. The best use of doses is a challenge for the current scenario of vaccine shortages worldwide, mainly in poor nations \cite{wouters2021challenges}.

On the other hand, the level of effectiveness of the DIL-W$^{\alpha}$ ranking, given the percentage of protection, is established in the recognition of the bridging nodes in a regular vaccination process. {The results using the $\alpha$ parameters of DIL-W$^{\alpha}$ indicate a high survival rate. DIL-W$^1$ achieves better results with 70\% protection and is positioned with the best survival rate, but DIL-W$^0$ and DIL-W$^{0.5}$ show good results. The difference between the different values of $\alpha$ is marginal and can be explained by the adequate representation that the DIL-W$^{\alpha}$ model has and by the values of $\alpha$, which do not generate excessive differences in the ranking. This is similar to the results of the research by Ophsal et al. who, through Freeman's EIES network, mention that the centrality degree (Definition \ref{dcent}) is relatively stable among the different $\alpha$ parameters \cite{opsahl2010node}.}

In a real and regular immunization strategy situation, such as the administration of vaccines, determining the population that infects most frequently is relevant since it allows optimization of these processes. Among our findings, it stands out that 80\% of the real infected in the Olmu\'e-Chile commune were located in 60\% of the top of the DIL-W$^{\alpha}$ ranking. Consequently, our proposal recognizes the heterogeneity of the network, approaching the reality of human interactions and achieving similar results in complex homogeneous networks \cite{nian2010efficient}.

Regarding immunization with 10\% protection, a decrease in the survival rate is established by 4\% from weeks 5 to 45 of protection. Likewise, with the same percentage of protection, the effectiveness of the immunity strategy tends to be important until week  20. After 20 weeks, the fraction of infected is similar with or without protection. Consequently, our model is strongly effective as a measure of rapid recognition of the epidemic outbreak in a given territory.

Therefore, according to the findings of our research, there are two important variables for the success and effectiveness of immunity strategies against COVID-19: (1) Recognition of bridging nodes (people with the highest probability of contagion) to apply measures of protection; and (2) the development time of this strategy.

Regarding the recognition of bridging nodes, there is evidence to support that targeted immunization schemes significantly reduce epidemic outbreaks \cite{liu2003propagation}. This opens the possibility of changing the traditional perspective of immunization by protecting a small proportion of the population over a long period of time \cite{pastor2002immunization}. It is important, therefore, not only to direct COVID-19 immunity efforts towards the population most affected by mortality, but also in those population groups that tend to infect with greater force.

The time of development of the immunization strategy continues to be a variable under discussion in the scientific community regarding the slowness worldwide of the vaccination process, which risks not achieving herd immunity \cite{aschwanden2021five}. In summary, both at a theoretical and empirical level, the execution time of immunization strategies is important in overcoming the COVID-19 pandemic.

Finally, our model helps to establish a ranking of bridge nodes in a non-homogeneous network, so it is highly replicable with real COVID-19 dissemination data and it is useful to establish more focused strategies given the reduced number of vaccines available.
\section{Conclusions}\label{conclusioncovid}
In this chapter, we evaluate the effectiveness of the DIL-W$^{\alpha}$ ranking in the immunization of nodes that are attacked by an infectious disease (COVID-19) that spreads on an edge-weighted graph obtained from the database of the Epidemiological Surveillance System of the Chilean Ministry of Health, using a graph-based SIR model.

Considering survival rates, the DIL-W$^{1}$ ranking performs better (by a small margin) than DIL-W$^{0.5}$ and DIL-W$^{0}$ rankings, subject to the protection budget being equal to 10\% of the network nodes. 

The period in which immunization or protection is given plays a key role in stopping the spread of the disease (see Figure \ref{varia_week}) since around week 25 immunization does not generate a great impact and as time progresses the survival rate decreases almost linearly.

An interesting and complex task to solve is to determine which value of $\alpha $ to choose in the network so that the ranking generated is the optimal one. The same value does not always make the performance the best. {One way to explore this is to continue with the ideas proposed in \cite{wei}, where the selection standard of the optimal turning parameters is proposed for the centrality degree, but is not for DIL-W $^{\alpha}$ ranking}. 

\chapter{General Information}
 \subsection*{Funding}
This work was supported and funded by Agencia Nacional de Investigación y Desarrollo (ANID), COVID0739 project.

\subsection*{Institutional Review Board Statement} 
Not applicable.
\subsection*{Informed Consent Statement} 
Not applicable.
\subsection*{Data Availability Statement} 
The data presented in this study were available after being requested by research project COVID-ANID to the Chilean Ministry of Health. The data are not publicly available due to legal restrictions.


\subsection*{Conflicts of Interest} 
The authors declare that they have no conflict of interest. The funders had no role in the design of the study; in the collection, analysis, or interpretation of data; in the writing of the manuscript, or in the decision to publish the results.


\appendix 
\chapter{Appendix}\label{apendi}

Survival rates ($\sigma$), according to different protection budgets in each test network.

\begin{table}[h]
\centering
\tabcolsep=0.11cm
\begin{tabular}{lccccccc}
\cline{2-8}
              & \multicolumn{7}{c}{\textbf{Zachary karate club Network}} \\ \cline{2-8} 
 &
  \multicolumn{1}{c}{\textbf{DIL-W$^0$}} &
  \multicolumn{1}{c}{\textbf{DIL-W$^{0.5}$}} &
  \multicolumn{1}{c}{\textbf{DIL-W$^1$}} &
  \multicolumn{1}{c}{\textbf{Strength}} &
  \multicolumn{1}{c}{\textbf{Closeness}} &
  \multicolumn{1}{c}{\textbf{Betweenness}} &
  \multicolumn{1}{c}{\textbf{Laplacian}} \\ \hline
\textbf{5\%}  &   0.8391& 0.8434 & 0.8354 & 0.8406	 & 0.8028 & 0.8016 & 0.8353    \\
\textbf{10\%} & 0.8945&	0.8912&	0.8941&	0.9050&	0.8210&	0.8889&	0.9015    \\
\textbf{15\%} & 0.9392&	0.9367&	0.9359&	0.9437&	0.8989&	0.9333&	0.9437    \\
\textbf{20\%} & 0.9510&	0.9525&	0.9558&	0.9568&	0.9276&	0.9497&	0.9551    \\
\textbf{25\%} &  0.9543&	0.9579&	0.9577&	0.9593&	0.9548&	0.9518&	0.9567   \\
\textbf{30\%} &  0.9578&	0.9585&	0.9572&	0.9589&	0.9580&	0.9540&	0.9597  \\
\textbf{40\%} &  0.9591&	0.9603&	0.9659&	0.9677&	0.9593&	0.9681&	0.9606   \\
\textbf{50\%} &  0.9657&	0.9655&	0.9667&	0.9695&	0.9603&	0.9705&	0.9662  \\ \hline
\end{tabular}
\caption{$\sigma$ according to different values of $k$ in Zachary karate club Network.}\label{table_zachary}
\end{table}

\begin{table}[h]
\centering
\tabcolsep=0.11cm
\begin{tabular}{lccccccc}
\cline{2-8}
              & \multicolumn{7}{c}{\textbf{Wild birds Network}} \\ \cline{2-8} 
 &
  \multicolumn{1}{c}{\textbf{DIL-W$^0$}} &
  \multicolumn{1}{c}{\textbf{DIL-W$^{0.5}$}} &
  \multicolumn{1}{c}{\textbf{DIL-W$^1$}} &
  \multicolumn{1}{c}{\textbf{Strength}} &
  \multicolumn{1}{c}{\textbf{Closeness}} &
  \multicolumn{1}{c}{\textbf{Betweenness}} &
  \multicolumn{1}{c}{\textbf{Laplacian}} \\ \hline
\textbf{5\%}  &  0.7355&	0.7394&	0.6447&	0.6188&	0.6696&	0.6411&	0.6212   \\
\textbf{10\%} &   0.7871&	0.7942&	0.7375&	0.7040&	0.7326&	0.6952&	0.7022  \\
\textbf{15\%} &  0.8721&	0.8863&	0.8760&	0.8625&	0.8149&	0.7966&	0.8654   \\
\textbf{20\%} & 0.9011&	0.9059&	0.9076&	0.8943&	0.8486&	0.8399&	0.8980   \\
\textbf{25\%} & 0.9135&	0.9155&	0.9208&	0.9130&	0.8826&	0.8676&	0.9078  \\
\textbf{30\%} &  0.9150&	0.9226&	0.9261&	0.9303&	0.8980&	0.8941&	0.9114  \\
\textbf{40\%} & 0.9361&	0.9327&	0.9481&	0.9540&	0.9018&	0.9457&	0.9272\\
\textbf{50\%} & 0.9500&	0.9626&	0.9710&	0.9762&	0.9046&	0.9766&	0.9462  \\ \hline
\end{tabular}
\caption{$\sigma$ according to different values of $k$ in Wild birds Network.}\label{table_wild}
\end{table}

\begin{table}[h]
\centering
\tabcolsep=0.11cm
\begin{tabular}{lccccccc}
\cline{2-8}
              & \multicolumn{7}{c}{\textbf{Sandy authors Network}} \\ \cline{2-8} 
 &
  \multicolumn{1}{c}{\textbf{DIL-W$^0$}} &
  \multicolumn{1}{c}{\textbf{DIL-W$^{0.5}$}} &
  \multicolumn{1}{c}{\textbf{DIL-W$^1$}} &
  \multicolumn{1}{c}{\textbf{Strength}} &
  \multicolumn{1}{c}{\textbf{Closeness}} &
  \multicolumn{1}{c}{\textbf{Betweenness}} &
  \multicolumn{1}{c}{\textbf{Laplacian}} \\ \hline
\textbf{5\%}  & 0.9301&	0.9292&	0.9296&	0.8395&	0.8817&	0.9312&	0.8329   \\
\textbf{10\%} & 0.9671&	0.9637&	0.9613&	0.9535&	0.9434&	0.9615&	0.9382  \\
\textbf{15\%} & 0.9738&	0.9727&	0.9692&	0.9706&	0.9513&	0.9686&	0.9616   \\
\textbf{20\%} &  0.9749&	0.9799&	0.9760&	0.9779&	0.9558&	0.9764&	0.9645    \\
\textbf{25\%} & 0.9771&	0.9819&	0.9822&	0.9798&	0.9568&	0.9808&	0.9773\\
\textbf{30\%} & 0.9795&	0.9837&	0.9832&	0.9827&	0.9700&	0.9829&	0.9791 \\
\textbf{40\%}& 0.9824&	0.9852&	0.9848&	0.9839&	0.9680&	0.9853&	0.9824  \\
\textbf{50\%} &0.9825&	0.9870&	0.9860&	0.9860&	0.9670&	0.9860&	0.9837   \\ \hline
\end{tabular}
\caption{$\sigma$ according to different values of $k$ in Sandy authors Network.}\label{table_sandy}
\end{table}

\begin{table}[h]
\centering
\tabcolsep=0.11cm
\begin{tabular}{lccccccc}
\cline{2-8}
              & \multicolumn{7}{c}{\textbf{CAG-mat72 Network}} \\ \cline{2-8} 
 &
  \multicolumn{1}{c}{\textbf{DIL-W$^0$}} &
  \multicolumn{1}{c}{\textbf{DIL-W$^{0.5}$}} &
  \multicolumn{1}{c}{\textbf{DIL-W$^1$}} &
  \multicolumn{1}{c}{\textbf{Strength}} &
  \multicolumn{1}{c}{\textbf{Closeness}} &
  \multicolumn{1}{c}{\textbf{Betweenness}} &
  \multicolumn{1}{c}{\textbf{Laplacian}} \\ \hline
\textbf{5\%}  &  0.7214&	0.7641&	0.7632&	0.7734& 0.7714&	0.7256&	0.7779   \\
\textbf{10\%} & 0.8947&	0.8886&	0.8792&	0.8685&	0.8451&	0.8077&	0.8706  \\
\textbf{15\%} & 0.9185&	0.9308&	0.9330&	0.9230&	0.8993&	0.9085&	0.9090 \\
\textbf{20\%} & 0.9323&	0.9452&	0.9458&	0.9446&	0.9285&	0.9094&	0.9302  \\
\textbf{25\%} & 0.9464&	0.9562&	0.9581&	0.9588&	0.9444&	0.9131&	0.9351   \\
\textbf{30\%} & 0.9578&	0.9657&	0.9672&	0.9631&	0.9646&	0.9312&	0.9426  \\
\textbf{40\%} & 0.9710&	0.9745&	0.9750&	0.9719&	0.9720&	0.9456&	0.9679\\
\textbf{50\%} & 0.9770&	0.9809&	0.9807&	0.9779&	0.9781&	0.9458&	0.9758 \\ \hline
\end{tabular}
\caption{$\sigma$ according to different values of $k$ in CAG-mat72 Network.}\label{table_cag}
\end{table}

\begin{table}[h]
\setlength{\tabcolsep}{0.11 cm}
\begin{tabular}{lccccccc}
\toprule
              & \multicolumn{7}{c}{\textbf{Scale-Free Network}} \\ \cmidrule{2-8} 
 &
  \multicolumn{1}{c}{\textbf{DIL-W\boldmath{$^0$}}} &
  \multicolumn{1}{c}{\textbf{DIL-W\boldmath{$^{0.5}$}}} &
  \multicolumn{1}{c}{\textbf{DIL-W\boldmath{$^1$}}} &
  \multicolumn{1}{c}{\textbf{Strength}} &
  \multicolumn{1}{c}{\textbf{Closeness}} &
  \multicolumn{1}{c}{\textbf{Betweenness}} &
  \multicolumn{1}{c}{\textbf{Laplacian}} \\ \midrule
\textbf{5\%}  &     0.2235  &   0.2359   &  0.2773 &    0.2845  &   0.1748 &    0.2320  &   0.2312\\
\textbf{10\%}  &     0.8610  &   0.5965   &  0.6893  &   0.8593   &  0.3901  &   0.6857  &   0.6264\\
\textbf{15\%}  &     0.9437   &  0.7585   &  0.9528   &  0.9168  &   0.6622   &  0.8734   &  0.9155\\
\textbf{20\%}  &     0.9763   &  0.9298   &  0.9636  &   0.9609  &   0.7431  &   0.9092   &  0.9174\\
\textbf{25\%}  &     0.9804   &  0.9442   &  0.9803   &  0.9684   &  0.7635   &  0.9637   &  0.9497\\
\textbf{30\%}  &     0.9853   &  0.9710   &  0.9852   &  0.9837  &   0.8509  &   0.9778  &   0.9519\\
\textbf{40\%}  &     0.9882   &  0.9755   &  0.9875   &  0.9857  &  0.9268  &   0.9863   &  0.9671\\
\textbf{50\%}  &     0.9897   &  0.9845 &    0.9887   &  0.9875  &   0.9359   &  0.9884  &   0.9830\\ \bottomrule
\end{tabular}
\caption{$\sigma$ according to different values of $k$ in a Scale-free~Network.}\label{tab_free}
\end{table}

\begin{table}[h]
\setlength{\tabcolsep}{0.11cm}
\begin{tabular}{lccccccc}
\toprule
              & \multicolumn{7}{c}{\textbf{Scale-Free Network}} \\ \cmidrule{2-8} 
 &
  \multicolumn{1}{c}{\textbf{DIL-W\boldmath{$^0$}}} &
  \multicolumn{1}{c}{\textbf{DIL-W\boldmath{$^{0.5}$}}} &
  \multicolumn{1}{c}{\textbf{DIL-W\boldmath{$^1$}}} &
  \multicolumn{1}{c}{\textbf{Strength}} &
  \multicolumn{1}{c}{\textbf{Closeness}} &
  \multicolumn{1}{c}{\textbf{Betweenness}} &
  \multicolumn{1}{c}{\textbf{Laplacian}} \\ \midrule
\textbf{2}  & 0.9841 &    0.9667  &  0.9842  &  0.9838  &  0.9284  &  0.9847   & 0.9800\\
\textbf{5}  &    0.8411 &   0.6032  &  0.8410  &  0.8373   & 0.4783  &  0.8046  &  0.7692\\
\textbf{10}  &    0.1143 &   0.1170  &  0.1311  &  0.1118  &  0.1104  &  0.1193  &  0.1107\\
\textbf{30}  &    0.1032   & 0.1043  &  0.1059  &  0.1037  &  0.1046  &  0.1041  &  0.1061\\
\textbf{50}  &    0.1041  &  0.1045 &   0.1046  &  0.1050  &  0.1053  &  0.1048  &  0.1043\\ \bottomrule
\end{tabular}
\caption{$\sigma$ according to different values of $d$ in a Scale-free~Network.}\label{tab_free2}
\end{table}
\newpage

\addcontentsline{toc}{chapter}{Blibliography}


\begin{thebibliography}{200}
\bibitem{nature}
Epidemiology is a science of high importance.
\newblock {\em Nat. Commun.} {\bf 2018}, {\em 9},~1703.
\newblock
  doi:{ \href{https://doi.org/10.1038/s41467-018-04243-3}{\detokenize{10.1038/s41467-018-04243-3}}}.

\bibitem{lloyd2007network}
Lloyd, A.L.; Valeika, S.
\newblock Network models in epidemiology: An overview.
\newblock In {\em Complex Population Dynamics: Nonlinear Modeling in Ecology,
  Epidemiology and Genetics}; World Scientific; London; UK;  {2007}; pp. 189--214. 
\bibitem{haslbeck2018well}
Haslbeck, J.M.; Waldorp, L.J.
\newblock How well do network models predict observations? On the importance of
  predictability in network models.
\newblock {\em Behav. Res. Methods} {\bf 2018}, {\em 50},~853--861.

\bibitem{an2014synchronization}
An, X.L.; Zhang, L.; Li, Y.Z.; Zhang, J.G.
\newblock Synchronization analysis of complex networks with multi-weights and
  its application in public traffic network.
\newblock {\em Phys. A Stat. Mech. Its Appl.} {\bf
  2014}, {\em 412},~149--156.

\bibitem{montenegro2019linear}
Montenegro, E.; Cabrera, E.; Gonz{\'a}lez, J.; Manr\'iquez, R.
\newblock Linear representation of a graph.
\newblock {\em Bol. Soc. Parana. Matem{\'A}Tica} {\bf 2019},
  {\em 37},~97--102.

\bibitem{guangzeng2009novel}
Wang, G.; Cao, Y.; Bao, Z.Y.; Han, Z.X.
\newblock A novel local-world evolving network model for power grid.
\newblock {\em Acta Phys. Sin.} {\bf 2009}, {\em 6},~58.

\bibitem{mersch2013tracking}
Mersch, D.P.; Crespi, A.; Keller, L.
\newblock Tracking individuals shows spatial fidelity is a key regulator of ant
  social organization.
\newblock {\em Science} {\bf 2013}, {\em 340},~1090--1093.

\bibitem{firth2015experimental}
Firth, J.A.; Sheldon, B.C.
\newblock Experimental manipulation of avian social structure reveals
  segregation is carried over across contexts.
\newblock {\em Proc. R. Soc. B Biol. Sci.} {\bf
  2015}, {\em 282},~20142350.

\bibitem{wu2008community}
Wu, X.; Liu, Z.
\newblock How community structure influences epidemic spread in social
  networks.
\newblock {\em Phys. A Stat. Mech. Its Appl.} {\bf
  2008}, {\em 387},~623--630.

\bibitem{magelinski2021measuring}
Magelinski, T.; Bartulovic, M.; Carley, K.M.
\newblock Measuring Node Contribution to Community Structure With Modularity
  Vitality.
\newblock {\em IEEE Trans. Netw. Sci. Eng.} {\bf
  2021}, {\em 8},~707--723.

\bibitem{ghalmane2019centrality}
Ghalmane, Z.; Cherifi, C.; Cherifi, H.; El~Hassouni, M.
\newblock Centrality in complex networks with overlapping community structure.
\newblock {\em Sci. Rep.} {\bf 2019}, {\em 9},~1--29.

\bibitem{guerrero-manriquez_proyeccion_2020}
Guerrero-Nancuante, C.; Manr\'iquez~P, R.
\newblock An epidemiological forecast of COVID-19 in Chile based on the
  generalized SEIR model and the concept of recovered.
\newblock {\em Medwave} {\bf 2020}, {\em 20}.

\bibitem{ronald1}
Manr\'iquez, R.; Guerrero-Nancuante, C.; Mart\'inez, F.; Taramasco, C.
\newblock Spread of Epidemic Disease on Edge-Weighted Graphs from a Database: A
  Case Study of COVID-19.
\newblock {\em Int. J. Environ. Res. Public Health} {\bf 2021}, {\em 18}.
\newblock
  doi:{ \href{https://doi.org/10.3390/ijerph18094432}{\detokenize{10.3390/ijerph18094432}}}.

\bibitem{ronaldranking}
Manr\'iquez, R.; Guerrero-Nancuante, C.; Mart\'inez, F.; Taramasco, C.
\newblock A generalization of the importance of vertices for an undirected
  weighted graph.
\newblock {\em Symmetry} {\bf 2021}, {\em 13}, 902.
\newblock
  doi:{ \href{https://doi.org/10.3390/sym13050902}{\detokenize{10.3390/sym13050902}}}.
  
\bibitem{tang2020research}
Tang, P.; Song, C.; Ding, W.; Ma, J.; Dong, J.; Huang, L.
\newblock Research on the node importance of a weighted network based on the
  k-order propagation number algorithm.
\newblock {\em Entropy} {\bf 2020}, {\em 22},~364.

\bibitem{wang2017drimux}
Wang, B.; Chen, G.; Fu, L.; Song, L.; Wang, X.
\newblock Drimux: Dynamic rumor influence minimization with user experience in
  social networks.
\newblock {\em IEEE Trans. Knowl. Data Eng.} {\bf 2017},
  {\em 29},~2168--2181.

\bibitem{zhang2014dava}
Zhang, Y.; Prakash, B.A.
\newblock Dava: Distributing vaccines over networks under prior information.
\newblock In \emph{Proceedings of the 2014 SIAM International Conference on Data
  Mining}; SIAM, Philadelphia, PA, USA, 2014; pp. 46--54. 
\bibitem{hebert2013global}
H{\'e}bert-Dufresne, L.; Allard, A.; Young, J.G.; Dub{\'e}, L.J.
\newblock Global efficiency of local immunization on complex networks.
\newblock {\em Sci. Rep.} {\bf 2013}, {\em 3},~1--8.

\bibitem{wijayanto2019effective}
Wijayanto, A.W.; Murata, T.
\newblock Effective and scalable methods for graph protection strategies
  against epidemics on dynamic networks.
\newblock {\em Appl. Netw. Sci.} {\bf 2019}, {\em 4},~1--31.

\bibitem{tong2010}
Tong, H.; Prakash, B.A.; Tsourakakis, C.; Eliassi-Rad, T.; Faloutsos, C.; Chau,
  D.
\newblock On the Vulnerability of Large Graphs.
\newblock  In {Proceedings of the }2010 IEEE International Conference on Data Mining, Sydney, NSW, Australia, 13--17 December 2010; pp.
  1091--1096.
\newblock
  doi:{ \href{https://doi.org/10.1109/ICDM.2010.54}{\detokenize{10.1109/ICDM.2010.54}}}.

\bibitem{nian2018immunization}
Nian, F.; Hu, C.; Yao, S.; Wang, L.; Wang, X.
\newblock An immunization based on node activity.
\newblock {\em Chaos Solitons Fractals} {\bf 2018}, {\em 107},~228--233.

\bibitem{gomez2006immunization}
G{\'o}mez-Gardenes, J.; Echenique, P.; Moreno, Y.
\newblock Immunization of real complex communication networks.
\newblock {\em  Eur. Phys. J. Condens. Matter Complex Syst.} {\bf 2006}, {\em 49},~259--264.

\bibitem{cohen2003efficient}
Cohen, R.; Havlin, S.and Ben-Avraham, D.
\newblock Efficient immunization strategies for computer networks and
  populations.
\newblock {\em Phys. Rev. Lett.} {\bf 2003}, {\em 91},~247901.

\bibitem{xia2018improved}
Xia, L.; Song, Y.R.; Li, C.C.; Jiang, G.P.
\newblock Improved targeted immunization strategies based on two rounds of
  selection.
\newblock {\em Phys. A  Stat. Mech. Its Appl.} {\bf
  2018}, {\em 496},~540--547.

\bibitem{wang2009imperfect}
Wang, Y.; Xiao, G.; Hu, J.; Cheng, T.; Wang, L.
\newblock Imperfect targeted immunization in scale-free networks.
\newblock {\em Phys. A Stat. Mech. Its Appl.} {\bf
  2009}, {\em 388},~2535--2546.

\bibitem{ghalmane2019immunization}
Ghalmane, Z.; El~Hassouni, M.; Cherifi, H.
\newblock Immunization of networks with non-overlapping community structure.
\newblock {\em Soc. Netw. Anal. Min.} {\bf 2019}, {\em 9},~1--22.

\bibitem{gupta}
Gupta, N.; Singh, A.; Cherifi, H.
\newblock Community-based immunization strategies for epidemic control.
\newblock In {Proceedings of the } 2015 7th international conference on communication systems and
  networks (COMSNETS),  Bangalore, India,  6--10 January 2015; pp. 1--6.

\bibitem{ronald2}
Manr\'iquez, R.; Guerrero-Nancuante, C.; Taramasco, C.
\newblock Protection Strategy for Edge-Weighted Graphs in Disease Spread.
\newblock {\em Appl. Sci.} {\bf 2021}, {\em 11}, 5115.
\newblock
  doi:{ \href{https://doi.org/10.3390/app11115115}{\detokenize{10.3390/app11115115}}}.


\bibitem{ronald3}
Manr\'iquez, R.; Guerrero-Nancuante, C.; Taramasco, C.
\newblock Protection Strategy against an Epidemic Disease on Edge-Weighted Graphs Applied to a Covid-19 Case.
\newblock {\em Biology.} {\bf 2021}, {\em 10}, 667.
\newblock
  doi:{ \href{https://doi.org/10.3390/biology10070667}{\detokenize{10.3390/biology10070667}}}.


\bibitem{zhao2014immunization}
Zhao, D.; Wang, L.; Li, S.; Wang, Z.; Wang, L.; Gao, B.
\newblock Immunization of epidemics in multiplex networks.
\newblock {\em PLoS One} {\bf 2014}, {\em 9},~e112018.

\bibitem{chartrand_graphs_1996}
Chartrand, G.; Lesniak, L.
\newblock {\em Graphs and Digraphs}, 1st ed.; CRC Press: London, UK,  1996.

\bibitem{west_introduction_2001}
West, D.B.
\newblock {\em Introduction to Graph Theory}, 2nd ed.; Prentice Hall: Englewood Cliffs, NJ, USA,  2001.

\bibitem{opsahl2010node}
Opsahl, T.; Agneessens, F.; Skvoretz, J.
\newblock Node centrality in weighted networks: Generalizing degree and
  shortest paths.
\newblock {\em Soc. Netw.} {\bf 2010}, {\em 32},~245--251.

\bibitem{liu2016evaluating}
Liu, J.; Xiong, Q.; Shi, W.; Shi, X.; Wang, K.
\newblock Evaluating the importance of nodes in complex networks.
\newblock {\em Phys. A  Stat. Mech. Its Appl.} {\bf
  2016}, {\em 452},~209--219.

\bibitem{1318579}
{Vinterbo}, S.A.
\newblock Privacy: A machine learning view.
\newblock {\em IEEE Trans. Knowl. Data Eng.} {\bf 2004},
  {\em 16},~939--948.
\newblock
  doi:{ \href{https://doi.org/10.1109/TKDE.2004.31}{\detokenize{10.1109/TKDE.2004.31}}}.

\bibitem{PhysRevLett}
Latora, V.; Marchiori, M.
\newblock Efficient Behavior of Small-World Networks.
\newblock {\em Phys. Rev. Lett.} {\bf 2001}, {\em 87},~198701. \linebreak 
\newblock
  doi:{\href{https://doi.org/10.1103/PhysRevLett.87.198701}{\detokenize{10.1103/PhysRevLett.87.198701}}}.

\bibitem{ren2013node}
Ren, Z.; Shao, F.; Liu, J.; Guo, Q.; Wang, B.H.
\newblock Node importance measurement based on the degree and clustering
  coefficient information.
\newblock {\em Acta Phys. Sin.} {\bf 2013}, {\em 62},~128901.

\bibitem{musial}
Musia{\l}, K.; Juszczyszyn, K.
\newblock Properties of bridge nodes in social networks.
\newblock In \emph{International Conference on Computational Collective Intelligence};
  Springer: Berlin/Heidelberg, Germany, 2009; pp. 357--364.

\bibitem{shams2014using}
Shams, B.
\newblock Using network properties to evaluate targeted immunization
  algorithms.
\newblock {\em Netw. Biol.} {\bf 2014}, {\em 4},~74.

\bibitem{chen2008finding}
Chen, Y.; Paul, G.; Havlin, S.; Liljeros, F.; Stanley, H.
\newblock Finding a better immunization strategy.
\newblock {\em Phys. Rev. Lett.} {\bf 2008}, {\em 101},~058701.

\bibitem{nian2010efficient}
Nian, F.; Wang, X.
\newblock Efficient immunization strategies on complex networks.
\newblock {\em J. Theor. Biol.} {\bf 2010}, {\em 264},~77--83.

\bibitem{wouters2021challenges}
Wouters, O.J.; Shadlen, K.; Salcher-Konrad, M.; Pollard, A.J.; Larson, H.;
  Teerawattananon, Y.; Jit, M.
\newblock Challenges in ensuring global access to COVID-19 vaccines:
  production, affordability, allocation, and deployment.
\newblock {\em  Lancet} {\bf 2021}.
\newblock
  doi:{ \href{https://doi.org/10.1016/S0140-6736(21)00306-8}{\detokenize{S0140-6736(21)00306-8}}}.

\bibitem{liu2003propagation}
Liu, Z.; Lai, Y.C.; Ye, N.
\newblock Propagation and immunization of infection on general networks with
  both homogeneous and heterogeneous components.
\newblock {\em Phys. Rev. E} {\bf 2003}, {\em 67},~031911.

\bibitem{pastor2002immunization}
Pastor-Satorras, R.; Vespignani, A.
\newblock Immunization of complex networks.
\newblock {\em Phys. Rev. E} {\bf 2002}, {\em 65},~036104.

\bibitem{aschwanden2021five}
Aschwanden, C.
\newblock Five reasons why COVID herd immunity is probably impossible.
\newblock {\em Nature} {\bf 2021}, {\em 591},~520--522.

\bibitem{wei}
Wei, D; Zhang, Y. L.; Deng, Y.  
\newblock Degree centrality based on the weighted network.
\newblock In {Proceedings of the }{24th Chinese Control and Decision Conference (CCDC)},  Taiyuan, China,  23--25 May {2012}; pp. 3976--3979.



\bibitem{1}
Hays, J.N. 
\newblock {\em Epidemics and Pandemics: their impacts on human history}, Ed. ABC-Clio; 2005. 513 pp.

\bibitem{2}
Anderson, R.M.; May, R.
\newblock {\em Infectious diseases of humans: Dynamics and control}, Ed OUP Oxford; 1992. 757 pp.

\bibitem{3}
Brauer, F.
\newblock {\em Compartmental models in Epidemiology. In: Notes in Mathematical Epidemiology}, Ed Springer Science \& Business Media; 2008. 414 pp.

\bibitem{4}
Pitlik, S.D
\newblock COVID-19 Compared to Other Pandemic Diseases.
\newblock {\em Rambam Maimonides Med J} {\bf 2020}, {\em 11},~e0027
\newblock
  doi:{ \href{https://doi.org/10.5041/RMMJ.10418}{\detokenize{10.5041/RMMJ.10418}}}.
  
\bibitem{bhapkar_virus_2020}
Bhapkar, H.R.; Mahalle, P.N.; Dhotre, P.S.
\newblock Virus {Graph} and {COVID}-19 {Pandemic}: {A} {Graph} {Theory}
  {Approach}. In {\em Big {Data} {Analytics} and {Artificial} {Intelligence}
  {Against} {COVID}-19: {Innovation} {Vision} and {Approach}}; Hassanien, A.E.;
  Dey, N.; Elghamrawy, S., Eds.; Studies in {Big} {Data}, Springer
  International Publishing,  2020; pp. 15--34.
\newblock
  doi:{ \href{https://doi.org/10.1007/978-3-030-55258-9_2}{\detokenize{10.1007/978-3-030-55258-9_2}}}.

\bibitem{CROCCOLO2020110077}
Croccolo, F.; Roman, E.
\newblock Spreading of infections on random graphs: A percolation-type model
  for COVID-19.
\newblock {\em Chaos, Solitons \& Fractals} {\bf 2020}, {\em 139},~110077.  

\bibitem{article}
Singhal, T.
\newblock A Review of Coronavirus Disease-2019 (COVID-19).
\newblock {\em The Indian Journal of Pediatrics} {\bf 2020}, {\em 87}.
\newblock
  doi:{ \href{https://doi.org/10.1007/s12098-020-03263-6}{\detokenize{10.1007/s12098-020-03263-6}}}.


\bibitem{8} 
Nicola, M.; Alsafi, Z.; Sohrabi C.; Kerwan, A.; Al-Jabir, A.; Iosifidis, C.; Agha, M.; Agha, R.
\newblock The socio-economic implications of the coronavirus pandemic (COVID-19): A review.
\newblock {\em International Journal of Surgery} {\bf 2020},
\newblock
doi:{ \href{https://doi.org/10.1016/j.ijsu.2020.04.018}{\detokenize{10.1016/j.ijsu.2020.04.018}}}.
 
\bibitem{Kermack_1927}
Kermack, W.; McKendrick, A.G.
\newblock A contribution to the mathematical theory of epidemics.
\newblock {\em Proc. of the Royal Society of London A,} {\bf 1927}, {\em
  115},~700--721,
\newblock
  doi:{ \href{https://doi.org/10.1098/rspa.1927.0118}{\detokenize{10.1098/rspa.1927.0118}}}.


\bibitem{11}
Montagnon, P. A
\newblock A Stochastic SIR model on a graph with epidemiological and population dynamics occurring over the same time scale.
\newblock {\em J. Math. Biol.} {\bf 2019}, {\em 79},~31--62.
\newblock
  doi:\href{https://doi.org/10.1007/s00285-019-01349-0}{\detokenize{https://doi.org/10.1007/s00285-019-01349-0}}.

\bibitem{12}
Enright, J.;  Kao, T. R.
\newblock Epidemics on dynamic networks,
\newblock {\em Epidemics} {\bf 2018}, {\em 24},~88--97.
\newblock
  doi:{ \href{https://doi.org/10.1016/j.epidem.2018.04.003}{\detokenize{https://doi.org/10.1016/j.epidem.2018.04.003}}}.


\bibitem{cardinal-fernandez_medicina_2014}
Cardinal-Fern\'andez, P.; Nin, N.; Ru\'iz-Cabello, J.; Lorente, J.
\newblock Medicina de sistemas: una nueva visi\'on de la pr\'actica cl\'inica.
\newblock {\em Archivos de Bronconeumolog\'ia} {\bf 2014}, {\em 50},~444--451.
\newblock
  doi:{ \href{https://doi.org/10.1016/j.arbres.2013.10.010}{\detokenize{10.1016/j.arbres.2013.10.010}}}.

\bibitem{RIZZO2016212}
Rizzo, A.; Pedalino, B.; Porfiri, M.
\newblock A network model for Ebola spreading.
\newblock {\em Journal of Theoretical Biology} {\bf 2016}, {\em 394},~212--222.
\newblock
  doi:{ \href{https://doi.org/10.1016/j.jtbi.2016.01.015}{\detokenize{10.1016/j.jtbi.2016.01.015}}}.

\bibitem{shafer}
Shafer, L.; Adegboye, O.; Elfaki, F.
\newblock Network Analysis of MERS Coronavirus within Households, Communities,
  and Hospitals to Identify Most Centralized and Super-Spreading in the Arabian
  Peninsula, 2012 to 2016.
\newblock {\em Canadian Journal of Infectious Diseases and Medical
  Microbiology} {\bf 2018}, {\em 2018},~6725284.
\newblock
  doi:{ \href{https://doi.org/10.1155/2018/6725284}{\detokenize{10.1155/2018/6725284}}}.
  


\bibitem{17}
Bi, K.; Chen, Y.; Zhao, S., Ben-Arieh, D.; Wu, C. H. J.
\newblock Modeling learning and forgetting processes with the corresponding impacts on human behaviors in infectious disease epidemics,
\newblock {\em Computers \& Industrial Engineering} {\bf 2019}, {\em 129},~563--577.
\newblock
  doi:{ \href{https://doi.org/10.1016/j.cie.2018.04.035}{\detokenize{https://doi.org/10.1016/j.cie.2018.04.035}}}.
  

\bibitem{margevicius_advancing_2014}
Margevicius, K.J.; Generous, N.; Taylor-McCabe, K.; Brown, M.; Daniel, W.B.;
  et~al., C.L.
\newblock Advancing a {Framework} to {Enable} {Characterization} and
  {Evaluation} of {Data} {Streams} {Useful} for {Biosurveillance}.
\newblock {\em PLOS ONE} {\bf 2014}, {\em 9},~e83730.
\newblock
  doi:{ \href{https://doi.org/10.1371/journal.pone.0083730}{\detokenize{10.1371/journal.pone.0083730}}}.


\bibitem{p_erdos_random_1959}
P., E.; A., R.
\newblock On {Random} {Graphs}.
\newblock {\em Publicationes Mathematicae (Debrecen)} {\bf 1959}, {\em
  6},~290--297.

\bibitem {barabasi_emergence_1999}
Barab\'asi, A.; Albert, R.
\newblock Emergence of {Scaling} in {Random} {Networks}.
\newblock {\em Science} {\bf 1999}, {\em 286},~509--512.
\newblock
  doi:{ \href{https://doi.org/10.1126/science.286.5439.509}{\detokenize{10.1126/science.286.5439.509}}}.

\bibitem{anderson_discussion_1991}
Anderson, R.
\newblock Discussion: {The} {Kermack}-{McKendrick} epidemic threshold theorem.
\newblock {\em Bulletin of Mathematical Biology} {\bf 1991}, {\em 53},~1--32.
\newblock
  doi:{\href{https://doi.org/10.1007/BF02464422}{\detokenize{10.1007/BF02464422}}}.

\bibitem{ross_sir_2013}
Ross, R.; Hamer, W.
\newblock The {SIR} model and the {Foundations} of {Public} {Health}.
\newblock  2013.

\bibitem{noauthor_theory_2005}
Kari, J.
\newblock Theory of cellular automata: {A} survey.
\newblock {\em Theoretical Computer Science} {\bf 2005}, {\em 334},~3--33.
\newblock
  doi:{ \href{https://doi.org/10.1016/j.tcs.2004.11.021}{\detokenize{10.1016/j.tcs.2004.11.021}}}.
  
\bibitem{Zhang_2015}
Zhang, Z.; Wang, H.; Wang, C.; Fang, H.
\newblock Modeling Epidemics Spreading on Social Contact Networks.
\newblock {\em IEEE transactions on emerging topics in computing} {\bf 2015}, {\em 3},~410-419.
\newblock
  doi:{ \href{https://doi.org/10.1109/TETC.2015.2398353}{\detokenize{10.1109/TETC.2015.2398353}}}.
  
\bibitem{pastor-satorras_epidemic_2015}
Pastor-Satorras, R., et al.
\newblock Epidemic {processes} in complex {networks}.
\newblock {\em Reviews of modern physics} {\bf 2015}, {\em 87},~925.

\bibitem{pastor-satorras_epidemic_2001}
Pastor-Satorras, R.; Vespignani, A.
\newblock Epidemic {Spreading} in {Scale}-{Free} {Networks}.
\newblock {\em Physical Review Letters} {\bf 2001}, {\em 86},~3200--3203.
\newblock
  doi:{ \href{https://doi.org/10.1103/PhysRevLett.86.3200}{\detokenize{10.1103/PhysRevLett.86.3200}}}.
  
\bibitem{2_2}
Keeling, M.J.; Eames, K.T.
\newblock Networks and epidemic models.
\newblock {\em Journal of the Royal Society Interface} {\bf 2005}, {\em 2},~295--307.  
 doi:{\href{https://doi.org/10.1098/rsif.2005.0051}{\detokenize{10.1098/rsif.2005.0051}}}. 
 
 \bibitem{4_4}
Tang, B.; Wang, X.; Li, Q.; Bragazzi, N. L.; Tang, S.; Xiao, Y.; Wu, J.
\newblock Estimation of the transmission risk of the 2019-nCoV and its implication for public health interventions.
\newblock {\em Journal of clinical medicine} {\bf 2005}, {\em 9},~462.  
 doi:{\href{https://doi.org/10.3390/jcm9020462}{\detokenize{10.3390/jcm9020462}}}. 


\bibitem{5_5}
Soucie, J.M.
\newblock Public health surveillance and data collection: general principles and impact on hemophilia care.
\newblock {\em Hematology} {\bf 2012}, {\em 17},~s144--s146.  
 doi:{\href{https://doi.org/10.1179/102453312X13336169156537}{\detokenize{10.1179/102453312X13336169156537}}}. 

\bibitem{6_6}
Dong, E.; Du, H., Gardner, L.
\newblock An interactive web-based dashboard to track COVID-19 in real time.
\newblock {\em The Lancet infectious diseases} {\bf 2020}, {\em 20},~533--534.  
 doi:{\href{https://doi.org/10.1016/S1473-3099(20)30120-1}{\detokenize{10.1016/S1473-3099(20)30120-1}}}.   

\bibitem{7_7}
Stern, A. M.; Markel, H.
\newblock International efforts to control infectious diseases, 1851 to the present.
\newblock {\em JAMA} {\bf 2004}, {\em 292},~1474--1479.  
 doi:{\href{https://doi.org/10.1001/jama.292.12.1474}{\detokenize{10.1001/jama.292.12.1474}}}.   


\bibitem{newman2003structure}
M.~Newman.
\newblock The structure and function of complex networks.
\newblock {\em SIAM review}, 45(2):167--256, 2003.

\bibitem{an2014synchronization}
X-L. An, L.~Zhang, Y-Z. Li, and J.-G. Zhang.
\newblock Synchronization analysis of complex networks with multi-weights and
  its application in public traffic network.
\newblock {\em Physica A: Statistical Mechanics and its Applications},
  412:149--156, 2014.

\bibitem{saxena}
C. Saxena, M. N. Doja, and T. Ahmad. 
\newblock Group based centrality for immunization of complex networks. 
\newblock {\em Physica A: Statistical Mechanics and its Applications}, 508, 35-47, 2018.

\bibitem{wang2010new}
J-W. Wang, L.L Rong, and T-Z. Guo.
\newblock A new measure method of network node importance based on local
  characteristics.
\newblock {\em Journal of Dalian University of Technology}, 50(5):822--826,
  2010.

\bibitem{lu2016vital}
L.~L{\"u}, D.~Chen, X-L. Ren, Q-M. Zhang, Y-C Zhang, and T.~Zhou.
\newblock Vital nodes identification in complex networks.
\newblock {\em Physics Reports}, 650:1--63, 2016.

\bibitem{guess2019less}
Guess, A.; Nagler, J.; Tucker, J.
\newblock Less than you think: Prevalence and predictors of fake news
  dissemination on Facebook.
\newblock {\em Sci. Adv.} {\bf 2019}, {\em 5},~eaau4586.

\bibitem{yang2012towards}
Yang, X.; Yang, L.
\newblock Towards the epidemiological modeling of computer viruses.
\newblock {\em Discret. Dyn. Nat. Soc.} {\bf 2012}, {\em 2012}.

\bibitem{almasi2019measuring}
Almasi, S.; Hu, T.
\newblock Measuring the importance of vertices in the weighted human disease
  network.
\newblock {\em PLoS ONE} {\bf 2019}, {\em 14},~e0205936.

\bibitem{Crossley11583}
Crossley, N.; Mechelli, A.; V{\'e}rtes, P.; Winton-Brown, T.; Patel, A.;
  Ginestet, C.; McGuire, P.; Bullmore, E.
\newblock Cognitive relevance of the community structure of the human brain
  functional coactivation network.
\newblock {\em Proc. Natl. Acad. Sci. USA} {\bf 2013},
  {\em 110},~11583--11588.

\bibitem{xu2019node}
H.~Xu, J.~Zhang, J.~Yang, and L.~Lun.
\newblock Node importance ranking of complex network based on degree and
  network density.
\newblock {\em International Journal of Performability Engineering}, 15(3),
  2019.

\bibitem{sciarra}
C. Sciarra, G. Chiarotti, F. Laio, and L. Ridolfi. 
\newblock A change of perspective in network centrality. 
\newblock {\em Scientific reports}, 8(1), 1-9, 2018.

\bibitem{magelinski}
T. Magelinski, M. Bartulovic and K.M Carley. 
\newblock Measuring Node Contribution to Community Structure with Modularity Vitality. 
\newblock {\em IEEE Transactions on Network Science and Engineering}, 2021.


\bibitem{zachary}
W.~W. Zachary.
\newblock An information flow model for conflict and fission in small groups.
\newblock {\em Journal of Anthropological Research}, 33(4):452--473, 1977.

\bibitem{wei2015weighted283}
B.~Wei, J.~Liu, D.~Wei, C.~Gao, and Y.~Deng.
\newblock Weighted k-shell decomposition for complex networks based on
  potential edge weights.
\newblock {\em Physica A: Statistical Mechanics and its Applications},
  420:277--283, 2015.

\bibitem{ghalmane}
Z. Ghalmane,  M. El Hassouni, and H. Cherifi. 
\newblock Immunization of networks with non-overlapping community structure. 
\newblock {\em Social Network Analysis and Mining}, 9(1), 1-22, 2019.

\bibitem{ghalmane2}
Z. Ghalmane,  C. Cherifi, H. Cherifi, and M. El Hassouni.  
\newblock Centrality in complex networks with overlapping community structure. 
\newblock {\em Scientific reports}, 9(1), 1-29, 2019.

\bibitem{yang2019novel}
Y.~Yang, L.~Yu, X.~Wang, Z.~Zhou, Y.~Chen, and T.~Kou.
\newblock A novel method to evaluate node importance in complex networks.
\newblock {\em Physica A: Statistical Mechanics and its Applications},
  526:121118, 2019.

\bibitem{jia2011improved}
W.~Jia-sheng, W.~Xiao-ping, Y.~Bo, and G.~Jiang-wei.
\newblock Improved method of node importance evaluation based on node
  contraction in complex networks.
\newblock {\em Procedia Engineering}, 15:1600--1604, 2011.

\bibitem{mo2019identifying}
H.~Mo and Y.~Deng.
\newblock Identifying node importance based on evidence theory in complex
  networks.
\newblock {\em Physica A: Statistical Mechanics and its Applications},
  529:121538, 2019.

\bibitem{barrat2004architecture}
A.~Barrat, M.~Barthelemy, R.~Pastor-Satorras, and A.~Vespignani.
\newblock The architecture of complex weighted networks.
\newblock {\em Proceedings of the national academy of sciences},
  101(11):3747--3752, 2004.

\bibitem{lu2016h}
L.~L{\"u}, T.~Zhou, Q-M. Zhang, and H.~E. Stanley.
\newblock The h-index of a network node and its relation to degree and
  coreness.
\newblock {\em Nature communications}, 7(1):1--7, 2016.

\bibitem{garas2012k281}
A.~Garas, Fr. Schweitzer, and S.~Havlin.
\newblock A k-shell decomposition method for weighted networks.
\newblock {\em New Journal of Physics}, 14(8):083030, 2012.

\bibitem{brandes2001faster}
U.~Brandes.
\newblock A faster algorithm for betweenness centrality.
\newblock {\em Journal of mathematical sociology}, 25(2):163--177, 2001.

\bibitem{newman2001scientific}
M.~Newman.
\newblock Scientific collaboration networks. ii. shortest paths, weighted
  networks, and centrality.
\newblock {\em Physical review E}, 64(1):016132, 2001.

\bibitem{PhysRevE.75.021102}
Q.~Ou, Y.D. Jin, T.~Zhou, B-H. Wang, and B-Q. Yin.
\newblock Power-law strength-degree correlation from resource-allocation
  dynamics on weighted networks.
\newblock {\em Phys. Rev. E}, 75:021102, Feb 2007.

\bibitem{ahmad}
A. Ahmad, T. Ahmad, and A. Bhatt.
\newblock HWSMCB: A community-based hybrid approach for identifying influential nodes in the social network. 
\newblock {\em Physica A: Statistical Mechanics and its Applications},
 545, 123590, 2020.

\bibitem{qi2012laplacian}
X.~Qi, E.~Fuller, Q.~Wu, Y.~Wu, and C-Q. Zhang.
\newblock Laplacian centrality: A new centrality measure for weighted networks.
\newblock {\em Information Sciences}, 194:240--253, 2012.

\bibitem{SKIBSKI2019151}
O.~Skibski, T.~Rahwan, T.~Michalak, and M.~Yokoo.
\newblock Attachment centrality: Measure for connectivity in networks.
\newblock {\em Artificial Intelligence}, 274:151 -- 179, 2019.

\bibitem{ijcai2017-59}
J.~Sosnowska and O.~Skibski.
\newblock Attachment centrality for weighted graphs.
\newblock In {\em Proceedings of the Twenty-Sixth International Joint
  Conference on Artificial Intelligence, {IJCAI-17}}, pages 416--422, 2017.

\bibitem{yang2017mining}
Y.~Yang, G.~Xie, and J.~Xie.
\newblock Mining important nodes in directed weighted complex networks.
\newblock {\em Discrete Dynamics in Nature and Society}, 2017, 2017.

\bibitem{nr-aaai15}
R.A. Rossi and N.~K. Ahmed.
\newblock The network data repository with interactive graph analytics and
  visualization.
\newblock In {\em Proceedings of the Twenty-Ninth AAAI Conference on Artificial
  Intelligence}, 2015.

\bibitem{Musial}
Musial. K., Juszczyszyn, K.  
\newblock Properties of bridge nodes in social networks. 
\newblock {\em In International Conference on Computational Collective Intelligence },
357--364. Springer, Berlin, Heidelberg, 2019.

\bibitem{colizza2007}
V.~Colizza, R.~Pastor-Satorras, and A.~Vespignani.
\newblock Reaction--diffusion processes and metapopulation models in
  heterogeneous networks.
\newblock {\em Nature Physics}, 3(4):276--282, 2007.

\bibitem{vitonicoruso}
V.~Latora, V.~Nicosia, and G.~Russo.
\newblock {\em Complex Networks: Principles, Methods and Applications}.
\newblock Cambridge University Press, 1 edition, 2017.

\bibitem{RUBINOV20101059}
M.~Rubinov and O.~Sporns.
\newblock Complex network measures of brain connectivity: Uses and
  interpretations.
\newblock {\em NeuroImage}, 52(3):1059 -- 1069, 2010.

\bibitem{lai2004}
Y.C. Lai, A.~Motter, and T.~Nishikawa.
\newblock {\em Attacks and cascades in complex networks}, pages 299--310.
\newblock 2004.

\bibitem{demongeot2013archimedean}
Demongeot, J.; Ghassani, M.; Rachdi, M.; Ouassou, I.; Taramasco, C.
\newblock Archimedean copula and contagion modeling in epidemiology.
\newblock {\em Netw. Heterog. Media} {\bf 2013}, {\em 8},~149.

\bibitem{chen2015node}
Chen, C.; Tong, H.; Prakash, B.A.; Tsourakakis, C.E.; Eliassi-Rad, T.;
  Faloutsos, C.; Chau, D.
\newblock Node immunization on large graphs: Theory and algorithms.
\newblock {\em IEEE Trans. Knowl. Data Eng.} {\bf 2015},
  {\em 28},~113--126.

\bibitem{zhang2015data}
Zhang, Y.; Prakash, B.A.
\newblock Data-aware vaccine allocation over large networks.
\newblock {\em ACM Trans. Knowl. Discov. Data (TKDD)} {\bf
  2015}, {\em 10},~1--32.

\bibitem{song2015node}
Song, C.; Hsu, W.; Lee, M.
\newblock Node immunization over infectious period.
\newblock  In Proceedings of the 24th ACM International on Conference on  Information and Knowledge Management, Melbourne, Australia, 19--23 October 2015; pp. 831--840.

\bibitem{wijayanto2017flow}
Wijayanto, A.W.; Murata, T.
\newblock Flow-aware vertex protection strategy on large social networks.
\newblock  In {Proceedings of the }2017 IEEE/ACM International Conference on Advances in Social
  Networks Analysis and Mining (ASONAM), Sydney, NSW, Australia, 31 July--3 August 2017; pp. 58--63.

\bibitem{imrich2000product}
Imrich, W.; Klavzar, S.
\newblock {\em Product Graphs: Structure and Recognition}; Wiley: Hoboken, NJ, USA, 2000.

\bibitem{liu_2011}
Liu, J.; Zhang, T.
\newblock Epidemic spreading of an SEIRS model in scale-free networks.
\newblock {\em Commun. Nonlinear Sci. Numer. Simul.} {\bf 2011}, {\em 16},~3375--3384.

\bibitem{Mao_Xing_2009}
\newblock Modelling the spread of sexually transmitted diseases on scale-free networks.
Mao Xing, L.; Jiong R.,
\newblock Modelling the spread of sexually transmitted diseases on scale-free networks.
 \newblock {\em Chin. Phys. B} {\bf 2009}, {\em 1} ,~2115--2120.

\bibitem{song2020massive}
Song, W.; Zang, P.; Ding, Z.; Fang, X.; Zhu, L.; Zhu, Y.; Bao, C.; Chen, F.; Wu, M.; Peng, Z.,
\newblock  Massive migration promotes the early spread of COVID-19 in China: A study based on a scale-free network. \newblock {\em Infect. Dis. Poverty} {\bf 2020}, {\em 9} ,~1--8.

\bibitem{schneeberger2004scale}
Schneeberger, A.;  Mercer, C.; Gregson, S.; Ferguson, N.; Nyamukapa, C.; Anderson, R.; Johnson, A.; Garnett, G.
\newblock  Scale-free networks and sexually transmitted diseases: A description of observed patterns of sexual contacts in Britain and Zimbabwe. 
\newblock {\em Sex. Transm. Dis.} {\bf 2004}, {\em 31} ,~380--387.

\bibitem{moreno2003disease}
Moreno, Y.; Vazquez, A.
\newblock  Disease spreading in structured scale-free networks. 
\newblock {\em  Eur. Phys. J. B-Condens. Matter Complex Syst.} {\bf 2003}, {\em 31} ,~265--271.

\bibitem{yang2019dynamics}
Yang, P. Wang, Y.
\newblock  Dynamics for an SEIRS epidemic model with time delay on a scale-free network. 
\newblock {\em Phys. A Stat. Mech. Its Appl.} {\bf 2019}, {\em 527} ,~121290.

\bibitem{ke2006immunization}
Ke, H.; Yi, T.
\newblock  Immunization for scale-free networks by random walker.
\newblock {\em Chin. Phys.} {\bf 2006}, {\em 15} ,~2782.


\end{thebibliography}
\end{document}